\tikzset{snake it/.style={decorate, decoration=snake}}
\tikzstyle arrowstyle=[scale=1]
\tikzstyle directed=[postaction={decorate,decoration={markings,mark=at position .65 with {\arrow[arrowstyle]{stealth}}}}]
\tikzstyle reverse directed=[postaction={decorate,decoration={markings,mark=at position .65 with {\arrowreversed[arrowstyle]{stealth};}}}]
\newtheorem{proposition}{Proposition}[section]
\newtheorem{theorem}[proposition]{Theorem}
\newtheorem{definition}[proposition]{Definition}
\newcommand{\rhs}{r.h.s.\ }
\newcommand{\wrt}{w.r.t.\ }
\newcommand{\cf}{cf.\ }
\newcommand{\ud}{\mathrm{d}}
\newcommand{\del}{\partial}
\newcommand{\R}{\mathbb{R}}
\newcommand{\C}{\mathbb{C}}
\newcommand{\Z}{\mathbb{Z}}
\newcommand{\gA}{\mathfrak{A}}
\newcommand{\order}{O}
\newcommand{\sN}{\mathcal{N}}
\newcommand{\sM}{\mathcal{M}}
\newcommand{\nn}{\nonumber}
\newcommand{\beq}{\begin{equation}}
\newcommand{\eeq}{\end{equation}}
\newcommand{\defeq}{\mathrel{:=}}
\renewcommand{\Im}{{\mathrm{Im}}}
\renewcommand{\Re}{{\mathrm{Re}}}
\DeclareMathOperator{\supp}{supp}
\DeclareMathOperator{\WF}{WF}
\DeclareMathOperator{\csch}{csch}
\DeclareMathOperator{\sign}{sgn}
\newcommand{\cB}{{\mathcal{B}}}
\newcommand{\cO}{{\mathcal{O}}}
\newcommand{\cV}{{\mathcal{V}}}
\newcommand{\cD}{{\mathcal{D}}}
\newcommand{\cA}{{\mathcal{A}}}
\newcommand{\cR}{\mathcal{R}}
\newcommand{\cC}{\mathcal{C}}
\newcommand{\cH}{\mathcal{H}}
\newcommand{\cU}{\mathcal{U}}
\newcommand{\cX}{\mathcal{X}}
\newcommand{\cN}{\mathcal{N}}
\newcommand{\sT}{{\mathscr{T}}}
\newcommand{\sR}{{\mathscr{R}}}
\newcommand{\ws}{{\Omega}}
\newcommand{\ev}{+}
\newcommand{\co}{c}
\newcommand{\ca}{-}
\newcommand{\EH}{{\mathcal{H}}}
\newcommand{\CH}{{\mathcal{CH}}}
\newcommand{\dSH}{\mathcal{H}_\co}
\newcommand{\rin}{{\mathrm{in}}}
\newcommand{\rup}{{\mathrm{up}}}
\newcommand{\rout}{{\mathrm{out}}}
\newcommand{\rI}{{\mathrm{I}}}
\newcommand{\rII}{{\mathrm{II}}}
\newcommand{\rIII}{{\mathrm{III}}}
\newcommand{\rIV}{{\mathrm{IV}}}
\newcommand{\rb}{{\mathrm{b}}}
\newcommand{\rU}{{\mathrm{U}}}
\newcommand{\rF}{{\mathrm{C}}}
\newcommand{\rHH}{{\mathrm{HH}}}
\begin{document}

\title{Quantum Instability of the Cauchy Horizon in Reissner-Nordstr{\"o}m-deSitter Spacetime}

\author[1]{Stefan Hollands\thanks{stefan.hollands@uni-leipzig.de}}
\author[2]{Robert M.\ Wald\thanks{rmwa@uchicago.edu}}
\author[1]{Jochen Zahn\thanks{jochen.zahn@itp.uni-leipzig.de}}
\affil[1]{Institut f\"ur Theoretische Physik, Universit\"at Leipzig \protect\\ Br\"uderstr.\ 16, 04103 Leipzig, Germany.}
\affil[2]{Enrico Fermi Institute and Department of Physics, The University of Chicago \protect\\ 5640 South Ellis Avenue, Chicago, Illinois 60637, USA.}

\date{\today}

\maketitle

\begin{abstract}
In classical General Relativity, the values of fields on spacetime are uniquely determined by their values at an initial time within the domain of dependence of this initial data surface. However, it may occur that the spacetime under consideration extends beyond this domain of dependence, and fields, therefore, are not entirely determined by their initial data. This occurs, for example, in the well-known (maximally) extended Reissner-Nordstr{\"o}m or Reissner-Nordstr{\"o}m-deSitter (RNdS) spacetimes. The boundary of the region determined by the initial data is called the ``Cauchy horizon.'' It is located inside the black hole in these spacetimes. The strong cosmic censorship conjecture asserts that the Cauchy horizon does not, in fact, exist in practice because the slightest perturbation (of the metric itself or the matter fields) will become singular there in a sufficiently catastrophic way that solutions cannot be extended beyond the Cauchy horizon. Thus, if strong cosmic censorship holds, the Cauchy horizon will be converted into a ``final singularity,'' and determinism will hold. Recently, however, it has been found that, classically this is not the case in RNdS spacetimes in a certain range of mass, charge, and cosmological constant. In this paper, we consider a quantum scalar field in RNdS spacetime and show that quantum theory comes to the rescue of strong cosmic censorship. We find that for any state that is nonsingular (i.e., Hadamard) within the domain of dependence, the expected stress-tensor blows up with affine parameter, $V$, along a radial null geodesic transverse to the Cauchy horizon as $T_{VV} \sim C/V^2$ with $C$ independent of the state and $C \neq 0$ generically in RNdS spacetimes. This divergence is stronger than in the classical theory and should be sufficient to convert the Cauchy horizon into a singularity through which the spacetime cannot be extended as a (weak) solution of the semiclassical Einstein equation. This behavior is expected to be quite general, although it is possible to have $C=0$ in certain special cases, such as the BTZ black hole.

\end{abstract}

\section{Introduction}

The Reissner-Nordstr\" om-deSitter (RNdS) spacetime describes a charged, static and spherically symmetric ``eternal'' 
black hole in deSitter spacetime. It is an exact solution to the Einstein-Maxwell field equations with a positive cosmological constant $\Lambda$. The part of the maximally extended spacetime relevant for the discussions in this paper is drawn in fig.~\ref{fig:1}, see for example \cite{HintzVasy, DafermosShlapentokhRothman18} or sec.~\ref{sec:2} for detailed discussion.

\usetikzlibrary{decorations.pathmorphing}
\tikzset{zigzag/.style={decorate, decoration=zigzag}}
\begin{figure}
\centering
\begin{tikzpicture}[scale=1]
 \draw (0,0) -- (-2,2) node[midway, below, sloped]{$\EH^-$};
 \draw (-2,2) -- (-4,4) node[midway, below, sloped]{$\EH^L$};
 \draw (0,0) -- (2,2) node[midway, below, sloped]{$\dSH^-$};
 \draw (2,2) -- (0,4) node[midway, above, sloped]{$\dSH^L$};
 \draw (-2,2) -- (0,4) node[midway, above, sloped]{$\EH^R$};
 \draw (0,4) -- (-2,6) node[midway, above, sloped]{$\CH^R$};
 \draw (-4,4) -- (-2,6) node[midway, above, sloped]{$\CH^L$};

 \draw[green, thick] (-4,4) .. controls (0,2) .. (4,4);

 \draw[snake it] (0,4) -- (0,8);
 \draw (-2,6) -- (0,8) node[midway, above, sloped]{$\CH^+$};
 \draw[double] (0,4) -- (4,4);
 \draw (2,2) -- (4,4) node[midway, below, sloped]{$\dSH^R$};
 \draw (0,2) node{${\rm I}$};
 \draw (2,3) node{${\rm III}$};
 \draw (-2,4) node{${\rm II}$};
 \draw (-1,6) node{${\rm IV}$};

 \draw[fill=white] (-4,4) circle (2pt);
 \draw[fill=white] (4,4) circle (2pt);
 \draw[fill=white] (0,8) circle (2pt);
 \draw[fill=black] (-2,6) circle (2pt);
 \draw[fill=white] (0,0) circle (2pt) node[below]{$i^-$};
 \draw[fill=black] (-2,2) circle (2pt);
 \draw[fill=white] (0,4) circle (2pt) node[above right]{$i^+$};
 \draw[fill=black] (2,2) circle (2pt);
  
\draw[densely dotted] (0,0) .. controls (-0.8,4) .. (0,5);
\end{tikzpicture}
\caption{Sketch of partially extended RNdS. The wiggled line indicates the curvature singularity and the double line conformal infinity. Filled circles correspond to bifurcation surfaces, whereas open circles indicate singular points and/or points at infinity. Also indicated, as a green line, is a Cauchy surface for the partially extended RNdS, up to the region IV beyond the Cauchy horizon $\CH^R$. To the \rhs of the dotted curve is the outer region of a shell in a typical collapse spacetime.}
\label{fig:1}
\end{figure}
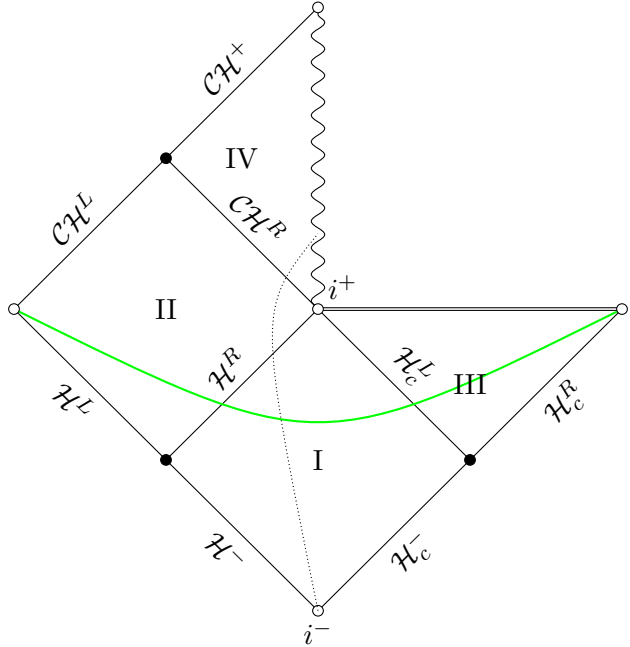

The regions II and IV are inside the black hole, the regions I, III outside, and the green line is a spacelike surface, $\Sigma$. By the usual properties of solutions to hyperbolic field equations such as the covariant Klein-Gordon wave equation, 
\beq
(\square-\mu^2) \Phi=0 \, , 
\label{eq:KGeq}
\eeq
if we prescribe initial data for $\Phi$ on such a surface, then $\Phi$ will be determined uniquely in the ``domain of dependence'' of  $\Sigma$, which in the case at hand is the union of regions I, II, III. However, $\Phi$ is not determined uniquely beyond---i.e., in particular, in region IV---representing a breakdown of determinism. The horizon separating region IV from the rest is called the ``Cauchy horizon''. A similar situation occurs for the more general Kerr-Newman-dS black holes (not discussed in this paper) and for other fields such as the linearized gravitational field. 
 
The maximally extended RNdS spacetime also extends through the surface labeled as $\CH^L$ in fig.~\ref{fig:1} to a region isometric to region IV. The surface $\CH^L$ also comprises part of the full Cauchy horizon of the maximally extended RNdS spacetime. However, an eternal static black hole is physically unrealistic. For a black hole produced by the collapse of a charged, spherically symmetric body, the region exterior to the body would be described by the RNdS spacetime, but the region inside the body would be replaced by a suitable ``interior metric.'' The dotted line in fig.~\ref{fig:1} represents a typical trajectory of the surface of a collapsing body. The region of RNdS spacetime to the right of this line is thus physically relevant for the gravitational collapse spacetime. However, the region to the left of this line would be ``covered up'' by the collapsing body and is not physically relevant. The key point is that the singularity in the region beyond $\CH^L$ will always be ``covered up'' \cite{Boulware73}. Indeed, all of $\CH^L$ and part of $\CH^R$ may be ``covered up,'' as is the case for the trajectory shown in fig.~\ref{fig:1}. But even if $\CH^L$ is not fully covered up, it will no longer play the role of a Cauchy horizon in a gravitational collapse spacetime. On the other hand, a portion of $\CH^R$ ``near'' the event horizon will never be covered up and will correspond to a portion of the Cauchy horizon of the gravitational collapse spacetime. In this paper, we will consider the behavior of a quantum scalar field as one approaches $\CH^R$ in the extended RNdS spacetime shown in fig.~\ref{fig:1}. Our results will be applicable to the portion of the gravitational collapse spacetime outside of the collapsing body.

It was argued many years ago by Penrose \cite{PenroseSCC} that dynamical fields such as $\Phi$ would have a very bad behavior at the Cauchy horizon $\CH^R$, thereby converting the Cauchy horizon to a singularity. The conjecture that the maximal Cauchy evolution of suitable initial data generically yields an inextendible spacetime is known as {\em strong cosmic censorship} (sCC). If sCC holds in general relativity, then there would be no breakdown of determinism. Penrose's original argument\footnote{See \cite{IsraelPoissonMassInflation} for a closely related argument formulated in terms of ``mass-inflation''.} for the instability of $\CH^R$, phrased for Reissner-Nordstr\"om (RN) spacetime where $\Lambda=0$, is that a timelike observer staying in the external region, such as A in fig.~\ref{fig:2}, will reach future infinity $i^+$ at infinite proper time, while an infalling observer such as B in fig.~\ref{fig:2} will reach the Cauchy horizon in finite proper time. Thus, if A sends periodic (according to her time) light signals into the black hole, illustrated by the green rays in fig.~\ref{fig:2}, these will arrive more and more frequently at B according to his time. Similarly, source free solutions $\Phi$ that oscillate only moderately near $\mathcal{I}^+$ will oscillate extremely rapidly near $\CH^R$, i.e., there will be an infinite ``blueshift effect'' as one approaches $\CH^R$. Such a rapidly oscillating field would have an infinite transversal derivative at $\CH^R$, thus resulting in a singular stress tensor. In the full theory, one might expect this behavior to be sufficiently singular to render the Einstein equations ill defined at $\CH^R$, thus ``solving'' the problem of indeterminism -- or rather relegating it to the domain of quantum gravity taking over near this singularity. 
 
 \begin{figure}
\centering
\begin{tikzpicture}[scale=1.0]

\draw[green, directed, thick] (1.2,1.2) -- (-2.8,5.2);
\draw[green, directed, thick] (1.6,1.6) -- (-2.4,5.6);
\draw[green, directed, thick] (1.8,1.8) -- (-2.2,5.8);
\draw[green, directed, thick] (1.9,1.9) -- (-2.1,5.9);
\draw[green, directed, thick] (1.95,1.95) -- (-2.05,5.95);

\draw[white,fill=white] (0,0) .. controls (-0.8,4) .. (0,5) -- (-2,6) -- (-4,4) -- (0,0);
\draw[white,fill=white] (0,0) .. controls (0.3,2.3) .. (0,4) -- (2,2) -- (0,0);
\draw[blue,directed, ultra thick] (0,0) .. controls (-0.8,4) .. (0,5) node[midway,above left]{${\rm B}$};
\draw[red,directed, ultra thick] (0,0) .. controls (0.3,2.3) .. (0,4) node[midway,above right]{${\rm A}$};

 \draw (0,0) -- (-2,2) node[midway, below, sloped]{$\EH^-$};
 \draw (-2,2) -- (-4,4) node[midway, below, sloped]{$\EH^L$};
 \draw (0,0) -- (2,2) node[midway, below, sloped]{$\mathcal{I}^-$};
 \draw (2,2) -- (0,4) node[midway, above, sloped]{$\mathcal{I}^+$};
 \draw (-2,2) -- (0,4) node[midway, above, sloped]{$\EH^R$};
 \draw (0,4) -- (-2,6) node[midway, above, sloped]{$\CH^R$};
 \draw (-4,4) -- (-2,6) node[midway, above, sloped]{$\CH^L$};

 \draw[snake it] (0,4) -- (0,8);
 \draw (-2,6) -- (0,8) node[midway, above, sloped]{$\CH^+$};
 \draw (0,2) node{${\rm I}$};
 \draw (-2,4) node{${\rm II}$};
 \draw (-1,6) node{${\rm IV}$};

 \draw[fill=white] (-4,4) circle (2pt);
 \draw[fill=white] (0,8) circle (2pt);
 \draw[fill=black] (-2,6) circle (2pt);
 \draw[fill=white] (0,0) circle (2pt);
 \draw[fill=black] (-2,2) circle (2pt);
 \draw[fill=white] (0,4) circle (2pt) node[above right]{$i^+$};
 \draw[fill=black] (2,2) circle (2pt);

\end{tikzpicture}
\caption{Illustration of the blueshift effect in Reissner-Nordstr\"om spacetime.}
\label{fig:2}
\end{figure}
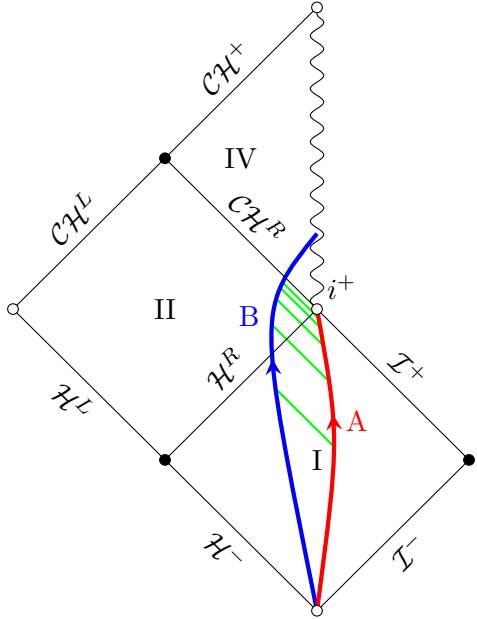

 On the other hand, as observed also some time ago \cite{MellorMoss90, MellorMoss92, BradyMossMyers}, in the presence of a positive cosmolgical constant, 
 the amplitude of a dynamical field $\Phi$ is expected to decay exponentially in region I (and on $\EH^R$) due to a similar ``red-shift effect'' related to the cosmological expansion. In principle, this decay could counterbalance the blue-shift effect, thus leading only to a mild -- if any -- singularity of the stress tensor at the Cauchy horizon, implying a potential violation of sCC. The earlier investigations of  \cite{MellorMoss90, MellorMoss92, BradyMossMyers} remained to a certain extent inconclusive, partly because the precise conclusions depend on the regularity of the initial data for $\Phi$ (see e.g. \cite{DiasEtAlRough} for a detailed discussion and \cite{DafermosShlapentokhRothman18} who advocate less regular initial data).
More recently, the decay of fields near the Cauchy horizon was analyzed in a mathematically rigorous manner both in RN and RNdS spacetime, see \cite{McNamara1978, Dafermos2003, Dafermos2005, DafermosShlapentokhRothman18, DafermosLukI, Dafermos2014, HintzVasy, SbierskiPhD, LukOh2015, CGNSa, CGNSb, CGNSc, Franzen2014, CostaFranzen2016}. Concretely, it was proven by Hintz and Vasy \cite{HintzVasy} (partly building on earlier work by \cite{BarretoZworski1997, BonyHaefner2008, Dyatlov2011, WunschZworski2011, Vasy2013, Dyatlov2014, Dyatlov2015, NonnenmacherZworski2015, HintzVasy2018}) that, starting from smooth initial data e.g.\ on $\Sigma$, the solution $\Phi$ has Sobolev regularity $H^{1/2+\beta-\epsilon}$ for any small $\epsilon>0$ locally near $\CH^R$. Here the parameter governing regularity at the Cauchy horizon is 
\beq
\label{eq:betadef}
\beta = \frac{\alpha}{\kappa_{\ca}} .
\eeq
In this expression, $\kappa_\ca$ is the surface gravity of the Cauchy horizon and $\alpha$ the spectral gap, i.e., the infimum 
\beq
\label{eq:aldef}
\alpha={\rm inf} \{- \Im (\omega_{n \ell m}) \}, 
\eeq 
with $\omega_{n \ell m}$
the so-called quasi-normal frequencies \cite{Nollert} of region I. Together with their results on ``conormal regularity'' at the Cauchy horizon, this implies via the Sobolev embedding 
theorem that the field $\Phi$ is H\" older-continuous ($C^{0,\beta-\epsilon}$) across the Cauchy horizon $\CH^R$. Taking the field $\Phi$ as a proxy for a dynamical metric in the full Einstein-Maxwell equations, this would correspond to a rather weak singularity in the sense that an observer crossing $\CH^R$ would not be crushed/stretched by an infinite amount\footnote{A similar conclusion for RN, where $\beta=0$, was reached a long time ago in the context of the Israel-Poisson mass inflation scenario \cite{IsraelPoissonMassInflation,OriPRL67_1991} and 
later in a mathematically rigorous fashion by \cite{DafermosLukI}.}.

Taking derivatives, the Sobolev regularity
translates into a behavior not more singular than  
 \beq
 \label{eq:ClassResult}
T_{VV} \sim D |V |^{-2+2\beta} \quad \text{as $V \to 0^-$}
\eeq
 for the classical stress-energy component $T_{VV}=(\partial_V \Phi)^2$ near the Cauchy horizon $\CH^R$ at $V=0$, where $V$ denotes a Kruskal-type null coordinate. (Here $D$ depends on the solution $\Phi$.) This result is roughly consistent with the results of \cite{BradyMossMyers}. The spectral gap, $\alpha$, which vanishes for $\Lambda = 0$, has recently been analyzed in detail by Cardoso et al. \cite{CardosoEtAl18,CardosoEtAl19} for $\mu=0$ in this context. They have given strong numerical evidence that there are black hole parameters for which, e.g.\  $\beta> \frac{1}{2}$. This would correspond to a singularity that is sufficiently weak that a solution could be extended in different ways as a weak solution across the Cauchy horizon; see e.g.\ \cite{DafermosLukI} for a more detailed discussion of such matters. Thus, it seems that, from the classical viewpoint, 
RNdS black holes can violate sCC for certain ranges of the black hole parameters. 
 
In this work, we therefore ask whether quantum effects can change this classical picture in an essential way. For this, we consider 
a free, real Klein-Gordon quantum field eq.~(\ref{eq:KGeq}) propagating in RNdS. First, we set up a class of quantum states that are regular, in a natural sense, near the initial data surface such as $\Sigma$ in fig.~\ref{fig:1} -- technically, they are of ``Hadamard type''.  This condition is necessary and sufficient in order to 
have a smoothly varying, and in particular finite, expectation value of any ``composite field'' (i.e.\ a suitably renormalized \cite{HollandsWaldWick}
monomial in covariant derivatives of $\Phi$) such as the stress tensor in the regions I, II, III, but not necessarily at the Cauchy horizon $\CH^R$. In this sense, Hadamard 
states are the quantum analog of classical solutions that are smooth on an initial data surface $\Sigma$. 
They are thus the natural class of states to work with because -- at least in our opinion -- 
the statement of sCC becomes obscure if we work with states having singularities already to begin with\footnote{
Our arguments rest on the assumption of regularity across the cosmological horizon, or the weaker condition that $T_{vv}$ vanishes on the cosmological horizon, with $v$ the ``Killing parameter''. This condition has a natural limit in RN, that of absence of radiation infalling from spatial infinity. It seems natural to discuss the behavior at the Cauchy horizon in terms of such asymptotic conditions, as without asymptotic conditions also the classical consideration of strong cosmic censorship becomes meaningless, as discussed for example in the Introduction to \cite{DafermosLukI}: The domain of dependence of a compact achronal set in any globally hyperbolic spacetime has a Cauchy horizon, across which it is of course smoothly extendible.
}, i.e.\ states which fail to be Hadamard near $\Sigma$ already. 

Our main result is that the expectation value of the stress tensor operator component $T_{VV}$ in such a state $\Psi$
generically diverges as 
\beq
\label{eq:MainResult}
\langle T_{VV} \rangle_\Psi \sim C |V |^{-2} + t_{VV} \quad \text{as $V \to 0^-$}
\eeq
in terms of the Kruskal type coordinate $V$ adapted to the Cauchy horizon (see sec.~\ref{sec:2} for details), where $C$ is some function that, in the limit $V \to 0^-$, goes to a constant, depending only on the black hole parameters $M,Q,\Lambda$, but {\em not} on the state $\Psi$. In the following, we will identify $C$ with its limit and treat it as a constant, but it should be kept in mind that there also may be subleading divergences associated with the term $C|V|^{-2}$ in \eqref{eq:MainResult}. By contrast, $t_{VV}$ is a piece depending on the state $\Psi$, 
diverging, however, no more strongly than the stress tensor of a classical solution with smooth initial data, so $t_{VV}$ roughly behaves as \eqref{eq:ClassResult}, i.e.\ $t_{VV} \sim  |V |^{-2+2\beta}$. 
Our investigations, which partly rely on numerical calculations, indicate that $C \neq 0$ for generic values of the black hole parameters, 
consistent with recent results of \cite{Zilberman:2019buh} for the case of RN, who have also determined the constant $C$ in that case; see also \cite{Sela18} for related results. We expect a similar behavior 
in Kerr-Newman-dS.\footnote{The methods of our paper also apply to this more general case in principle.}

The proof of eq.~\eqref{eq:MainResult} can be given quite easily for the case of a massless scalar field in 2 spacetime dimensions. In this case, the classical divergence \eqref{eq:ClassResult} of the stress tensor component $T_{VV}$ at $\CH^R$ becomes simply $\ T_{VV}  \sim D |V|^{-2+2\frac{\kappa_\co}{\kappa_\ca}}$. As we shall show in sec.~\ref{sec:2d}, the behavior of the quantum stress tensor is determined by the trace anomaly and the continuity equation alone. We find that eq.~\eqref{eq:MainResult} holds, with $C \sim \kappa_\co^2-\kappa_\ca^2$. Our arguments 
are technically rather similar to considerations made a long time ago
by \cite{ChristensenFulling77,BirrellDavies78,MarkovicUnruh}, although there are some minor differences in how we set up the states. By contrast, our arguments for eq.~\eqref{eq:MainResult} in 4 dimensions are considerably more involved. Our analysis in 4 dimensions involves arguments from microlocal analysis (to estimate $t_{VV}$) and a mode scattering computation in regions I,II (to obtain a mode sum expression for $C$, which is then evaluated numerically). The scattering coefficients arising in the mode sum are evaluated only in the conformally coupled case in which $\mu^2=\frac{1}{6}R=\frac{2}{3}\Lambda$, and the numerical evaluation is done only in this case as well. Nevertheless, the fact that we find $C \neq 0$ in this case provides strong evidence that $C \neq 0$ generically.\footnote{As we shall discuss in sec.~\ref{sec:BTZ}, this divergent behavior differs significantly from the results found for the BTZ-black hole spacetimes by \cite{Dias:2019ery}, since $C=0$ in that case.}

That the divergence of the state dependent term $t_{VV}$ in \eqref{eq:MainResult} is not worse than that of the stress tensor of a classical solution is proven in subsection \ref{sec:Reduction} for the range $\beta > \frac{1}{2}$. This is not a serious restriction, as this is precisely the range in which sCC is potentially violated classically, see e.g.\ \cite{DafermosLukI}. It is remarkable that, provided $C \neq 0$, the strength of the singularity is state independent, so that the blow-up at the Cauchy horizon is not restricted to ``generic initial data'' as in classical formulations of sCC. Restoring units, one checks that $C$ is of $O(\hbar)$, so that the  leading divergence \eqref{eq:MainResult} is caused entirely by quantum effects.

The divergence \eqref{eq:MainResult} is sufficiently strong that, if backreaction of the quantum stress tensor on the metric were considered via the semiclassical Einstein equation, it would stretch/crush observers on geodesics crossing the Cauchy horizon into region IV. 
With a stress-energy tensor of the form \eqref{eq:MainResult}, the Ricci-tensor should behave as $R_{\lambda\lambda} :=
R_{\mu\nu} \dot \gamma^\mu \dot \gamma^\nu \sim C\lambda^{-2}$ for a null geodesic $\gamma$ crossing the Cauchy horizon at the value $\lambda=0$ 
of its affine parameter. In contrast with the behavior found in the classical theory by \cite{IsraelPoissonMassInflation,OriPRL67_1991} and 
\cite{McNamara1978, Dafermos2003, Dafermos2005, DafermosShlapentokhRothman18, DafermosLukI, Dafermos2014, HintzVasy, SbierskiPhD, LukOh2015, CGNSa, CGNSb, CGNSc, Franzen2014, CostaFranzen2016}, this would correspond to 
a ``strong'' rather than a ``weak'' singularity, e.g. in the sense of \cite{TiplerPhysicsLetters64A_1977,Clarke_CambridgeUniversityPressLectureNotes1993}. In more detail, let $Z_1, Z_2$ be two Jacobi fields normal to the null geodesic $\gamma$. Denote by $n$ a null vector field 
along $\gamma$ such that $n_\mu \dot \gamma^\mu = 1$ and $0=Z_1^\mu n_\mu=Z_2^\mu n_\mu$.
We may use these to define an area element $\epsilon = \frac{1}{2} Z_1 \wedge Z_2$ at each point of $\gamma$. Furthermore, the area of this element relative to that 
induced by the metric is $A$ where $6 n_{[\mu} \dot \gamma_{\nu} \epsilon_{\alpha\beta]} = A\eta_{\mu\nu\alpha\beta}$. Writing $|A|=x^2$ we find that
the geodesic deviation equation implies in the usual way that
\beq
\frac{\ud}{\ud \lambda} (x^2 \sigma^A{}_B) = -x^2 C^A{}_{\lambda B \lambda}, \quad
\frac{\ud^2}{\ud \lambda^2} x = -\frac{1}{2}(2\sigma^A{}_{B} \sigma^B{}_A + R_{\lambda\lambda})x,
\eeq
with $\sigma^A{}_{B}, A,B=1,2$, being the components of the shear tensor 
in a parallely propagated frame and $C^A{}_{\lambda B \lambda}$ being the components of the Weyl tensor (see, e.g., \cite{WaldGR} for details). Since $R_{\lambda\lambda} \sim C \lambda^{-2}$,
the functions $x,\sigma^{A}{}_B$ must clearly behave badly as $\lambda \to 0$ unless $C=0$, which as we have argued, should not be the generic case. 
In fact, if $C>0$, we immediately learn from the second equation that $x \to 0$ as the Cauchy horizon is reached, which corresponds to an 
infinite amount of crushing. (A similar conclusion can be reached for time-like geodesics.) On the other hand, when $C<0$, to avoid an infinite stretching 
(i.e.\ $x\to \infty$ as $\lambda \to 0$), one eigenvalue of $\sigma^{A}{}_B$ should diverge as $\lambda^{-1}$. Then by the first equation, if $x$ stays away from 
zero as $\lambda \to 0$, at least one eigenvalue of $C^A{}_{\lambda B \lambda}$ must diverge as $\lambda^{-2}$. Then, by the Jacobi equation for 
$Z^A_i$, at least one of the Jacobi fields must go to infinity, thus resulting in an infinite stretching. This kind of behavior presumably would convert the Cauchy horizon into a singularity through which the spacetime could not be extended as a (weak) solution of the semiclassical Einstein equation. If so, quantum effects can be said to rescue sCC, at least in the case of RNdS.

This paper is organized as follows: In section \ref{sec:2}, we introduce RNdS spacetime and some standard coordinate systems. In sec.~\ref{sec:2d}, we analyze the 2 dimensional case. 
In sec.~\ref{sec:3}, we recall the basic setup for linear quantum field theory, including
the construction of Hadamard states by prescribing data on suitable null surfaces (``null quantization''). We construct the Unruh state in RNdS spacetime and show it is Hadamard in the union of regions I, II, and III.
In sec.~\ref{sec:5}, we treat the 4 dimensional case. This requires (i) showing that the difference in the stress-energy of the Unruh state and an arbitrary Hadamard state behaves as in the classical case and (ii) performing a mode sum analysis to estimate the behavior of the Unruh state stress-energy. Numerical evaluation of the mode sum is required to argue that generically, $C \neq 0$ in \eqref{eq:MainResult}. In sec.~\ref{sec:BTZ} we compare our results to the case of the BTZ black hole recently analyzed by \cite{Dias:2019ery}, where, in fact, $C=0$.
In sec.~\ref{sec:Conclusion}, we draw our conclusions. 

\section{RNdS spacetime and coordinate systems}
\label{sec:2}

The metric of RNdS spacetime is
\beq
\label{eq:metric}
 g = - f(r) \ud t^2 + f(r)^{-1} \ud r^2 + r^2 \ud \Omega^2,
\eeq
where $\ud \Omega^2$ is the line element of the unit 2-sphere, and where
\beq
\label{frnds}
 f(r) = 1 - \frac{2 M}{r} + \frac{Q^2}{r^2} - \frac{\Lambda}{3} r^2,
\eeq
with $M > 0$, $Q$ and $\Lambda >0$ having the interpretation of the black hole mass, charge and the cosmological constant, respectively.
For physically reasonable\footnote{I.e., solutions representing a black hole hidden from the outside by an event horizon, as opposed to a representing a naked singularity visible from the outside.} parameters, assumed throughout, this function has three positive roots $r_\ca < r_\ev < r_\co$, corresponding to the Cauchy, the event, and the cosmological horizon, respectively. In terms of these, one may express $f$ as
\beq
 f(r) = \frac{(r-r_\ca) (r- r_\ev) (r_\co - r) (r - r_o)}{r^2 (r_o^2 - r_\ca r_\co - r_\ev r_\ca - r_\ev r_\co) }
\eeq
with 
\beq
\label{eq:def_r_o}
 r_o = - (r_\co + r_\ev + r_\ca)
\eeq
being the the fourth, negative, root. The quantities
\beq
\label{eq:def_kappa}
 \kappa_X = \frac{1}{2} | f'(r_X) |, \qquad X \in \{ \ca, \ev, \co, o \}
\eeq
are the corresponding surface gravities. One has $\kappa_\ca \geq \kappa_\ev$, with equality in the extremal case $r_\ca = r_\ev$. Far from extremality, we have $\kappa_\ca, \kappa_\ev > \kappa_\co$, but in the extremal limit $r_\ca \to r_\ev$, one has $\kappa_\ca, \kappa_\ev \to 0$ while $\kappa_\co$ stays finite. We will refer to the parameter range where $\kappa_\ca < \kappa_\co$ as the near extremal regime.

We frequently use the tortoise coordinate $r_*$, defined up to an integration constant by $\ud r_* = f(r)^{-1} \ud r$. For later convenience\footnote{This becomes relevant when 
matching the phases of the various mode functions across $r=r_\ev$ in sec.~\ref{sec:modes1}.}, we choose the integration constant in the interior region such that, near $r=r_\ev$,
\beq
\label{eq:r*IntegrationConstant}
 r_* =  \frac{1}{2 \kappa_\ev} \log | r-r_\ev | + D + \order(r-r_\ev),
\eeq
with the {\it same} integration constant $D$ for positive and negative $r-r_\ev$. In terms of the tortoise coordinate, the metric reads
\beq
 g = f(r) (- \ud t^2 + \ud r_*^2 ) + r^2 \ud \Omega^2.
\eeq
In the exterior region, $r_\ev < r < r_\co$, the coordinate $r_*$ diverges to $- \infty$ towards the event horizon and to $+ \infty$ towards the cosmological horizon. Choosing further
\beq
\label{eq:def_u_v}
 u \defeq t - r_*, \qquad v \defeq t + r_*,
\eeq
leads to
\beq
 g = - f(r) \ud u \ud v + r^2 \ud \Omega^2.
\eeq
With these definitions, $u$ diverges to $+ \infty$ at the event horizon $\EH^R$, while $v$ diverges to $+ \infty$ on the cosmological (de Sitter) horizon $\dSH^L$, \cf Figure~\ref{fig:ConformalDiagram}. Choosing
\beq
 U \defeq - e^{- \kappa_\ev u},
\eeq
one can extend $U$ (and the metric) analytically from $(-\infty, 0)$ to $(-\infty, \infty)$ over the event horizon $\EH^R$ into the interior of the black hole. In this interior region, $r_*$ goes from $-\infty$ at $\EH^R$ to $+\infty$ at $\CH^L$.

Furthermore, choosing
\beq
\label{eq:vco}
 V_\co \defeq - e^{- \kappa_\co v},
\eeq
one can extend $V_\co$ (and the metric) analytically from $(-\infty, 0)$ to $(-\infty, \infty)$ over the cosmological horizon $\dSH^L$. Similarly, choosing
\beq
\label{eq:V_-}
 V_\ca \defeq - e^{- \kappa_\ca v},
\eeq
one can extend $V_\ca$ (and the metric) analytically from $(-\infty, 0)$ to $(-\infty, \infty)$ over the Cauchy horizon $\CH^R$.
Figure~\ref{fig:ConformalDiagram} gives a sketch of the geometry and indicates the coordinates and their range. It is sometimes useful to also consider a $u$ coordinate in the interior region II, where it can be defined as
\beq
 u = - \frac{1}{\kappa_\ev} \log U.
\eeq
It diverges to $+\infty$ on the event horizon $\EH^R$ and to $- \infty$ on $\CH^L$, \cf also Figure~\ref{fig:ConformalDiagram}. Due to \eqref{eq:r*IntegrationConstant}, this is consistent with \eqref{eq:def_u_v}, also in region $\rII$.
Finally, a coordinate adapted to $\CH^L$, i.e., that allows for an extension through $\CH^L$, can be defined as
\beq
\label{eq:U_-}
 U_\ca = e^{\kappa_\ca u} = U^{-\frac{\kappa_\ca}{\kappa_\ev}}.
\eeq

\begin{figure}
\centering
\begin{tikzpicture}[scale=1]
 \draw (0,0) -- (-2,2) node[midway, below=0.1cm, sloped]{$\EH^-$};
 \draw (-2,2) -- (-4,4) node[midway, below=0.1cm, sloped]{$\EH^L$};
 \draw (0,0) -- (2,2) node[midway, below=0.1cm, sloped]{$\dSH^-$};
 \draw[dashed] (2,2) -- (0,4) node[midway, above, sloped]{$\dSH^L$};
 \draw[dashed] (-2,2) -- (0,4) node[midway, above, sloped]{$\EH^R$};
 \draw[dashed] (0,4) -- (-2,6) node[midway, above, sloped]{$\CH^R$};
 \draw (-4,4) -- (-2,6) node[midway, above=0.1cm, sloped]{$\CH^L$};
 \draw[snake it] (0,4) -- (0,8);
 \draw (-2,6) -- (0,8) node[midway, above=0.05cm, sloped]{$\CH^+$};
 \draw[double] (0,4) -- (4,4);
 \draw (2,2) -- (4,4) node[midway, below=0.05cm, sloped]{$\dSH^R$};
 \draw[green] (-4,4) .. controls (0,2) .. (4,4);

 \draw[fill=white] (-4,4) circle (2pt);
 \draw[fill=black] (-2,6) circle (2pt);
 \draw[fill=white] (0,0) circle (2pt);
 \draw[fill=black] (-2,2) circle (2pt);
 \draw[fill=white] (0,4) circle (2pt);
 \draw[fill=black] (2,2) circle (2pt);
 \draw[fill=white] (0,8) circle (2pt);
 \draw[fill=white] (4,4) circle (2pt);
 
 \draw[-Latex,blue] (0.1,-0.1) -- (2.1,1.9) node[below right]{$v$};
 \draw[-Latex,red] (0.05,-0.05) -- (4.05,3.95) node[below right]{$V_\co$};
 \draw[-Latex,blue] (-0.1,-0.1) -- (-2.1,1.9) node[below left]{$u$};
 \draw[-Latex,blue] (-4.1,3.9) -- (-2.1,1.9);
 \draw[-Latex,red] (-0.05,-0.05) -- (-4.05,3.95) node[below left]{$U$};
 \draw[-Latex,blue] (-4.1,4.1) -- (-2.1,6.1) node[above left]{$v$};
 \draw[-Latex,red] (-4.05,4.05) -- (-0.05,8.05) node[above left]{$V_\ca$};
\end{tikzpicture}
\caption{Same as fig.~\ref{fig:1}, but with an indication of the coordinates used.}
\label{fig:ConformalDiagram}
\end{figure}
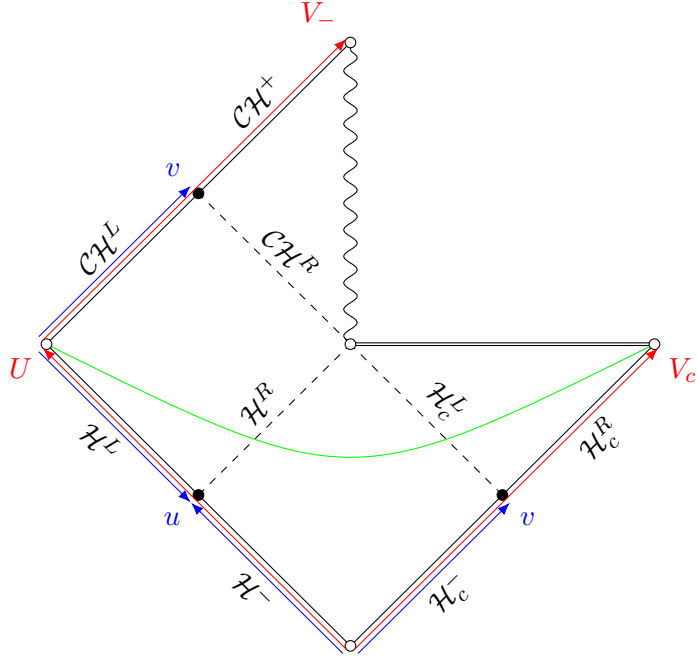

For our arguments in sections~\ref{sec:UnruhFinal} and \ref{sec:Reduction}, we will need estimates on the asymptotic behavior of classical solutions near $i^+$ (or analogously at $i^-$).
This has been analyzed in the present context by Hintz and Vasy \cite{HintzVasy} and is best expressed resolving $i^+$ ($i^-$) by a blow up procedure also introduced in \cite{HintzVasy}. We first concentrate on region $\rI$ where $r_\ev \le r \le r_\co$. In this region we can define, following \cite{HintzVasy}, a new time coordinate (basically the Kerr-star coordinate) by
\beq
\label{eq:d_t_*}
 \ud t_* = \ud t \pm [f(r)^{-1}+c(r)] \ud r.
\eeq
Here $c(r)$ is a smooth function on the interval $[r_\ev, r_\co]$, up to the boundary, such that 
\beq
 c(r) \begin{cases} = - f(r)^{-1} & r \in ( r_\ev + \delta, r_\co - \delta ), \\ \in (-2 f(r)^{-1}, 0) & r \in [r_\ev, r_\ev + \delta / 2) \cup ( r_\co - \delta / 2, r_\co]. \end{cases}
\eeq
Note that near $r_\ev$ and $r_\co$, the function $c(r)$ is not fully prescribed, but only restricted to a ($r$ dependent) interval.
The $+$ sign in \eqref{eq:d_t_*} is chosen near $r_\ev$ and the $-$ sign at $r_\co$. For $r \in  ( r_\ev + \delta, r_\co - \delta )$, $\ud t_*$ coincides with $\ud t$, and the modifications near $r_\ev$, $r_\co$ ensure that $t_*$ can be extended as a time-like coordinate to $r < r_\ev$, $r > r_\co$. In particular, in the coordinate system $(t_*, r, \ws)$ thus set up 
for $r \in [r_\ev, r_\co]$, the metric takes the form \eqref{eq:metric} for $r \in (r_\ev+\delta, r_\co-\delta)$, and 
\beq
g = -f(r) \ud t_*^2 \pm 2[1 +  c(r)f(r)] \ud r \ud t_*  - c(r)[2 + c(r)f(r)] \ud r^2 + r^2 \ud \Omega^2 
\eeq
for $r \in [r_\ev, r_\ev+\delta/2) \cup  (r_\co-\delta/2,r_\co]$. The coordinate transformations has achieved that the ``point'' $i^+$ has been blown up to the corridor $(r_\ev, r_\co)_r \times S^2_\ws$ which is represented  by the lower part of the boundary at the coordinate location $\tau=0$ of $\tau = e^{-t_*}$ of a rectangle $(t_0, \infty)_{t_*} \times (r_\ev, r_\co)_r \times S^2_\ws$ representing the region $\rI$ near $i^+$. A similar procedure can be carried through with the regions $\rII$, $\rIII$, $\rIV$; see sec.~2.1 of  \cite{HintzVasy}. The 
blow up of $i^+$ is shown in fig.~\ref{fig:BlowUp}. The vertical lines in the domain $\cR$ represent (parts of) $\CH^R$, $\EH^R$, $\dSH^L$, respectively, and the 
horizontal boundary at $\tau=0$ represents a blow up of $i^+$.\footnote{In \cite{HintzVasy}, also a further artificial horizon beyond the Cauchy horizon is introduced. We refer to \cite{HintzVasy} for details.}

As discussed in the Introduction, only parts of the partially extended RNdS would be present in a more realistic spacetime representing a collapsing spherical object. However, even in the presence of a collapsing star, a portion of 
$\CH^R$ near $i^+$ is always part of the spacetime, and our arguments would apply to that part. To make our analysis simple, we 
will ignore the collapsing star, which is inessential to this problem, and work with the portions $\rI$, $\rII$, $\rIII$, $\rIV$ of RNdS spacetime as described above.

\begin{figure}
\centering
\begin{tikzpicture}[scale=1]
 \draw[dashed,thick,fill=gray!30] (-2.5,0) -- (-2.5,-2) .. controls (-1,-2.3) and (1,-1.8) .. (2.5,-2.2) -- (2.5,0) -- (-2.5,0);
 \draw (-3,0) -- (3,0) node[midway,above]{$i^+$};
 \draw (-2,0) -- (-2,-3) node[below]{$\CH^R$};
 \draw (0,0) -- (0,-3) node[below]{$\EH^R$};
 \draw (2,0) -- (2,-3) node[below]{$\dSH^L$};
\end{tikzpicture}
\caption{Sketch of the ``blow-up'' of $i^+$ and of a sequence of Cauchy surfaces as described in the text. The shaded region is called $\cR$.}
\label{fig:BlowUp}
\end{figure}
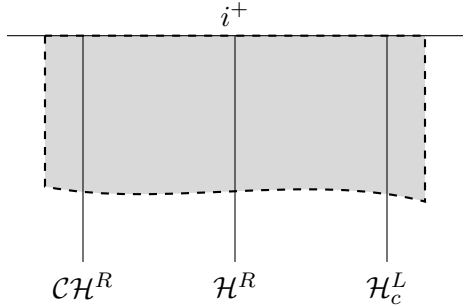

\section{Quantum stress-energy in 2 dimensional RNdS spacetime}
\label{sec:2d}

In this section, we consider a massless, real scalar field $\Phi$ satisfying 
\beq
\label{eq:kg2d}
\square \Phi=0
\eeq
on the 2 dimensional spacetime obtained from \eqref{eq:metric} by suppressing the angular coordinates, i.e.,
\beq
\label{eq:metric2d}
 g = - f(r) \ud t^2 + f(r)^{-1} \ud r^2 \, ,
\eeq
with $f$ given by eq.~(\ref{frnds}).

\subsection{Behavior of the classical and quantum stress-energy tensor near $\CH^R$}

We first consider the behavior of the stress-energy component $T_{VV}$ near $\CH^R$ for a {\em classical} field \eqref{eq:kg2d}. This can be determined immediately from the facts that (i) the stress-energy tensor is conserved, $\nabla^\nu T_{\mu \nu} = 0$, and (ii) the classical stress-energy tensor is traceless, $T^\mu_{\ \mu} = 0$. Thus, in the null coordinates $u,v$ of \eqref{eq:def_u_v}, we have $T_{uv} = 0$. The $\mu = v$ component of the continuity equation $\nabla^\nu T_{\mu \nu} = 0$ then yields
\beq
 \del_u T_{vv} = - f \del_v ( f^{-1} T_{uv} ) = 0.
\eeq
Thus, $T_{vv}$ is constant along any ``left moving'' null geodesic, such as the blue line in Figure~\ref{fig:IntegrationPath}, i.e., we have $T_{vv} (U,v) = T_{vv} (U_0,v)$. We now take the limit as $v \to \infty$ with $U_0 < 0$ and $U > 0$, so that $(U,v)$ approaches $\CH^R$ and $(U_0, v)$ approaches $\dSH^L$ (see Figure~\ref{fig:IntegrationPath}). We transform to regular coordinates $V=V_\ca$ and $V=V_\co$ near $\CH^R$ and $\dSH^L$, respectively, (see eqs.~\eqref{eq:vco} and \eqref{eq:V_-}). We obtain
\beq
\label{eq:T_V_-_V_-_2d_classical}
T_{V_\ca V_\ca}  = \frac{\kappa_\co^2}{\kappa_\ca^2} ( - V_\ca )^{2 \frac{\kappa_\co}{\kappa_\ca} - 2} \, T_{V_\co V_\co} .
\eeq
If the classical solution is smooth on $\dSH^L$, then \eqref{eq:T_V_-_V_-_2d_classical} shows that if $T_{V_\co V_\co} \neq 0$ on $\dSH^L$ and if $\kappa_\co < \kappa_\ca$, then  $T_{V_\ca V_\ca}$ blows up on $\CH^R$. However, in all cases, if $T_{V_\co V_\co}$ has a finite limit on $\dSH^L$, then $V^2_\ca T_{V_\ca V_\ca} \to 0$ on $\CH^R$.

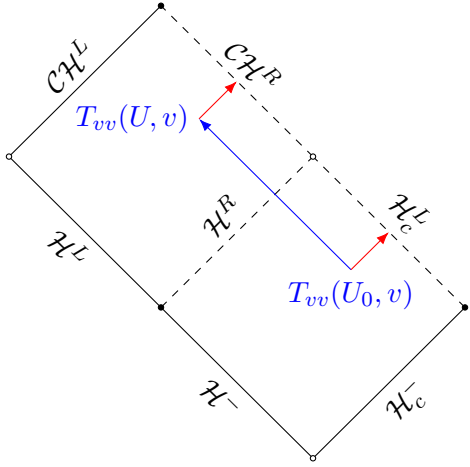
\begin{figure}
\centering
\begin{tikzpicture}[scale=1.0]
 \draw (0,0) -- (-2,2) node[midway, below, sloped]{$\EH^-$};
 \draw (-2,2) -- (-4,4) node[midway, below, sloped]{$\EH^L$};
 \draw (0,0) -- (2,2) node[midway, below, sloped]{$\dSH^-$};
 \draw[dashed] (2,2) -- (0,4) node[midway, above, sloped]{$\dSH^L$};
 \draw[dashed] (-2,2) -- (0,4) node[midway, above, sloped]{$\EH^R$};
 \draw[dashed] (0,4) -- (-2,6) node[midway, above, sloped]{$\CH^R$};
 \draw (-4,4) -- (-2,6) node[midway, above, sloped]{$\CH^L$};
 \draw[blue,-Latex] (0.5,2.5) node[below]{$T_{vv}(U_0,v)$} -- (-1.5,4.5) node[left]{$T_{vv}(U,v)$};
 \draw[red,-Latex] (0.5,2.5) -- (1,3);
 \draw[red,-Latex] (-1.5,4.5) -- (-1,5);
 \draw[fill=white] (-4,4) circle (1pt);
 \draw[fill=black] (-2,6) circle (1pt);
 \draw[fill=white] (0,0) circle (1pt);
 \draw[fill=black] (-2,2) circle (1pt);
 \draw[fill=white] (0,4) circle (1pt);
 \draw[fill=black] (2,2) circle (1pt);
\end{tikzpicture}
\caption{The blue arrow indicates the integration path and the red arrows the limit $v \to \infty$ in the discussion of the 2d case.}
\label{fig:IntegrationPath}
\end{figure}

We now consider the corresponding behavior of the quantum stress-energy tensor near $\CH^R$. The only difference in our analysis from the classical case is that,
as is well-known (see e.g.\ \cite{ChristensenFulling77} for a thorough discussion) in any conformally invariant field theory in 2 dimensions, we have a trace anomaly of the form
\beq
 g^{\mu \nu} T_{\mu \nu} = a R,
\eeq
where $a=1/24\pi$ for the case of a massless scalar field. For the metric \eqref{eq:metric2d}, we have $R = - f''$, where the prime denotes derivative \wrt $r$, so
\beq
\label{eq:TraceAnomaly2d}
 T_{u v} = \frac{a}{4} f f''.
\eeq
Therefore, the $\mu = v$ component of the continuity equation now reads
\beq
\label{eq:ContinuityEq_2d}
 \del_u T_{vv} = - f \del_v ( f^{-1} T_{uv} ) = - \frac{a}{8} f^2 f'''.
\eeq
This can be integrated to yield
\beq
 T_{vv}(U, v) = \frac{a}{8} \left[ 2 f f'' - (f')^2 \right]_{r(U_0, v)}^{r(U, v)} + T_{vv}(U_0, v).
\eeq
We choose $U_0 < 0$ and $U > 0$ and take the limit as $v \to \infty$. Since
\beq
 \lim_{v \to \infty} r(U, v) = \begin{cases} r_\co & U < 0, \\ r_\ca & U > 0. \end{cases}
\eeq
we obtain
\beq
\label{eq:Tvv_ta_2d}
 \lim_{v \to \infty} T_{vv}(U, v) = \lim_{v \to \infty} T_{vv}(U_0, v) - \frac{a}{2} ( \kappa_\ca^2 - \kappa_\co^2 ) . 
\eeq
This differs from the classical case by the step of $- a( \kappa_\ca^2 - \kappa_\co^2 )/2$ that occurs at $U=0$. 

When we transform \eqref{eq:Tvv_ta_2d} to regular coordinates $V=V_\ca$ and $V=V_\co$ near $\CH^R$ and $\dSH^L$, respectively, the first term will give rise to the behavior \eqref{eq:T_V_-_V_-_2d_classical} found in the classical case. However, the second term will give rise to singular behavior of the form
\beq
\label{eq:TVV_ta_2d}
T_{V_\ca V_\ca}  \sim \frac{a}{2} \frac{\kappa_\co^2 - \kappa_\ca^2}{\kappa_\ca^2} \frac{1}{V_-^2}.
\eeq
Unless $\kappa_\co = \kappa_\ca$, this behavior is more singular than the classical behavior. Furthermore, in the quantum case, the $V_\ca^{-2}$ divergence of $T_{V_\ca V_\ca}$ at the Cauchy horizon $\CH^R$ is present whenever $\kappa_\co \neq \kappa_\ca$ even if $T_{V_\co V_\co}$ vanishes on the cosmological horizon $\dSH^L$. 

Similar arguments for the RN case were previously given in \cite{BirrellDavies78}. Apart from not assuming stationarity, the main difference of our argument is that we impose the ``initial condition'' of nonsingularity at $\dSH^L$ instead of the event horizon, as in \cite{BirrellDavies78}, which seems physically much better motivated.

\subsection{The Unruh state in 2 dimensional RNdS}

The main shortcoming of the analysis of the previous subsection is that it does not generalize to higher dimensions, as the trace anomaly does not give sufficient information to determine $T_{vv}$. To make an argument, one has to invoke the -- very plausible -- absence of numerical coincidence of a cancellation between a geometric (from the trace anomaly) and a state-dependent (from the tangential pressure) contribution to the integral of $\del_u T_{vv}$ \cite{BirrellDavies78}.\footnote{Another argument for a divergence at the Cauchy horizon that does not invoke absence of numerical coincidences was put forward in \cite{Hiscock80}: Under the assumptions of spherical symmetry, stationarity, and a non-vanishing total flux of radiation, the stress tensor diverges either at $\CH^R$ or $\CH^L$. However, as discussed in the Introduction, a divergence at $\CH^L$ is irrelevant in general.}

In order to make contact with the analysis that will be given in the following sections for the 4 dimensional case, it is useful to analyze the properties of the Unruh state, defined in the usual way (see e.g.\ \cite{WaldQFTCS} or \eqref{eq:statedef} below) by the positive frequency mode solutions $\square \psi_k^{\rU, \rin/\rup}=0$  
given by
\begin{align}
\label{eq:MU_modes}
 \psi^{\rU, \rin}_{k} & = \frac{1}{\sqrt{2 \pi}} \frac{1}{\sqrt{2 k}} e^{- i k V_\co}, &
 \psi^{\rU, \rup}_{k} & = \frac{1}{\sqrt{2 \pi}} \frac{1}{\sqrt{2 k}} e^{- i k U}, 
\end{align}
We write the metric in the form
\beq
g = G \ud U \ud V_\co
\eeq
The expectation value of the stress tensor in the state defined by these modes is given by \cite{DaviesFullingUnruh}
\begin{align}
\label{eq:T_UU_T_VV_general}
\langle T_{V_\co V_\co} \rangle_\rU & = -\frac{1}{12 \pi} G^{\frac{1}{2}} \del^2_{V_\co} G^{- \frac{1}{2}}, &
\langle T_{U U} \rangle_\rU & = - \frac{1}{12 \pi} G^{\frac{1}{2}} \del^2_{U} G^{- \frac{1}{2}}.
\end{align}
One easily verifies that $\langle T_{V_\co V_\co} \rangle_\rU$ is finite at the cosmological horizon $\dSH^L$ and that near the Cauchy horizon $\CH^R$
\beq
\label{eq:T_V_-_V_-_2d}
\langle T_{V_\ca V_\ca} \rangle_\rU = - \frac{1}{48 \pi} \left( 1 - \frac{\kappa_\co^2}{\kappa_\ca^2} \right) V^{-2}_\ca,
\eeq 
consistent with \eqref{eq:TVV_ta_2d} and the conformal anomaly coefficient $a = \frac{1}{24 \pi}$ of the scalar field. 

We remark that in the stationary state defined by the modes \eqref{eq:MU_modes}, also $\langle T_{UU} \rangle_\rU$ diverges as $U^{-2}$ at $\CH^L$ in adapted coordinates $U=U_\ca$. However, following \cite{MarkovicUnruh}, it is not difficult to see that it is possible to choose a nonstationary state that is regular on $\CH^L$. To do so, 
choose a point $(u_0,v_0)$ in the exterior region of RNdS and define $\tilde U(u)$ and $\tilde V(v)$, respectively, as the affine parameter of ingoing and outgoing null geodesics starting at $(u_0, v_0)$, \cf Figure~\ref{fig:MU}. Defining the state by positive frequency mode solutions $(4 \pi k)^{-\frac{1}{2}} e^{- i k \tilde U}$, $(4 \pi k)^{-\frac{1}{2}} e^{- i k \tilde V}$, one finds that $\langle T_{UU} \rangle$ is finite on $\CH^L$, but $\langle T_{V_\ca V_\ca} \rangle$ is still singular at $\CH^R$. The state is Hadamard also in an extension beyond $\CH^L$. See Figure~\ref{fig:MU} for a sketch of the setup and the domain in which the state is Hadamard. However, since regularity at $\CH^L$ is inessential for us, but stationarity is quite convenient for computational purposes, we shall work with the stationary Unruh state defined by the modes \eqref{eq:MU_modes}.

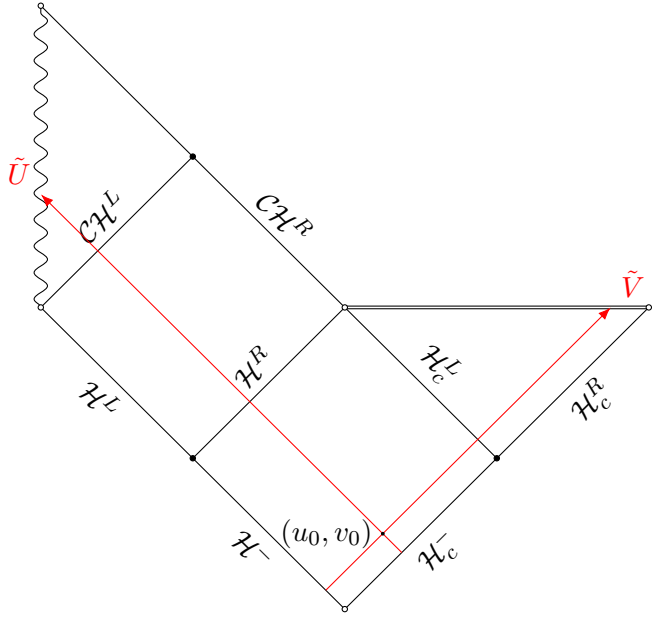
\begin{figure}
\centering
\begin{tikzpicture}[scale=1]
 \draw (0,0) -- (-2,2) node[midway, below, sloped]{$\EH^-$};
 \draw (-2,2) -- (-4,4) node[midway, below, sloped]{$\EH^L$};
 \draw (0,0) -- (2,2) node[midway, below, sloped]{$\dSH^-$};
 \draw (2,2) -- (0,4) node[midway, above, sloped]{$\dSH^L$};
 \draw (-2,2) -- (0,4) node[midway, above, sloped]{$\EH^R$};
 \draw (0,4) -- (-2,6) node[midway, above, sloped]{$\CH^R$};
 \draw (-4,4) -- (-2,6) node[midway, above, sloped]{$\CH^L$};
 \draw[snake it] (-4,4) -- (-4,8);
 \draw (-2,6) -- (-4,8);
 \draw[double] (0,4) -- (4,4);
 \draw (2,2) -- (4,4) node[midway, below, sloped]{$\dSH^R$};
 \draw[red,-Latex] (-0.25,0.25) -- (3.5,4) node[above right]{$\tilde V$}; 
 \draw[red,-Latex] (0.75,0.75) -- (-4,5.5) node[above left]{$\tilde U$}; 
 \draw [fill=black] (0.5,1) circle (0.5pt) node[left]{$(u_0,v_0)$};
 \draw[fill=white] (-4,4) circle (1pt);
 \draw[fill=white] (4,4) circle (1pt);
 \draw[fill=white] (-4,8) circle (1pt);
 \draw[fill=black] (-2,6) circle (1pt);
 \draw[fill=white] (0,0) circle (1pt);
 \draw[fill=black] (-2,2) circle (1pt);
 \draw[fill=white] (0,4) circle (1pt);
 \draw[fill=black] (2,2) circle (1pt);
\end{tikzpicture}
\caption{The domain in which the Markovic-Unruh state is Hadamard, \cf the description in the text.}
\label{fig:MU}
\end{figure}

Note that a single ingoing Unruh-mode $\psi^{\rU, \rin}_k$ has a stress tensor at the 
Cauchy horizon $\CH^R$
\beq
\label{eq:T_VV_Psi_in}
 T_{V_\ca V_\ca}[\psi^{\rU, \rin}_k] = \del_{V_\ca} \psi^{\rU, \rin}_k \del_{V_\ca} \overline{ \psi^{\rU, \rin}_k} = \frac{1}{4 \pi} k \frac{\kappa_\co^2}{\kappa_\ca^2} V_\ca^{2 \frac{\kappa_\co}{\kappa_\ca} -2},
\eeq
with a divergence which, of course, is the same as the one found in \eqref{eq:T_V_-_V_-_2d_classical} for the classical stress tensor and is weaker than that found for the full quantum stress tensor. The divergence \eqref{eq:T_V_-_V_-_2d} of the quantum stress tensor is thus a UV effect, i.e.\ it stems from integration over modes of arbitrarily high frequency. This can be seen more explicitly by---instead of renormalizing the stress tensor via Hadamard point-splitting, i.e.\ using \eqref{eq:T_UU_T_VV_general}---computing the difference of the expectation value of $T_{V_\ca V_\ca}$ evaluated in the state defined by the modes \eqref{eq:MU_modes} and a ``comparison'' state which is Hadamard across the Cauchy horizon $\CH^R$ (so that the latter has $T_{V_\ca V_\ca}$ finite at $\CH^R$). Such a state can be defined by using, instead of the modes $\psi^{\rU, \rin}_k$, the modes $\square \psi^{\rF, \rout}_k=0$, with
\beq
 \psi^{\rF, \rout}_k = \frac{1}{\sqrt{2 \pi}} \frac{1}{\sqrt{2k}} e^{- i k V}
\eeq
where now $V=V_\ca$. 

This difference in expectation values can be easily computed directly, with a point-split procedure. Having in mind our later discussion of the 4 dimensional case, we perform this calculation differently. Restricting to the in-modes, we consider the ``Boulware'' mode solutions (just as a computational device, not changing the definition of the state!)
\begin{subequations}
\label{eq:LeftMover_2d}
\begin{align}
\label{eq:LeftMoverI_2d}
 \psi^{\rin, \rI}_{\omega}(V_\co) & = \begin{cases} (2 \pi)^{-\frac{1}{2}} (2\omega)^{-\frac{1}{2}} e^{- i \omega v} & V_\co < 0 \\ 0 & V_\co > 0 \end{cases}, \\
\label{eq:LeftMoverII_2d}
 \psi^{\rin, \rIII}_{\omega}(V_\co) & = \begin{cases} 0 & V_\co < 0 \\  (2 \pi)^{-\frac{1}{2}} (2\omega)^{-\frac{1}{2}} e^{- i \omega v} & V_\co > 0 \end{cases},
\end{align}
\end{subequations}
where for $V_\co > 0$, we defined
\beq
 v = \kappa_\co^{-1} \log V_\co.
\eeq
Here $\rI$ stands for the original exterior region and $\rIII$ for the region beyond the cosmological horizon $\dSH^L$, where $V_\co > 0$, \cf fig.~\ref{fig:1}.
As usual \cite{ChristensenFulling77}, the formal mode integral over the Unruh modes $\psi^{\rU, \rin}_k$ can also be expressed as
\beq
 \langle T_{v v} \rangle_{\rU} = \int \ud \omega \ \coth (\pi \omega \kappa_c^{-1}) T_{v v}[\psi^{\rin, \rI}_\omega]
\eeq
with
\beq
\label{eq:T_vv_phi_I}
 T_{v v}[\psi^{\rin, \rI}_{\omega}] = \del_v \psi^{\rin, \rI}_{\omega} \del_v \overline{\psi^{\rin, \rI}_{\omega}} = \frac{1}{4 \pi} \omega 
\eeq
the evaluation of $T_{vv}$ in the solution $\psi^{\rin, \rI}_{\omega}$. For the difference between the expectation value of $T_{vv}$ evaluated in the state defined by the modes $\psi^{\rU, \rin}_k$ and the modes $\psi^{\rF, \rout}_k$ defining the comparison state $\langle \ . \ \rangle_\rF$ (the latter being Hadamard across $\CH^R$), one thus obtains
\beq
\label{eq:ModeSum2dBoulware}
 \langle T_{vv} \rangle_{\rU} - \langle T_{vv} \rangle_{\rF} = \frac{1}{4 \pi} \int_0^\infty \ud \omega \ \omega \left( \coth \frac{\pi \omega}{\kappa_\co} - \coth \frac{\pi \omega}{\kappa_\ca} \right) = \frac{1}{48 \pi} (\kappa_\co^2 - \kappa_\ca^2),
\eeq
which coincides with the result  \eqref{eq:Tvv_ta_2d}, for the anomaly coefficient $a = (24 \pi)^{-1}$ of the scalar field. We note that in terms of the modes $\psi^{\rin, \rI}_{\omega}$, the stress tensor for a single mode \emph{does} give the correct prediction for the degree of singularity of the stress tensor at the Cauchy horizon, \cf \eqref{eq:T_vv_phi_I}. There is no contradiction to the previous discussion in terms of the $\psi^{\rU, \rin}_k$ modes, as a single $\psi^{\rin, \rI}_{\omega}$ mode contains arbitrarily high frequencies \wrt $V_\co$. The point is that for a single $\psi^{\rin, \rI}_{\omega}$ mode the stress tensor is divergent both at the cosmological and the Cauchy horizon, but at the cosmological horizon the Boltzmann weights of these modes are exactly such that their contributions cancel upon renormalization. At the Cauchy horizon, on the other hand, the Boltzmann weights are then incommensurate, unless $\kappa_\co = \kappa_\ca$, so that a finite value of $T_{vv}$ (and, thus, a singular stress-energy tensor) at the Cauchy horizon is obtained.

\section{Construction of the Unruh state in 4 dimensional RNdS spacetime}
\label{sec:3}

The analysis just given for the 2 dimensional case is not adequate to treat the 4 dimensional case because (a) the trace anomaly and 
continuity equation do not give enough information to determine the behavior of the stress energy tensor near the Cauchy horizon and (b) in the 2 dimensional case the field is an exact linear superposition of left- and right-movers, whereas the ``backscattering of modes'' occurs in 4 dimensions. In order to make our arguments in 4 dimensions with the necessary clarity and precision, we introduce in this section some notions and results from QFT in curved spacetime. The main result of this section is the construction of the Unruh state as a stationary Hadamard state in regions I, II, and III of fig.~\ref{fig:1}.
The reader not interested in the general discussion of QFT in curved spacetime or in the details of the Unruh state construction may skip to sec.~\ref{sec:5}.

\subsection{Quantum fields and Hadamard states}
\label{sec:4.1}

Usually, one thinks of observables, such as the (smeared) stress tensor, as operators on a Hilbert space.
However, properly speaking, it is problematical to fix from the outset a fixed Hilbert space representation in which the states live as vectors or 
statistical operators (density matrices). 
This is because (a) frequently states must be considered that are not expressible as density matrices on the fixed Hilbert space, and (b)
the usual Hilbert space quantization is tied closely to the existence of a Cauchy surface (to define canonical fields), which is not available in 
our situation. 

Therefore, it is formally much more convenient -- and essentially requires no work (!) -- to set up the quantum theory in an algebraic manner for the purpose of the discussion. 
Recall that a spacetime $(\sN, g)$ is called ``globally hyperbolic''
if it has a Cauchy surface $\Sigma$, i.e.\ a spacelike or null codimension-one surface such that every inextendible causal curve hits 
$\Sigma$ precisely once, see \cite{WaldGR} for details. In such a situation, the Klein-Gordon wave operator $\square-\mu^2$ possesses a unique 
retarded and advanced propagator, called $E^\pm$. We can think about the propagators as integral operators 
$E^\pm: C^\infty_0(\sN) \to C^\infty(\sN)$, which are uniquely determined by the properties 
\beq
(\square-\mu^2) E^\pm = 1, \quad {\rm supp}(E^\pm f) = J^\pm({\rm supp } f), 
\eeq
where ``supp'' is the support of a function $f$ (the closure of the set where it is not zero), and where $J^\pm(\cO)$ are the causal future/past of a set $\cO$, see e.g.\ 
\cite{WaldGR} for the detailed definition of these notions. 
The existence and uniqueness property of $E^\pm$ express the fact that the initial value problem 
for $(\square -\mu^2)\Phi=0$ has a unique solution, i.e.\ they express determinism. 
The ``commutator function'' is defined as $E=E^+-E^-$, and it is often identified with a distributional kernel $E(x_1, x_2)$ on $\sN \times \sN$
(this identification implicitly depends on the choice $\ud \eta$ of the integration element, which we take to be that given by the metric $g$). 
Associated with $(\sN,g)$, we then 
define an algebra, $\gA(\sN, g)$, whose generators are ``smeared'' field observables $\Phi(f)$, their formal adjoints\footnote{Here, the * is an anti-linear operation satisfying the usual rules for the adjoint, namely $A^* B^*=(BA)^*$.} $\Phi(f)^*$ and an identity $1$, where $f$ is a complex-valued $C^\infty$-function on $\sN$
with compact support, and whose relations are:

\begin{enumerate}
\item[A1)] For any $f_1, f_2$ and any complex numbers $c_1, c_2$, we have 
$\Phi(c_1 f_1 + c_2 f_2) =  c_1 \Phi(f_1) + c_2 \Phi(f_2)$. 

\item[A2)]  For any $f$ we have $\Phi[(\square -\mu^2)f]=0$.

\item[A3)] For any $f$ we have $\Phi(f)^*= \Phi(\bar f)$.

\item[A4)] For any $f_1,f_2$, we have $\Phi(f_1)\Phi(f_2)-\Phi(f_2)\Phi(f_1) = iE(f_1,f_2)1$.
\end{enumerate}  

Item A1) essentially says that $\Phi(f)$ is an operator-valued distribution. By analogy with the conventions in distribution theory, we 
write 
\beq
\Phi(f) = \int_\sN \Phi(x) f(x) \ud \eta(x)
\eeq
 as if $\Phi(x)$ was a function, 
and work from now with the formal point-like object $\Phi(x)$. Item A2) says that 
$(\square -\mu^2)\Phi(x)=0$ in the sense of distributions and item A3) that $\Phi(x)=\Phi(x)^*$ in the sense of distributions, i.e.\ we have a hermitian field. 
Item A4) gives the usual commutation relation in covariant form $[\Phi(x_1),\Phi(x_2)] = iE(x_1,x_2)1$, with $E(x_1,x_2)$ the distributional kernel of $E$.

In RNdS, see fig.~\ref{fig:1}, the regions I, II, III separately are globally hyperbolic, as is their union. However, region IV is not globally hyperbolic. Since we somehow want to talk about the field in this region, we must find a way to generalize this construction. This can be done in the following cheap way. Let $\sM$ be a non globally hyperbolic spacetime such as the the union $\rI \cup \rII \cup \rIII \cup \rIV$ depicted in fig.~\ref{fig:1}. We consider all globally hyperbolic subregions $\sN \subset \sM$ whose causal structure coincides with that of $\sM$, i.e.\ $J^\pm_{\sN}(p) = J^\pm_{\sM}(p) \cap \sN$ for all $p \in \sN$, where $J^\pm_{\sN}(p)$ denotes the causal future/past of $p$ in $\sN$.\footnote{The issue that is circumvented with this restriction was pointed out in \cite{Kay1992}.} For two such globally hyperbolic subregions $\sN_1$ and $\sN_2$ such that $\sN_1 \subset \sN_2$, one then has $\gA(\sN_1) \subset \gA(\sN_2)$. In a first step, let us define a (too large)
algebra $\bigvee_{\sN \subset \sM} \gA(\sN,g)$ {\em freely}
generated by the algebras $\gA(\sN,g)$ of such 
globally hyperbolic subregions of $\sM$ (i.e.\ no relations imposed between generators of different $\gA(\sN_i,g)$). 
Then we divide out the relation $\Phi_{\sN_1}(f) = \Phi_{\sN_2}(f)$ for all $f$, $\sN_1$, $\sN_2$ such that $\supp f \subset \sN_1, \sN_2$,
This construction
gives a satisfactory way of defining the observables in a spacetime like RNdS which is not globally hyperbolic, by imposing all relations that can be obtained in globally hyperbolic subregions.\footnote{Imposing ``boundary conditions'' at the singularity in region $\rIV$ would in effect render the equation hyperbolic on all of $\sM$. In this case an algebra $\gA(\sM)$ could be defined directly. It would correspond to a quotient of the algebra defined previously by additional relations.} 

Of course, to do physics, we not only need an algebra but also states. A state in this framework is simply a positive, normalized, linear
functional on $\gA(\sM,g)$, where ``positive'' means a functional $\gA(\sM,g) \owns A \mapsto \langle A \rangle_\Psi \in \C$ such that $\langle 
A^* A \rangle_\Psi \ge 0$ for all $A$, and where ``normalized'' means $\langle 1 \rangle_\Psi = 1$. Since $\gA(\sM,g)$ is presented in 
terms of generators and relations, a state is given once we know its correlation functions $\langle \Phi(x_1) \cdots \Phi(x_n) \rangle_\Psi$. 
Among all states, we will focus on ``Gaussian states'', which are determined uniquely in terms of their 2-point correlation (``Wightman''-)
function $\langle \Phi(x_1) \Phi(x_2) \rangle_\Psi$, see e.g.\ \cite{HollandsWaldWick} for details. From the definitions, we must have
\begin{enumerate}
\item[S1)] 
(Commutator) $\langle \Phi(x_1) \Phi(x_2) \rangle_\Psi - \langle \Phi(x_2) \Phi(x_1) \rangle_\Psi = iE(x_1, x_2)$ whenever $x_1,x_2$ are contained in some globally hyperbolic portion $\sN \subset \sM$.

\item[S2)] 
(Wave equation) $(\square-\mu^2)_{x_1} \langle \Phi(x_1) \Phi(x_2) \rangle_\Psi = (\square-\mu^2)_{x_2} \langle \Phi(x_1) \Phi(x_2) \rangle_\Psi = 0$.

\item[S3)] 
(Positive type) 
$\int \langle \Phi(x_1) \Phi(x_2) \rangle_\Psi f(x_1) \overline f(x_2) \ud \eta(x_1) \ud \eta(x_2) \ge 0$ for any smooth, compactly supported 
function $f$ on $\sM$. 
\end{enumerate}

Conversely, any distribution $\langle \Phi(x_1) \Phi(x_2) \rangle_\Psi$ with these properties defines a Gaussian state. One may always 
represent the field algebra $\gA(\sM,g)$ on some Hilbert space such that the expectation functional $\langle \ . \ \rangle_\Psi$ corresponds to 
some ``vacuum'' vector in that Hilbert space, although for a generic state the terminology ``vacuum'' has no physical meaning. At any rate, it will be fully sufficient for this paper to work with the expectation functionals. 

While the above three conditions characterize states in general, to obtain physically reasonable states (in a sense explained below), one should impose more stringent conditions on the short-distance behavior of the 2-point function $\langle \Phi(x_1) \Phi(x_2) \rangle_\Psi$. 
One such condition is the ``Hadamard condition''. To state this condition, one introduces the notion of a 
convex normal neighborhood, $\cO$, in the total spacetime $\sM$, which is a globally hyperbolic 
sub-spacetime $\cO \subset \sM$ such that any pair of points $x_1,x_2$ from $\cO$ can be connected by a unique geodesic. 
In such a neighborhood, we can define uniquely the signed squared geodesic distance $\sigma(x_1,x_2)$ for $x_1,x_2$ from $\cO$, 
and we can, non-uniquely, define a time function $T(x)$. E.g., in RNdS, $T(x)=t$ in region I, or $T(x)=r$ in region II.

\begin{definition} (Hadamard condition, see, e.g.\ \cite{KayWald91})
For $(x_1, x_2) \in \cO \times \cO$, where $\cO$ is any convex normal globally hyperbolic neighborhood, the 2-point function has the general form
\beq
\label{Hadamard}
\langle \Phi(x_1) \Phi(x_2) \rangle_\Psi = \frac{1}{4\pi^2} \left(
\frac{U(x_1,x_2)}{\sigma+i0T} + V(x_1, x_2) \log (\sigma+i0T) + W_\Psi(x_1,x_2)
\right) .
\eeq
Here $U=\Delta^{1/2}$ 
is the square root of the VanVleck determinant \cite{Poisson03}, $T=T(x_1)-T(x_2)$,
$V$ is determined by the Hadamard-deWitt transport equations \cite{DeWittBrehme} and is thereby, as $U,\sigma$, entirely determined by the local geometry
within $\cO$. The state dependence is contained in $W_\Psi$, which is required to be a $C^\infty$ function on $\cO \times \cO$. 
\end{definition}

Roughly speaking, the Hadamard condition states that the singular part of the 2-point function $\langle \Phi(x_1) \Phi(x_2) \rangle_\Psi$
is entirely determined by the local geometry. In any Hadamard state, one can define the $n$-point correlation functions of ``renormalized composite fields''. Such fields are classically given by polynomials of the field $\Phi$ and its covariant derivatives $\nabla_{\mu_1} \dots
\nabla_{\mu_k} \Phi$ such as the stress tensor\footnote{There is an ambiguity in the definition of the classical stress tensor, as one could interpret the mass term also as a curvature coupling.} 
\beq
T_{\mu\nu} = \nabla_\mu \Phi \nabla_\nu \Phi - \tfrac{1}{2} g_{\mu\nu} (\nabla^\sigma \Phi \nabla_\sigma \Phi + \mu^2 \Phi^2) .
\eeq
At the quantum level, these are defined (non-uniquely), as operator valued distributions in some larger algebra containing $\gA(\sM,g)$
\cite{HollandsWaldWick,HollandsWaldTO}. For us, it is only important how their 1-point function is defined. This is easiest to explain if there are no derivatives and if the polynomial is quadratic, i.e.\ for $\Phi^2$. Then one defines
\beq
\label{eq:Phi2_LocalRenormalization}
\langle \Phi^2(x) \rangle_\Psi = \lim_{x_1,x_2\to x} W_\Psi(x_1, x_2).
\eeq
As the singular part of \eqref{Hadamard} -- which is effectively subtracted -- is covariant, this ``{\it point-split renormalization} prescription''\footnote{The ``point-split'' prescription is 
equivalent to saying that $\Phi^2$ is the first non-trivial, i.e.\ aside from the identity, operator in the OPE, see \cite{HollandsWaldAxiomatic}.} is covariant -- in a sense it is the ``same'' on all spacetimes, see 
\cite{HollandsWaldWick, BrunettiFredenhagenVerch} for the precise meaning of this statement.
The 1-point functions $\langle \nabla_{\mu_1} \dots
\nabla_{\mu_k} \Phi \nabla_{\nu_1} \dots
\nabla_{\nu_l} \Phi (x) \rangle_\Psi$ would be defined in the same way, except that we take the derivatives before the coincidence limit. 
There are certain ambiguities in the definition of the composite quantum fields $\nabla_{\mu_1} \dots
\nabla_{\mu_k} \Phi \nabla_{\nu_1} \dots \nabla_{\nu_l} \Phi (x)$, in the sense that the above point-split scheme is not the only one which is ``the same on all spacetimes''. 
These ambiguities are fully characterized in a mathematically precise manner in \cite{HollandsWaldWick}. They
would correspond in the case of $\Phi^2$ to adding a constant multiple of the scalar curvature and the mass, $\Phi^2 \to \Phi^2 + c_1 R + c_2 \mu^2$, and in the case of 
more general quadratic composite fields, to adding linear combinations of higher curvature invariants of the correct ``dimension''. In the case of 
$T_{\mu\nu}$ this freedom must be used to ensure that $\nabla^\mu T_{\mu\nu}=0$. These ambiguities, while important in general, are however of no consequence 
for our investigation here, since they only give corrections to the expectation values that are smooth functions on spacetime, whereas we are interested 
in their (potential) divergent part as we approach a Cauchy horizon.

For us, the following properties of Hadamard states will be relevant:

\begin{enumerate}
\item[H1)] For any Hadamard state, $\langle \nabla_{\mu_1} \dots
\nabla_{\mu_k} \Phi \nabla_{\nu_1} \dots
\nabla_{\nu_l} \Phi (x) \rangle_\Psi$ is a $C^\infty$ function of $x$ on $\sM$.

\item[H2)] For any pair $\Psi, \Psi'$ of Hadamard states, $\langle \Phi(x_1) \Phi(x_2) \rangle_\Psi -
\langle \Phi(x_1) \Phi(x_2) \rangle_{\Psi'}$ is a $C^\infty$ function on $\sN \times \sN$ on any 
globally hyperbolic portion $\sN$ of $\sM$.

\item[H3)] For any pair $\Psi, \Psi'$ of Hadamard states and $x \in \sM$, we have
\beq\label{eq:ptsplit}
\langle \Phi^2(x) \rangle_\Psi - \langle \Phi^2(x) \rangle_{\Psi'}= \lim_{x_1,x_2\to x} ( \langle \Phi(x_1) \Phi(x_2) \rangle_\Psi - \langle \Phi(x_1) \Phi(x_2) \rangle_{\Psi'}), 
\eeq
with a similar formula for general quadratic composite fields $\nabla_{\mu_1} \dots
\nabla_{\mu_k} \Phi \nabla_{\nu_1} \dots \nabla_{\nu_l} \Phi (x)$. 
\end{enumerate}

Properties H1) and H3) are immediate consequences of the definition since $W$ is locally smooth. Property H2) is remarkable, because 
the Hadamard form only refers to an arbitrarily small neighborhood $\cO$, while H2) makes a statement about arbitrarily large globally hyperbolic subsets of $\sM$, such as the union of regions $\rI$, $\rII$, $\rIII$ in RNdS. Its proof combines nontrivially the commutator, field equation, and positive type 
property of Hadamard states \cite{Radzikowski}. It can only be given using ``microlocal'' techniques, which we will briefly describe in the next subsection. States which are not Hadamard typically have  infinite fluctuations, e.g.\ $\langle \Phi^2(f_1) \Phi^2(f_2) \rangle_\Psi=\infty$, of smeared composite fields, even if their expectation value should be finite $\langle \Phi^2(f) \rangle_\Psi < \infty$ (which need not be the case either, see e.g.\ \cite{BrunettiFredenhagenHollands}). Furthermore, interacting quantum field theories (e.g.\ adding a $\Phi^4$-interaction) require Hadamard states \cite{BrunettiFredenhagenScalingDegree,HollandsWaldTO}. Reasonable states in flat spacetime such as the vacuum, thermal, finite particle number, steady states are Hadamard, and if a state is Hadamard near any Cauchy surface $\Sigma$ in a globally hyperbolic spacetime, i.e.\ ``initially'', 
then it remains Hadamard \cite{RadzikowskiVerch}. 

Together, these properties strongly suggest that Hadamard states are the reasonable class to 
consider as ``regular states'' (analogous classically to $C^\infty$-functions) on globally hyperbolic spacetimes. In this paper, we will study in a sense whether a state which is initially Hadamard (in regions $\rI$, $\rII$, $\rIII$)
remains Hadamard across the Cauchy horizon $\CH^R$ of RNdS.

\subsection{Hadamard states from null surfaces}

Having defined Hadamard states, we must ask whether (a) such states exist, at least on globally hyperbolic spacetimes, and (b) how to 
construct/characterize concretely Hadamard states on given spacetime representing given physical setups. (a) has been established by several rigorous methods, see \cite{FNW81,JunkerSchrohe,GerardWrochna,HollandsPhD,DappiaggiMorettiPinamontiBook}. In particular, given any one Hadamard state $\Psi$, we may go to a representation 
of the field algebra by operators on a Hilbert space, in which the state is represented by a vector. Then, applying any product $\Phi(f_1) \cdots \Phi(f_N)$ to this vector, 
we get a new vector giving a new expectation functional. It can be shown that this is again a Hadamard state. Thus, in globally hyperbolic spacetimes, there is an abundance of Hadamard states.
(b) In the present context, it is particularly natural to define particular Hadamard states with a concrete interpretation by ``prescribing positive frequency modes'' on suitable null surfaces. This idea goes back already to the beginnings of the subject \cite{Hawking1975}. The Hadamard property of 
such a state was first established by \cite{HollandsPhD} in the specific context of null-cones in curved space, and later independently by \cite{DappiaggiMorettiPinamonti_Unruh} in the case of the ``Unruh state'' in Schwarzschild. 

\subsubsection{``Local vacuum states'' defined on null-cones.} To motivate the constructions in RNdS, we first consider the case of a massless scalar field on Minkowski spacetime. 
There, we have of course the global vacuum state with the usual Wightman 2-point function $\langle \Phi(x_1) \Phi(x_2) \rangle_0
= \frac{1}{4\pi}\, [(x_1-x_2)^2 + i0(x_1^0-x_2^0)]^{-1}$. If we restrict attention to the causal future $J^+(0)$ of the origin $0$, we can alternatively write 
this state in terms of modes with characteristic initial data on the null boundary, $\dot J^+(0)$, of this cone. For this, we let $\psi_{\omega \ell m}$
be the mode solutions $\square \psi_{\omega \ell m}=0$ such that 
\beq
\psi_{\omega \ell m}(x) =  (2 \pi)^{-\frac{3}{2}} (2 \omega)^{-\frac{1}{2}} Y_{\ell m}(\theta, \phi) r^{-1} e^{- i \omega r}
\quad \text{on $\dot J^+(0)$.}
\eeq 
Here, $r$ is viewed as an affine parameter along the null generators of $\dot J^+(0)$. As shown in \cite{HollandsPhD}, the vacuum 2-point 
function can be written inside the future cone as 
\beq
\label{eq:statedef}
\langle \Phi(x_1) \Phi(x_2) \rangle_0 = \int_0^\infty \sum_{\ell,m} \overline{\psi_{\omega \ell m}(x_1)}\psi_{\omega \ell m}(x_2) \ud \omega. 
\eeq
This formula can be rewritten as follows. Let $\Phi(f)=\int \Phi(x) f(x) \ud \eta(x)$ be the smeared field, 
and let $f_1,f_2$ be $C^\infty$-functions supported inside the future light cone $J^+(0)$. Let $E=E^+-E^-$ be the commutator function (retarded minus advanced fundamental solution) and let $F_1 = E^-f_1 |_{\dot J^+(0)}$, with a similar formula for $F_2$. 
Then we can write alternatively, with $\Omega=(\theta,\phi)$ and $\ud^2 \Omega = \ud \theta^2 + \sin^2 \theta \ud \phi^2$,
\beq
\label{eq:Adef}
\langle \Phi(f_1) \Phi(f_2) \rangle_0 = -\frac{1}{\pi} \int_0^\infty \! \! \int_0^\infty \! \! \int_{S^2} 
\frac{r_1 r_2 F_1(r_1, \Omega) F_2(r_2, \Omega)}{(r_1-r_2-i0)^2}  \ud^2 \Omega \ud r_1 \ud r_2. 
\eeq
As argued in \cite{HollandsPhD}, and later independently in \cite{DappiaggiPinamontiPorrmann}, one can generalize these constructions to locally define states associated with lightcones in 
curved spacetime $(\sM,g)$ for a conformally coupled field $(\square-\tfrac{1}{6} R)\Phi=0$,
in the following manner. 

First, we pick a reference point $p \in \sM$, and we let $J^+(p)$ be its causal future. The boundary $\dot J^+(p)$ will in general 
not have the structure of an embedded hypersurface far away from $p$ due to caustics, but defining $\dot J^+(p) \cap \cO$, 
with $\cO$ a convex normal neighborhood of $p$ with smooth boundary, $\dot J^+(p) \cap \cO$ is diffeomorphic to a future lightcone in 
Minkowski space cut off by a spacelike plane, \cf Fig.~\ref{fig:LightCone}. We can introduce coordinates $(r, \ws)$ on $\dot J^+(p) \cap \cO$ analogous to the above as follows: Choose a smooth assignment $S^2 \ni \ws \mapsto l(\ws) \in T_p \sM$ such that each null ray in $T_p \sM$ is intersected exactly once by this mapping. Then $(r, \ws) \mapsto \exp_p( r l(\ws))$ is a diffeomorphism between a neighborhood of the origin in $\R_+ \times S^2$ and $\dot J^+(p) \cap \cO$.
Then, we can define\footnote{A minor subtlety 
is to prove that the integrals converge at the tip of the lightcone, $p$, see \cite{HollandsPhD}.} a local vacuum state by 
\beq
\label{eq:LVS}
\langle \Phi(f_1) \Phi(f_2) \rangle_p = -\frac{1}{\pi} \int_0^\infty \! \! \int_0^\infty \! \! \int_{S^2} 
\frac{r_1 r_2 (\Delta^{-\frac{1}{2}} F_1)(r_1,\Omega) (\Delta^{-\frac{1}{2}} F_2)(r_2, \Omega)}{(r_1-r_2-i0)^2}  \ud^2 \Omega \ud r_1 \ud r_2,
\eeq
for testfunctions $f_1, f_2$ supported in $J^+(p) \cap \cO$, where now 
$F_1 = Ef_1 |_{\dot J^+(p) \cap \cO}$, and similarly for $F_2$. The presence of the VanVleck determinant $\Delta(p, \ . \ )$, which is 
equal to $1$ in Minkowski spacetime, 
ensures that the definition is independent of the arbitrary choice of coordinates $(r,\Omega)$ \cite{HollandsPhD}.

\begin{figure}
\centering
\includegraphics[width=0.4\textwidth]{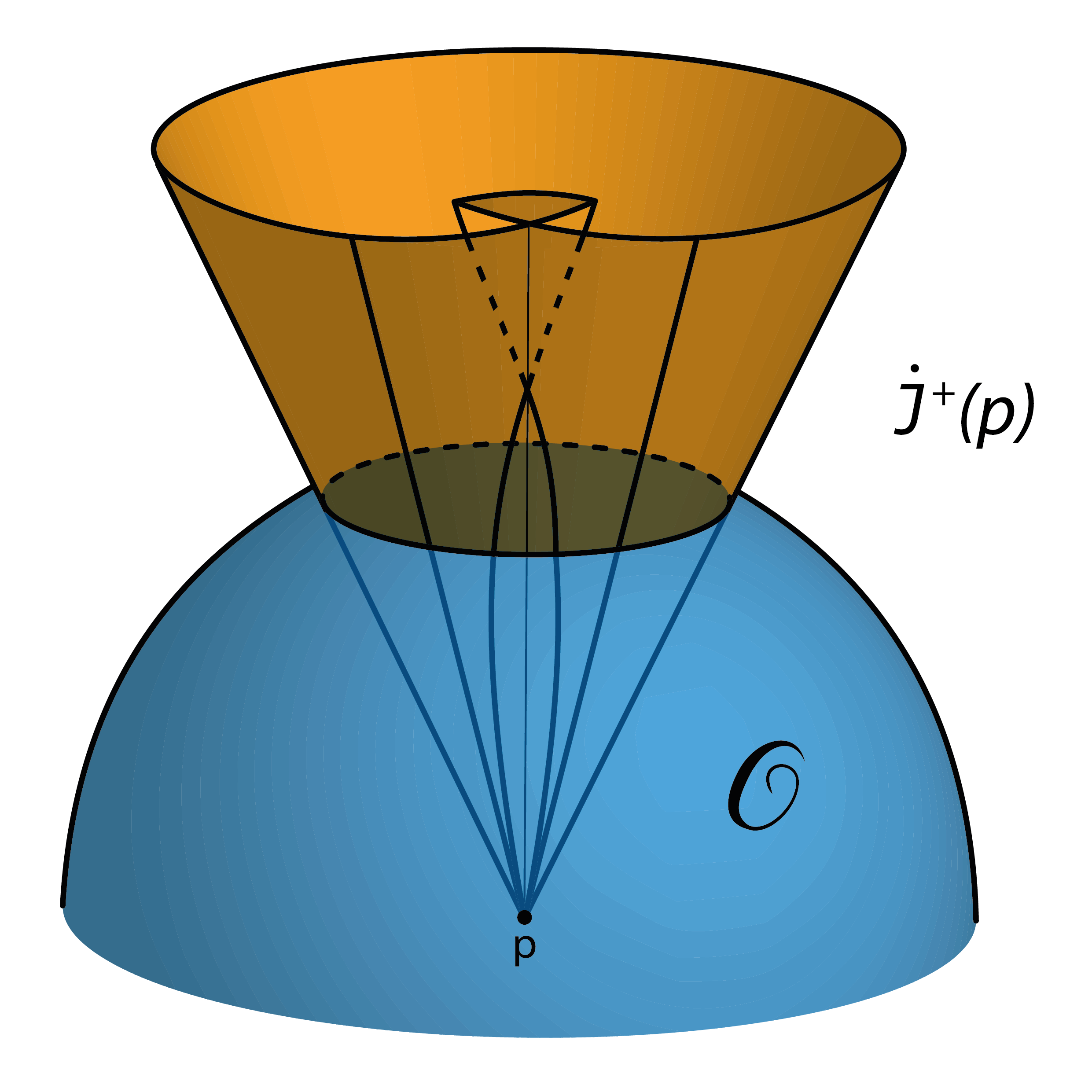}
\caption{Sketch of the of the subset $\dot J^+(p) \cap \cO$ of the light cone on which one can define a state by \eqref{eq:LVS}.}
\label{fig:LightCone}
\end{figure}

By construction, the 2-point function \eqref{eq:LVS} defines a state within $J^+(p) \cap \cO$ because it can be seen to satisfy the commutator, field equation, and 
positivity requirements \cite{HollandsPhD}. Furthermore, it can be viewed as a ``local vacuum state'', in the sense \cite{Kay2001} that its construction only depends on the local geometry within $J^+(p) \cap \cO$, but not on arbitrary choices of coordinates. But is it also Hadamard inside $J^+(p) \cap \cO$?
The answer is yes \cite{HollandsPhD}, but the proof cannot be obtained simply by checking the definition. Instead, one must use the methods of microlocal analysis. Since we will use a similar argument below for RNdS spacetime, we here present the derivation. For details on distribution theory and functional analysis, we refer to \cite{HoermanderI}. 

If $\varphi$ is a smooth function on $\R^n$, then its Fourier transform $\hat \varphi(k)$ decays faster than any inverse power $|k|^{-n}$ as $|k| \to \infty$. 
If $\varphi$ is distributional, we say that a non-zero $k \in \R^n$ is a regular direction at $x$ if there exists a smooth cutoff function $\chi$ of compact 
support such that $\chi(x) \neq 0$ and an open cone $\Gamma$ containing $k$ such that for all $N >0$ there is a constant $C_N$ for which
\beq
|\widehat{\chi \varphi}(p)| \le C_N (1 + | p |)^{-N} \quad \forall p \in \Gamma.
\eeq
The ``wave-front set'', $\WF_x(\varphi)$ of $\varphi$ at point $x$ is, loosely speaking, the set of directions $k$ such that $\chi \varphi(x)$ fails to have a 
rapidly decreasing Fourier transform. More precisely, it is the complement of the regular directions at $x$, with $k=0$ by definition never in $\WF_x(\varphi)$. It can be seen that under a diffeomorphism $\psi$ of a neighborhood of $x$, $\WF_x(\varphi \circ \psi) = \psi^* \WF_{\psi(x)}(\varphi)$. 
Thus, on an $n$-dimensional manifold $\cX$,  $\WF_x(\varphi)$ should be viewed as a dilation invariant subset of $T^*_x \cX \setminus o$, where $o$ denotes the zero section. 
One also sets $\WF(\varphi)= \bigcup_{x \in \cX} \WF_x(\varphi) \subset T^*\cX \setminus o$. In the present context, we take 
$\cX = \sN \times \sN$, where $\sN$ is some globally hyperbolic subset of a spacetime $\sM$.

By an important result of Radzikowski \cite{Radzikowski}, a state is Hadamard if and only if its 2-point correlation function 
$G_\Psi(x_1,x_2) = \langle \Phi(x_1) \Phi(x_2) \rangle_\Psi$ satisfies
\beq
\label{eq:WF_condition}
\WF'(G_\Psi) = \cC^+,
\eeq
where
\beq
 \cC^\pm \defeq \left\{ (x_1, \xi_1, x_2, \xi_2) \in T^*(\sN \times \sN) \setminus o \mid
 (x_1, \xi_1) \sim (x_2, \xi_2), \, 
\pm \xi_1 \triangleright 0 \right\}.
\eeq
Here, the notation means $\xi_1 \in T^*_{x_1} \sN$ etc. and $(x_1, \xi_1) \sim (x_2, \xi_2)$ means that $x_1$ and $x_2$ can be 
joined by a null geodesic $\gamma$ such that $\xi_1, \xi_2$ (viewed as vectors using the metric $g$)
are tangent to $\gamma$. $\xi \triangleright 0$ means that that the corresponding vector is future-pointing.
If $K(x_1,x_2)$ is a distributional kernel, the primed wave front set is defined as $\WF'(K) = \{(x_1, \xi_1, x_2, \xi_2) \mid
(x_1, \xi_1, x_2, -\xi_2) \in \WF(K)\}$. With this characterization 
one can show:

\begin{theorem}
\label{eq:LightconeHadamard}
The local vacuum state \eqref{eq:LVS} is Hadamard in $J^+(p) \cap \cO$.
\end{theorem}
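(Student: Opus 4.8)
The plan is to invoke Radzikowski's characterization \eqref{eq:WF_condition}. Since it has already been established (following \cite{HollandsPhD}) that the bidistribution $G_p(x_1,x_2)\defeq\langle\Phi(x_1)\Phi(x_2)\rangle_p$ of \eqref{eq:LVS} is well defined on $J^+(p)\cap\cO$ and satisfies the field equation, the commutator relation, and positivity --- i.e.\ is a state --- only $\WF'(G_p)=\cC^+$ remains to be shown. In fact, only the inclusion $\WF'(G_p)\subseteq\cC^+$ needs work: for the real scalar field $G_p(x_2,x_1)=\overline{G_p(x_1,x_2)}$, so $\WF'(G_p^t)=-\WF'(G_p)$; as $-\cC^+=\cC^-$ is disjoint from $\cC^+$, the commutator relation $G_p-G_p^t=iE$ combined with the standard identity $\WF'(E)=\cC^+\cup\cC^-$ forces $\cC^+\subseteq\WF'(G_p)$ once $\WF'(G_p)\subseteq\cC^+$ is known, hence equality.

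To prove $\WF'(G_p)\subseteq\cC^+$ I would realize $G_p$ as a composition of distributional kernels and apply H\"ormander's wave-front calculus. Put $\cN\defeq\dot J^+(p)\cap\cO$, with the coordinates $(r,\ws)\in\R_+\times S^2$ described above and inclusion $\iota\colon\cN\hookrightarrow\sM$; let $\rho\colon C^\infty_0(J^+(p)\cap\cO)\to\cD'(\cN)$, $\rho f\defeq\iota^*(\Delta^{-\frac{1}{2}}Ef)$, be the restriction to $\cN$ of the rescaled causal propagator applied to $f$, and let $\cK\in\cD'(\cN\times\cN)$ be the ``two-point function on the cone''
\beq
\cK=-\frac{1}{\pi}\,\frac{r_1r_2}{(r_1-r_2-i0)^2}\,\delta_{S^2}(\ws_1,\ws_2),
\eeq
so that $\langle\Phi(f_1)\Phi(f_2)\rangle_p=\cK(\rho f_1\otimes\rho f_2)$, i.e.\ $G_p$ is the pullback of $\cK$ along $\rho$ in each slot. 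The three inputs are: (i) $\WF'(E)=\{(x_1,\xi_1;x_2,\xi_2):(x_1,\xi_1)\sim(x_2,\xi_2)\}$; (ii) over a point $y\in\cN$ a covector can appear in $\WF'(\rho)$ only as $\iota^*\xi'$ for a null covector $\xi'$ cotangent at $y$ to a null geodesic meeting $\supp f$ --- which requires that restriction to the \emph{characteristic} hypersurface $\cN$ be admissible and that the behaviour at the tip $p$ be controlled, both carried out in \cite{HollandsPhD}, conformal coupling being used here to make $r\Delta^{-\frac{1}{2}}$ times a solution propagate freely along the generators of $\cN$; and (iii) $\WF'(\cK)$, which is one-sided along the generator direction owing to the $i0$ in $(r_1-r_2-i0)^{-2}$ --- this one-sidedness is exactly what produces $\cC^+$ rather than all of $\cC^+\cup\cC^-$ --- tensored with the conormal of the diagonal of $S^2\times S^2$ coming from $\delta_{S^2}$.

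Composing these three wave-front relations (after checking the transversality and non-vanishing conditions on the intermediate $\cN$-legs, again as in \cite{HollandsPhD}) gives the geometric picture: a singular covector $(x_1,\xi_1;x_2,\xi_2)$ of $G_p$ must be generated by a null geodesic from $x_1$ hitting $\cN$ at some $y_1$ and one from $x_2$ hitting $\cN$ at some $y_2$, with $y_1,y_2$ on a common generator of $\dot J^+(p)$ (forced by $\delta_{S^2}$) and the $\cN$-leg covector at $y_1$ future-directed along that generator (forced by the $i0$-prescription). Since $y_1,y_2$ and $p$ lie on one null geodesic and $x_1,x_2$ are null-related to it inside the convex normal neighbourhood $\cO$, this confines $x_1,x_2$ to a single null geodesic and, tracking the time orientation through the construction, makes $\xi_1$ future-pointing, so $(x_1,\xi_1;x_2,\xi_2)\in\cC^+$. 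Together with the reduction of the first paragraph this proves the theorem.

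The hard part will be steps (ii)--(iii) and their composition: showing that restricting to the \emph{null} hypersurface $\cN$ yields a distribution with the asserted wave front set --- in particular near the tip $p$, where the coordinates $(r,\ws)$ degenerate --- verifying the hypotheses needed to legitimately compose the three wave-front relations, and carefully propagating the $i0$-prescriptions and the time orientation so that the outcome is exactly $\cC^+$ and not $\cC^-$ or the full set $\cC^+\cup\cC^-$. These are precisely the places where simply checking the definition does not suffice and the microlocal machinery is indispensable.
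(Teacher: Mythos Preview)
Your proposal is correct and follows essentially the same approach as the paper: both realize $G_p$ as the composition of the causal propagator $E$, a restriction map to the cone $\dot J^+(p)\cap\cO$, and the one-sided kernel $\delta_{S^2}(\ws_1,\ws_2)(r_1-r_2-i0)^{-2}$, then bound $\WF'(G_p)\subseteq\cC^+$ by H\"ormander's composition calculus (thm.~8.2.14 of \cite{HoermanderI}), and finally upgrade to equality via the commutator $G_p-G_p^t=iE$ together with $\WF'(E)=\cC^+\cup\cC^-$ and $\cC^+\cap\cC^-=\emptyset$. The only cosmetic difference is that the paper keeps $E$, the restriction $R$, and the cone kernel $A$ as three separate operators, whereas you bundle $E$ and the pullback into a single map $\rho$; and the paper obtains $\WF'(G_p^t)\subset\cC^-$ from the transposition identity $\WF(G_p^t)=\WF(G_p)^t$, while you use hermiticity $G_p^t=\overline{G_p}$ --- both routes are equivalent.
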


\begin{proof}
The full proof is given in \cite{HollandsPhD}.
Apart from a minor subtlety arising from tip of the lightcone at $p$ $(r=0)$,
 it uses only straightforward techniques from microlocal analysis. The ingredients are as follows.
\begin{itemize}
\item The wave front set of the Schwartz kernel $K_E$ of the commutator function 
satisfies $\WF'(K_E) = \cC^+ \cup \cC^-$; this is a restatement of the ``propagation of singularities theorem'' \cite{DuistermaatHoermanderII}.
\item The wave front set of the distribution kernel $K_A(r_1, \ws_1, r_2, \ws_2)=\delta(\ws_1, \ws_2)/(r_1-r_2-i0)^2$ satisfies 
$\WF'(K_A) = \{(\ws, r, \xi_\ws, \xi_r; \ws, r, \xi_\ws, \xi_r) \mid \xi_r>0\}$ away from the tip $r=0$ by direct computation.
So, microlocally, the corresponding operator $A$ ``removes the negative frequencies''.
\item The kernel $K_R$ of the restriction operator $RF=F|_{\dot J^+(p) \cap \cO}$ from $C^\infty$-functions on 
$\cO$ to $C^\infty$-functions on $\dot J^+(p) \cap \cO$, 
parameterized by $(r,\ws)$ via a diffeomorphism $\psi: (0,r_0) \times S^2 \to \dot J^+(p) \cap \cO \setminus \{p\}$ away from the tip
satisfies:
\beq
\WF'(K_R) \subset \left\{ (r, \ws, \xi_r, \xi_\ws; y, \eta) \mid y = \psi(r, \ws), \text{$(\ud \psi)^t(r, \ws)\eta = (\xi_r, \xi_\ws)$ if $r>0$} \right\} 
\eeq
by the restriction theorem (thm.~8.2.4 of \cite{HoermanderI}).
\end{itemize}
Looking at the definition of the local vacuum state, we see that the 2-point function is a composition of the kernels of $E,A,R$. 
These results are then combined with a standard result about the composition of distributional kernels. Let $B: \cD(\cX_1) \rightarrow \cD'(\cX_2)$ and 
$A: \cD(\cX_2) \rightarrow \cD'(\cX_3)$ be linear continuous maps. 
By the Schwartz Kernel theorem these correspond to 
distribution kernels $K_B \in \cD'(\cX_2 \times \cX_1)$ 
and $K_A \in \cD'(\cX_3 \times \cX_2)$.
If $K \in \cD'(\cX_1 \times \cX_2)$, one defines 
\beq
\WF(K)_{\cX_2} \defeq \{ (x_2, \xi_2) \mid (x_1, 0; x_2, \xi_2)
\in \WF(K)\}.
\eeq
Then one has (thm.~8.2.14 of \cite{HoermanderI}):
If $K_B$ has proper support, and 
\beq
\WF'(K_A)_{\cX_2} \cap \WF'(K_B)_{\cX_2} = \emptyset
\eeq
then the composition $A \circ B$ is well defined and
\begin{eqnarray}
\WF'(K_{A\circ B}) &\subset& \WF'(K_A) \circ \WF'(K_B) \cup \\ 
&& ((\cX_1 \times \{0\}) \times \WF(K_B)_{\cX_3}) 
\cup (\WF(K_A)_{\cX_1} \times (\cX_3 \times \{0\})), \nn
\end{eqnarray} 
where 
\begin{eqnarray}
\WF'(K_A) \circ \WF'(K_B) & \defeq & 
\{(x_3, \xi_3, x_1, \xi_1) \mid \text{there exist $\xi_2 \neq 0$ and $x_2$ s.t.} \\
&& \text{$(x_2, \xi_2, x_1, \xi_1) \in \WF'(K_B)$ and 
$(x_3, \xi_3, x_2, \xi_2) \in \WF'(K_A)$} \}. \nn 
\end{eqnarray}

Combining this information, one then finds that $\WF'(G_p)$ is contained in the set $\cC^+$. The tip of the lightcone in 
effect plays no role for this argument, as backward null geodesics starting in the interior of $J^+(p) \cap \cO$ never 
reach it\footnote{The tip of the lightcone only needs to be examined a little more carefully for the proof that 
the commutator property holds, and plays a role similar to $i^-$ in the subsequent construction in RNdS.}. The 
theorem is proven if we can show that both sets are in fact equal. 
To see this, introduce $G_p^t(f_1, f_2) = G_p(f_2, f_1)$.
Then, by the definition of the wave front set, 
\begin{eqnarray*}
\WF(G_p^t) = \WF(G_p)^t \defeq \{
(x_1, \xi_1, x_2, \xi_2) \mid (x_2, \xi_2, x_1, \xi_1) \in \WF(G_p) \}, 
\end{eqnarray*}
from which it follows that $\WF'(G_p^t) \subset \cC^{-}$. 
Therefore, by the commutator property of the 2-point function,
\begin{eqnarray*}
\cC^+ \cup \cC^- = \WF'(E) = \WF'(G_p-G_p^t) \subset 
\WF'(G_p) \cup \WF'(G_p^t) \subset \cC^+ \cup \cC^-,
\end{eqnarray*}
and the above inclusions must in fact be equalities.
Since $\cC^+ \cap \cC^- = \emptyset$, this means that
in fact $\WF'(G_p) = \cC^+$, which is the statement of 
the theorem.
\end{proof}  

\subsubsection{The ``Unruh'' and a ``comparison'' state in RNdS.}
\label{sec:UnruhFinal}
We now define two states by a similar construction in RNdS which play a role for the subsequent argument. 

\medskip
\noindent
{\bf Unruh state:} The first is an analogue 
of the ``Unruh state'' \cite{Unruh:1976db} in Schwarzschild. We first define (see fig.~\ref{fig:1}) 
\begin{align}
\label{eq:def:H_c_H}
 \dSH & = \dSH^- \cup \dSH^R, &
 \EH & = \EH^- \cup \EH^L,
\end{align}
so $\dSH \cup \EH$ is formally, i.e.\ apart from the ``point'' $i^-$ at infinity, a characteristic surface for the union $\sN$ 
of the regions $\rI$, $\rII$, $\rIII$, where the Unruh state will be defined. Heuristically, the Unruh state is specified as in \eqref{eq:statedef}
by the ``positive frequency modes ($k \ge 0$)'' $\psi_{k\ell m}^{\rU, \rin/\rup}$, which are solutions $(\square-\mu^2) \psi_{k\ell m}^{\rU, \rin/\rup}=0$ defined on $\sN$ in terms of their initial data
 on $\dSH \cup \EH$:\footnote{This is to be understood in the sense of wave packets, \cf sec.~\ref{sec:modes1} for how to translate this into asymptotic data of the corresponding radial functions.}
\begin{subequations}
\label{eq:Psi_MU_4d}
\begin{align}
\label{eq:Psi_MU_4d_in}
\psi^{\rU, \rin}_{k \ell m} (x) & = 
\begin{cases} 
(2 \pi)^{-\frac{3}{2}} (2 k)^{-\frac{1}{2}} Y_{\ell m}(\theta, \phi) r_\co^{-1} e^{- i k V} 
& \text{ on } \dSH, \\ 
 0
 & \text{ on } \EH, 
 \end{cases} \\
\label{eq:Psi_MU_4d_up}
\psi^{\rU, \rup}_{k \ell m} (x) & = 
\begin{cases} 
0 
& \text{ on } \dSH, \\ 
 (2 \pi)^{-\frac{3}{2}} (2 k)^{-\frac{1}{2}} Y_{\ell m}(\theta, \phi) r_\ev^{-1} e^{- i k U} 
 & \text{ on } \EH, 
 \end{cases}
\end{align}
\end{subequations}
where $V=V_\co$ is the Kruskal type (affine) coordinate adapted to the cosmological horizon, and where $U$ is the 
Kruskal type (affine) coordinate adapted to the event horizon; see sec.~\ref{sec:2}.

In order to see that the Unruh state is well-defined, and to understand some of its basic properties, 
we formally rewrite the definition \eqref{eq:statedef} (with the modes $\psi_{k\ell m}^{\rU, \rin/\rup}$) in a similar way as in  \eqref{eq:LVS}, 
noting that the VanVleck determinant vanishes on a Killing horizon. 
Let  $f_1, f_2 \in C^\infty_0(\sN)$, and define for $X \in \{ \ev, \co \}$ the restrictions of the corresponding solutions, 
$F_1^X=E^-f_1 |_{\cH_X}, F_2^X=E^-f_2 |_{\cH_X}$. Then the precise definition of the Unruh state is:

\begin{definition}
The Unruh state is defined by the 2-point function on $\sN \times \sN$ given by:
\beq
\label{eq:US}
\begin{split}
\langle \Phi(f_1) \Phi(f_2) \rangle_{\rU} =& -\frac{1}{\pi} \int_\R \int_\R \int_{S^2} 
\frac{F_1^\ev(U_1,\Omega) F_2^\ev(U_2, \Omega)}{(U_1-U_2-i0)^2}  \ud^2 \Omega \ud U_1 \ud U_2 \\
&-\frac{1}{\pi} \int_\R \int_\R \int_{S^2} 
\frac{F_1^\co(V_1,\Omega) F_2^\co(V_2, \Omega)}{(V_1-V_2-i0)^2}  \ud^2 \Omega \ud V_1 \ud V_2
\end{split}
\eeq
where null-generators of the event and cosmological horizon are parameterized by $U$ resp.\ $V$, 
and $\ud^2 \Omega$ is the induced integration element on the spheres 
of constant $V$ resp.\ $U$. 
\end{definition}

This definition is actually still formal as it stands, because the convergence of the integrals over $U$ and $V$ 
has not been ensured. Also, one needs to show the commutator, field equation, and positivity properties S1) -- S3), and one would also like to show that the Unruh state is Hadamard in the union $\sN$ of the regions $\rI$, $\rII$, $\rIII$ where it is defined. These properties have been proven in the case of Schwarzschild spacetime ($\Lambda=Q=0$) by 
\cite{DappiaggiMorettiPinamonti_Unruh}. A similar proof can be given in the present context, taking into account the features of the dynamical evolution special to RNdS.

First, the positivity property S3) is obvious -- assuming \eqref{eq:US} is well defined -- because $-(x-i0)^{-2}$ is a kernel of positive type (its Fourier
transform is non-negative). The field equation S2) is also obvious, because the commutator function $E$ is a bi-solution. 
For a Cauchy surface $\Sigma$ as in fig.~\ref{fig:1}, $E$ can be written in the form 
\cite{Dimock1980}:
\beq
\label{eq:Erel}
E(f_1,f_2) = \int_\Sigma n^\mu (F_1 \nabla_\mu F_2 - F_2 \nabla_\mu F_1) \ud \eta_\Sigma,  
\eeq
with $F_1 = Ef_1, F_2=Ef_2$. Since $(\square-\mu^2) F_1=(\square -\mu^2)F_2=0$, Gauss' theorem shows that 
the expression is independent of the chosen Cauchy surface for $\sN$. Now, if we could formally deform $\Sigma$ to 
the pair of null surfaces $\dSH \cup \EH$, then we would have, formally 
\beq
\label{eq:com}
E(f_1,f_2) = \int_\R \int_{S^2}  (F_1^\ev \partial_U F_2^\ev - F_2^\ev \partial_U F_1^\ev) \ud^2 \Omega \ud U +
\int_\R \int_{S^2}  (F_1^\co \partial_V F_2^\co - F_2^\co \partial_V F_1^\co) \ud^2 \Omega \ud V . 
\eeq
Although this argument is not rigorous because it ignores a potential contribution from $i^-$, let us proceed for the moment and assume that the integrals 
in \eqref{eq:US} converge in a suitable sense. Then using the formula $\Im (x-i0)^{-2} = - \pi \delta'(x)$, on both terms, one formally gets
\beq
\label{eq:com1}
\begin{split}
&\langle \Phi(f_1) \Phi(f_2) \rangle_\rU - \langle \Phi(f_2) \Phi(f_1) \rangle_\rU = 2i \Im \langle \Phi(f_1) \Phi(f_2) \rangle_\rU \\
=& i\int_\R \int_{S^2}  (F_1^\ev \partial_U F_2^\ev - F_2^\ev \partial_U F_1^\ev) \ud^2 \Omega \ud U
+i\int_\R \int_{S^2}  (F_1^\co \partial_V F_2^\co - F_2^\co \partial_V F_1^\co) \ud^2 \Omega \ud V .
\end{split}
\eeq
Combining with the previous equation, this would show the commutator property S1). 

In order to make this argument rigorous, we should first check 
\eqref{eq:com}, and for that, we need the asymptotic behavior of $F_1, F_2$ near $i^-$.
This asymptotic behavior has been analyzed by Hintz and Vasy \cite{HintzVasy} and is best expressed 
resolving $i^-$ by a blow up procedure sketched in sec.~\ref{sec:2}. The results of \cite{HintzVasy} express the regularity of the forward solution $F=E^+ f$ near $i^+$ in terms of certain ``variable order'' Sobolev spaces. By time reflection symmetry, these analogously hold for $F = E^- f$ near $i^-$. The variable orders of (fractional)
differentiability express the different regularity properties of the solution near the various horizons, i.e.\ the different locations of $r$, see app.~A of \cite{HintzVasy}
for the precise definitions. The variable forward order functions $s(r)$ considered by these authors are such that $s(r)$ is constant for $r < r_\ca + \epsilon$ and $r > r_\ca + 2 \epsilon$, with
\beq
s(r)
\begin{cases}
< 1/2 + \alpha/\kappa_\ca & \text{for $r<r_\ca + \epsilon$}\\
\ge 1/2 + {\rm max}\{\alpha/\kappa_\co, \alpha/\kappa_\ca\} &   \text{for $r > r_\ca + 2\epsilon$}
\end{cases}
\eeq 
and $s'(r)\ge 0$. The parameter $\alpha$ of main interest is the spectral gap, but forward order functions can in principle be considered for any $\alpha \in \R$.
Based on a forward order function, a certain Sobolev space  $H^s_{\rb, +}(\cR)$ of functions supported towards the future of the Cauchy surface $\Sigma$, 
is then defined, see app.~B of \cite{HintzVasy} for details, and fig.~\ref{fig:BlowUp} for the definition of the domain $\cR$. Further, \cite{HintzVasy} define the spaces $H^{s,\alpha}_{\rb, +}(\cR) = \tau^\alpha H^s_{\rb, +}(\cR)$ of functions
with an exponential decay towards $t_* \to \infty$, i.e.\ towards $i^+$, see sec.~\ref{sec:2} for the definition of $t_* = - \log \tau$. Their main result can be phrased as follows:

\begin{theorem}\label{thm:HV}
If\footnote{The restriction to $\mu^2 > 0$ is made here simply to avoid complications arising from constant solutions in the case $\mu^2 = 0$.} $\mu^2 > 0$ and $f \in C^\infty_0(\sN)$, then $F=E^+ f \in H^{s,\alpha}_{\rb, +}(\cR)$ for any forward order function of weight $\alpha < \min (\kappa_\ca, \inf \{ - \Im(\omega_{n \ell m}) \})$, and furthermore
$\| F \|_{H^{s,\alpha}_{\rb, +}(\cR)} \lesssim \| f \|_{C^m(\sN)} $ for some $m$ depending on $s$.  
\end{theorem}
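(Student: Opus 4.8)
The plan is to reconstruct the argument of \cite{HintzVasy}, which follows Vasy's approach \cite{Vasy2013}: reduce the global decay-and-regularity statement to a \emph{stationary} (frequency-space) Fredholm problem plus a contour-deformation argument. The spacetime carries the Killing field $\partial_t$, which in the blown-up picture of fig.~\ref{fig:BlowUp} becomes $\partial_{t_*}$ with $\tau = e^{-t_*}$ the defining function of the front face over $i^+$. The Mellin transform in $\tau$ turns $\square - \mu^2$ into a holomorphic family $\widehat P(\omega)$ of second-order operators on the spatial slice $(r_\ca - \epsilon, r_\co + \epsilon)_r \times S^2$ (extended slightly past $\CH^R$ and past $\dSH^L$), and the forward solution is recovered, for $\Im\omega = C$ with $C$ large, by the inverse transform $F = \frac{1}{2\pi}\int_{\Im\omega = C} \widehat P(\omega)^{-1}\widehat f(\omega)\, \tau^{i\omega}\, d\omega$, which converges by crude energy estimates (this already encodes that $F$ is supported to the future of $\Sigma$). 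The claim $F \in H^{s,\alpha}_{\rb,+}(\cR)$ is then equivalent to pushing this contour down to $\Im\omega = -\alpha + \epsilon$ with no poles crossed and uniform bounds as $|\Re\omega|\to\infty$.

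The heart of the matter is to show that $\widehat P(\omega)$ is an analytic Fredholm family, invertible except at the quasinormal frequencies $\omega_{n\ell m}$, on \emph{variable-order} b-Sobolev spaces whose order function is exactly the $s(r)$ of the statement. The variable order is forced by microlocal radial-point estimates: the characteristic set of $\widehat P(\omega)$ has radial sources/sinks over the event, cosmological and Cauchy horizons (plus an artificial horizon placed just beyond $\CH^R$, as in \cite{HintzVasy}, to make the problem non-trapping at the spatial boundary), and the threshold exponent at a horizon of surface gravity $\kappa$ is $\tfrac12 + \alpha/\kappa$; this explains both $s(r) < \tfrac12 + \alpha/\kappa_\ca$ for $r < r_\ca + \epsilon$ and the complementary inequality $s(r)\ge \tfrac12 + \max\{\alpha/\kappa_\co,\alpha/\kappa_\ca\}$ for $r > r_\ca + 2\epsilon$, while the monotonicity $s'(r)\ge 0$ makes the radial-point and real-principal-type propagation statements compatible with the direction in which estimates can be propagated. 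Away from the horizons the interior is handled by elliptic regularity and propagation of singularities; the genuinely hard input is the high-frequency (semiclassical) regime $|\Re\omega|\to\infty$, where one must analyze the trapped set --- the photon sphere of region $\rI$ --- verify that the trapping is normally hyperbolic, and invoke the resolvent estimates of \cite{WunschZworski2011,Dyatlov2011}, which then give at worst a polynomial bound for $\|\widehat P(\omega)^{-1}\|$ in the strip $-\alpha+\epsilon \le \Im\omega \le C$. Assembling this global semiclassical estimate from the microlocal pieces at all the horizons together with the trapping estimate, with weights tracking the $\tau^\alpha$ decay, is where essentially all the work lies; I expect this to be the main obstacle.

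Given invertibility of $\widehat P(\omega)$ on $\Im\omega = -\alpha+\epsilon$ --- which holds because $\alpha < \inf\{-\Im(\omega_{n\ell m})\}$ keeps all quasinormal poles out of the closed strip, while $\alpha < \kappa_\ca$ stays within the range where the meromorphic continuation and the adapted function spaces are defined --- I would deform the contour from $\Im\omega = C$ down to $\Im\omega = -\alpha+\epsilon$, picking up no residues, to obtain
\[
F = \frac{1}{2\pi}\int_{\Im\omega = -\alpha+\epsilon} \widehat P(\omega)^{-1}\widehat f(\omega)\, \tau^{i\omega}\, d\omega \ \in\ \tau^{\alpha-\epsilon}H^s_{\rb,+}(\cR) = H^{s,\alpha-\epsilon}_{\rb,+}(\cR).
\]
Membership and the bound $\|F\|_{H^{s,\alpha}_{\rb,+}(\cR)} \lesssim \|f\|_{C^m(\sN)}$ then follow from the uniform operator estimates above combined with the Paley--Wiener/Plancherel theory of the Mellin transform on b-Sobolev spaces; the loss of finitely many $C^m$-derivatives of $f$ on the right (rather than an $L^2$-type norm) comes from converting b-Sobolev control of $\widehat f(\omega)$ into a crude pointwise-in-$\omega$ bound and from the polynomial high-frequency loss in the trapping estimate. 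Since $\epsilon>0$ is arbitrary this is exactly the assertion, and the residual causality/forward-support bookkeeping is routine once the machinery is in place.
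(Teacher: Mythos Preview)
Your proposal is correct and follows the same Hintz--Vasy machinery the paper invokes; both routes are ultimately the Vasy method of \cite{Vasy2013,HintzVasy}. The organizational difference is that the paper first quotes the \emph{spacetime} Fredholm statement for the modified operator $P = \square - \mu^2 - iQ$ (with complex absorption $Q$ supported in $r<r_\ca$, on top of an extension of the metric past the singularity) at a small negative weight $\tilde\alpha$, then improves regularity by propagation of singularities, and only afterward runs the Mellin/resonance-expansion argument to push the decay weight down to $\alpha$; you instead go straight to the Mellin transform and the stationary family $\widehat P(\omega)$. Both orderings are standard and equivalent in content.

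Two points you gloss over that the paper does address. First, the poles of $\widehat P(\omega)^{-1}$ are \emph{not} literally the quasinormal frequencies of region~$\rI$: the complex absorption (or your artificial horizon) introduces additional resonances supported in the unphysical extension $r<r_\ca$. The paper handles this by citing lemma~2.14 of \cite{HintzVasy}, which shows that any resonance which is not a genuine QNM has resonant state vanishing for $r>r_\ca$, so contributes nothing to $F$ in the physical region; your contour-deformation step implicitly needs this. Second, you do not use the hypothesis $\mu^2>0$. The paper invokes it to exclude quasinormal frequencies on the real axis (a short Wronskian argument for the radial equation), which is what allows $\alpha$ to be taken all the way up to the spectral gap rather than being obstructed by a resonance at $\omega=0$ --- the issue that does arise for $\mu^2=0$ via constant solutions.
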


\begin{proof}
The proof of this theorem is essentially contained in that of prop.~2.17 of \cite{HintzVasy}. Here we only somewhat pedantically go through the norm estimates provided by \cite{HintzVasy}.
\cite{HintzVasy} first modify the wave operator to $P = \square - \mu^2 -iQ$, where
$Q$ is a suitable
formally self-adjoint pseudo-differential operator having its support in the region $r<r_\ca$ inside the black hole behind the Cauchy horizon $\CH^L$. The
construction of $Q$ is roughly made in such a way that singularities propagating into
the interior of the black hole get ``absorbed'' by $Q$. Then they show (thm.~2.13)
that with a suitable choice of $-\epsilon<\tilde \alpha<0$ and of $Q$, the map $P$ is a Fredholm operator between their weighted ``forward'' Sobolev spaces $H^{\tilde s,\tilde \alpha}_{\rb,+}(\cR)
\to H^{\tilde s-1,\tilde \alpha}_{\rb,+}(\cR)$, with $\tilde s$ a forward order function for
$\tilde \alpha$, and with $\cR$ a domain of the shape indicated in fig.~\ref{fig:BlowUp}. More precisely, $\cR$ should be extended to the left beyond the Cauchy horizon $r<r_\ca$
by effectively modifying the metric function $f(r)$ such that the singularity at $r=0$ gets replaced with the origin of polar coordinates or another horizon. By the standard theory of Fredholm operators, the inverse $P^{-1}$ is defined as a bounded operator on a subspace of $H^{\tilde s-1,\tilde \alpha}_{\rb,+}(\cR)$ of finite co-dimension. In fact, it is shown in \cite{HintzVasy}, lemma 2.15 that the projector onto this subspace may be chosen as $Rf=f-\sum_{i=1}^N \phi_i v_i(f)$, with $\phi_i$ smooth and supported in $r<r_\ca$ and $v_i$ distributions. Thus, if $f \in C^\infty_0(\sM)$, it follows that the forward
solution of $PF = f$ has
\beq
\| F \|_{H^{\tilde s,\tilde \alpha}_{\rb,+}(\cR)} \lesssim \| f \|_{C^m(\sM)}
\eeq
for a sufficiently large $m$ depending on the choice of $\tilde s$.
Then, by the propagation of singularities theorem, prop.~2.9 of \cite{HintzVasy}, one has
\beq
\| F \|_{H^{s,\tilde \alpha}_{\rb,+}(\cR)} \lesssim \| F \|_{H^{\tilde s,\tilde \alpha}_{\rb,+}(\cR)}
+ \| P F \|_{H^{\tilde s,\tilde \alpha}_{\rb,+}(\cR)} \lesssim \|f\|_{C^m(\sM)} ,
\eeq
with $s\le \tilde s +1$ a forward order function for any weight $\alpha>0$ such that
$\alpha/\kappa_\ca <1$. By \cite{HintzVasy}, sec.~2 and \cite{Vasy2013}, lemma~3.1 and remark~3.4,
(essentially a Fourier-Laplace-transform argument in the variable $t_*=-\log \tau$), $F$
can be developed in an asymptotic expansion
\beq
 F(x,t_*) = \sum_j t_*^{m_j} e^{\sigma_j t_*} a_j(x) + F'(x,t_*)
\eeq
with $-\alpha<\sigma_j<-\tilde \alpha$ the resonances of the Fourier-Laplace-transformed operator
$\widehat P(\sigma)$, $m_j$ their multiplicity, and with a norm bound
\beq
\| F' \|_{H^{s,\alpha}_{\rb,+}(\cR)} \lesssim \|f\|_{C^m(\sM)}.
\eeq
By \cite{HintzVasy}, lemma~2.14, $F' = F$ in the region $r>r_\ca$ if $\alpha$ is chosen so that there are no quasi-normal frequencies for the scattering problem in region I with $- \tilde \alpha > \Im(\omega_{n\ell m}) > - \alpha$ -- in fact they argue that any resonance of $\widehat P(\sigma)$ not equal to such a quasi-normal mode\footnote{Note that $e^{\sigma t_*} a(x)$ satisfies QNM boundary conditions for $\sigma < 0$ due to the definition of $t_*$ in sec.~\ref{sec:2}.} has $a_j(x)=0$ for $r>r_\ca$.

In the case at hand, we have $\mu^2>0$. It follows that there can be no such quasi-normal frequencies on the real line (unlike for the massless wave operator considered by \cite{HintzVasy}, who have to consider separately the constant mode), as one can see by a simple argument involving the Wronskian of the radial equation, \cf \eqref{eq:psi_omega_ell_wave_eqn} below, for real $\omega$. Thus, we may put $\alpha$ to be the value \eqref{eq:aldef}.
\end{proof}

These results will now be used in order to investigate the properties of the Unruh state. As already mentioned, thm.~\ref{thm:HV} analogously holds for the behavior of the backward solution $F = E^- f$ near $i^-$. We are interested in its behavior in region $\rI$. But, as no lower bound is imposed on the forward order function $s(r)$ in the region $r_\ev \le r \le r_\co$, using Sobolev embedding, there follows a bound
\beq
\label{eq:estimate}
| \partial^N F(t_*,r,\ws) | \lesssim e^{\alpha t_*}  \| f \|_{C^m(\sN)} \quad \text{for $r_\ev \le r \le r_\co$,}
\eeq
where $\partial^N$ is a derivative in $(t_*,r,\ws)$ of order $N>0$ and $m$ depends only on $N$. By choosing in \eqref{eq:Erel} a sequence of Cauchy surfaces as depicted in fig.~\ref{fig:3} moving downwards towards the horizontal boundary, it immediately follows by writing \eqref{eq:Erel} out in the coordinates $(t_*, r, \ws)$ that in the limit only the contributions from the two vertical 
boundaries $\EH^-, \dSH^-$ contribute near $i^-$, but not the interpolating surface depicted in green. Hence, in this limit, \eqref{eq:com} holds true. 

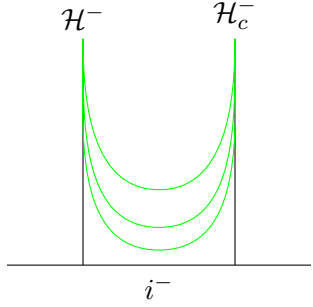
\begin{figure}
\centering
\begin{tikzpicture}[scale=1]
 \draw (-1,0) -- (3,0) node[midway,below]{$i^-$};
 \draw (0,0) -- (0,3) node[above]{$\EH^-$};
 \draw (2,0) -- (2,3) node[above]{$\dSH^-$};
 \draw[green] (0,3) to[out=270, in=180] (1,1);
 \draw[green] (2,3) to[out=270, in=0] (1,1);
 \draw[green] (0,3) to[out=270, in=180] (1,0.5);
 \draw[green] (2,3) to[out=270, in=0] (1,0.5);
 \draw[green] (0,3) to[out=270, in=180] (1,0.2);
 \draw[green] (2,3) to[out=270, in=0] (1,0.2);
\end{tikzpicture}
\caption{Sketch of the ``blow-up'' of $i^-$ and of a sequence of Cauchy surfaces as described in the text.}
\label{fig:3}
\end{figure}

We may also see immediately that the integrals in \eqref{eq:US} converge when $f_1, f_2$ are functions 
in $C^\infty_0(\sN)$, making the Unruh state actually well defined. For this, we note that 
$e^{\alpha t_*} \sim |U|^{-\alpha/\kappa_\ev}$ for $U \to -\infty$ on $\EH$ and $e^{\alpha t_*} \sim |V|^{-\alpha/\kappa_\co}$ for $V \to -\infty$ on $\dSH$. 
In combination with \eqref{eq:estimate}, this easily implies convergence of the integrals \eqref{eq:com1}. We therefore conclude:

\begin{proposition}
The Unruh state is well-defined on the algebra $\cA(\sN), \sN= \rI \cup \rII \cup \rIII$ generated by $\Phi(f)$ with $f \in C^\infty_0(\sN)$. 
\end{proposition}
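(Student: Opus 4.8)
The plan is to show that the two double integrals in \eqref{eq:US} converge absolutely when $f_1, f_2 \in C^\infty_0(\sN)$, which is the only thing left to check since positivity S3), the field equation S2), and the commutator relation S1) have already been established (the latter via the Cauchy-surface deformation argument leading to \eqref{eq:com} and \eqref{eq:com1}). Concretely, I would treat the cosmological-horizon term
\beq
-\frac{1}{\pi}\int_\R\int_\R\int_{S^2}\frac{F_1^\co(V_1,\Omega)F_2^\co(V_2,\Omega)}{(V_1-V_2-i0)^2}\,\ud^2\Omega\,\ud V_1\,\ud V_2
\eeq
and the event-horizon term separately but symmetrically, using the decay estimate \eqref{eq:estimate} for $F=E^-f$ in the region $r_\ev\le r\le r_\co$ together with the translation of the $t_*$-exponential decay into power-law decay in the Kruskal coordinates.

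The key steps, in order: First I would recall that $F_i = E^- f_i$ is smooth on $\sN$ (since $E^\pm$ map $C^\infty_0$ to $C^\infty$), so the only issue is the behavior of the restrictions $F_i^\co = F_i|_{\dSH}$ and $F_i^\ev = F_i|_{\EH}$ as the affine parameters $V\to-\infty$, respectively $U\to-\infty$, i.e.\ near $i^-$; near the bifurcation surfaces and for bounded $V, U$ everything is smooth and compactly controlled because $\supp f_i$ is compact. Second, I would invoke Theorem~\ref{thm:HV} (in its time-reflected form near $i^-$) and the Sobolev-embedding consequence \eqref{eq:estimate}, namely $|\partial^N F(t_*,r,\ws)|\lesssim e^{\alpha t_*}\|f\|_{C^m(\sN)}$ uniformly for $r_\ev\le r\le r_\co$, with $\alpha>0$ the spectral gap (here one uses crucially that no lower bound on $s(r)$ is imposed in the corridor $[r_\ev,r_\co]$, so the embedding is available and the decay rate is the full $\alpha$). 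Third, I would use the coordinate relations: on $\dSH$ one has $V=V_\co=-e^{-\kappa_\co v}$ and, since $t_*$ agrees with an affine combination of $v$ along the cosmological horizon, $e^{\alpha t_*}\sim|V|^{-\alpha/\kappa_\co}$ as $V\to-\infty$; likewise on $\EH$ one has $U=-e^{-\kappa_\ev u}$ and $e^{\alpha t_*}\sim|U|^{-\alpha/\kappa_\ev}$ as $U\to-\infty$. Hence $F_i^\co(V,\Omega)=O(|V|^{-\alpha/\kappa_\co})$ and $\partial_V F_i^\co = O(|V|^{-1-\alpha/\kappa_\co})$, with analogous statements on $\EH$, all uniformly in $\Omega\in S^2$.

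Fourth, I would estimate the singular kernel $(V_1-V_2-i0)^{-2}$: it is locally integrable away from the diagonal, and near the diagonal it should be paired against the smooth, rapidly decaying profile $F_1^\co, F_2^\co$; the standard way is to write $(x-i0)^{-2}=\tfrac{d}{dx}(x-i0)^{-1}$ and integrate by parts once in, say, $V_1$, trading the kernel singularity for the bounded, integrable-decay factor $\partial_{V_1}F_1^\co$ against $(V_1-V_2-i0)^{-1}$, whose imaginary and principal parts are controlled; alternatively one simply notes that $\iint |V_1|^{-a}|V_2|^{-a}|V_1-V_2|^{-2}\,\ud V_1\,\ud V_2$ converges once one does the single integration by parts, because $a=\alpha/\kappa_\co>0$ gives decay at $V\to-\infty$ and the $C^\infty_0$ support of $f_i$ gives a cutoff near $V=0$ together with absence of contributions from $V>0$. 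Summing the $S^2$ integral (compact, harmless) then gives a finite answer, and the same argument applied verbatim to the event-horizon term, with $\kappa_\co$ replaced by $\kappa_\ev$, completes the convergence proof and hence the proposition.

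The main obstacle, I expect, is the careful handling of the coincidence region $V_1\approx V_2$ (respectively $U_1\approx U_2$) of the kernel: one must be sure the single integration by parts is legitimate with no boundary term slipping in at $V\to-\infty$ (it does not, because $F_i^\co\to 0$ there) and that the resulting pairing of $\partial_V F_i^\co$ with $(V_1-V_2-i0)^{-1}$ is genuinely an absolutely convergent double integral rather than merely a conditionally convergent oscillatory one — this is where the uniform smoothness and the strict positivity $\alpha>0$ of the decay rate from Theorem~\ref{thm:HV} both get used. The rest — smoothness of $E^-f_i$ away from $i^-$, compactness of $S^2$, the elementary translation between $t_*$-exponential decay and $|V|^{-\alpha/\kappa_\co}$ power decay via the definitions in sec.~\ref{sec:2} — is routine.
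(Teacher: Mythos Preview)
Your approach is essentially the paper's own: invoke the decay estimate \eqref{eq:estimate} from Theorem~\ref{thm:HV}, convert it via $e^{\alpha t_*}\sim|V|^{-\alpha/\kappa_\co}$ (resp.\ $|U|^{-\alpha/\kappa_\ev}$) into power-law decay of $F_i^\co,\,F_i^\ev$ along the horizons as the affine parameter tends to $-\infty$, and conclude convergence of the integrals in \eqref{eq:US}; the paper in fact gives less detail than you do, simply asserting that the translated decay ``easily implies convergence of the integrals''. One minor correction: $F_i^\co=E^-f_i|_{\dSH}$ need not vanish for $V>0$ (it can extend onto $\dSH^R$ when $\supp f_i$ meets region~$\rIII$), but it has \emph{bounded} support there since $\supp f_i$ is compact, which is all your argument actually needs.
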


We now verify that the Unruh 2-point function $G_\rU(x_1, x_2) = \langle \Phi(x_1) \Phi(x_2) \rangle_\rU$ is Hadamard in the union of regions $\rI$, $\rII$, $\rIII$ (but not necessarily across $\CH^R$!). Let us begin by recalling basic results from the analysis of partial differential equations. For any distribution $\varphi$ on $\cX$ and (pseudo-) differential operator $A$, it is known that 
\beq
\label{eq:8_3_1}
 \WF(\varphi) \subset \WF(A\varphi) \cup {\rm char}(A),
\eeq
where ${\rm char}(A)=\{(x,\xi) \in T^* \cX \setminus o \mid a(x,\xi)=0\}$, and where $a(x,\xi)$ is the principal symbol of $A$, see Thm.~8.3.1 of \cite{HoermanderI}. Furthermore, the propagation of singularities theorem \cite{DuistermaatHoermanderII} states that if the differential operator $A$ has real principal symbol $a$ and if 
 $A\varphi$ is smooth then $\WF(\varphi)$ is contained in the intersection of ${\rm char}(A)$ and the Hamiltonian orbits of 
 the symbol $a(x, \xi)$ on ``phase space'' $T^* \cX \setminus o$. In the case of $A=\square - \mu^2$, the principal symbol is 
 $g^{\mu\nu}(x) \xi_\mu \xi_\nu$, and the Hamiltonian orbits in the characteristic set ${\rm char}(\square - \mu^2)$ are precisely the 
 null geodesics with initial condition $(x,\xi)$. We denote one such an orbit, called bicharacteristic, as $\cB(x,\xi)$. Based on this definition, we distinguish three cases:

\medskip
\noindent
{\it 1) Neither the bicharacteristic $\cB(x_1,\xi_1)$ nor $\cB(x_2,\xi_2)$ enters region $\rI$:} In this case, the claim follows by exactly the same argument as given in the previous section for lightcones, because every backward null geodesic emanating from these regions will eventually hit $\EH \cup \dSH$. 

\medskip
\noindent
{\it 2) The bicharacteristics $\cB(x_1,\xi_1)$ and $\cB(x_2,\xi_2)$ enter region $\rI$:} By propagation of singularities, it suffices to consider $x_1, x_2 \in \rI$. In region $\rI$, past-directed null geodesics trapped on the photon ring (see e.g.\ \cite{WaldGR}), will not come from $\EH \cup \dSH$ but begin their lives in the ``point'' $i^-$. Thus, the type of argument made in the case of lightcones -- where this cannot occur -- does not work, and one needs another argument worked out in detail in the case of Schwarzschild by 
\cite{DappiaggiMorettiPinamonti_Unruh}. The argument is based on the fact that the Unruh state is static, i.e. invariant under translations in 
the coordinate $t$. A simplified version of the argument which also works in the present case runs as follows. 

First, for an arbitrary but fixed open region $\cO$ whose closure is contained in region I (i.e. not touching the horizons $\EH \cup \dSH$),  we introduce the maps, $X \in \{ \co, \ev \}$,
\beq
\label{eq:KX}
K^X : C^\infty_0(\cO) \to L^2(\R_\omega \times S^2_\ws) ,
\quad K^X f (\omega, \ws) = r_X \left(
\frac{\omega e^{2\pi r_X \omega}}{\sinh (2\pi r_X \omega)}\right)^{\frac{1}{2}} \int_\R F^X(s,\ws) e^{i\omega s} \ud s
\eeq
where $F^\ev(u,\ws) = Ef |_{\EH^-}(u,\ws)$ and $F^\co(v,\ws) = Ef |_{\dSH^-}(v,\ws)$. Next, 
we let $Kf := K^\ev f \oplus K^\co f \in L^2 \oplus L^2$, and then we can the write the Unruh 2-point function as \cite{KayWald91}
\beq
\label{eq:US_K}
\langle \Phi(f_1) \Phi(f_2) \rangle_\rU = \langle Kf_1, Kf_2 \rangle_{L^2 \oplus L^2}, 
\eeq
which follows from \eqref{eq:US} by Fourier transformation and the relations $U=-e^{-\kappa_\ev u}, V=-e^{-\kappa_\co v}$. 
The decay and regularity results, thm. \ref{thm:HV},
together with the Sobolev embedding theorem 
and the relation between $t_*$ and $u$ resp.\ $v$ on $\EH^-$ resp. $\dSH^-$
imply that $|\partial_s^N  F^X(s,\ws)|  \le C_N e^{-\alpha |s|}$, where the 
constant $C_N$ is controlled by some $C^m$ norm of $f$ by \eqref{eq:estimate}. As a consequence, 
$K = K^\ev \oplus K^\co$ is shown to be a distribution in $\cO$ with values in the Hilbert space 
$L^2 \oplus L^2$: Since $|\partial_s^N  F^X(s,\ws)|  \lesssim \| f \|_{C^m(\cO)} e^{-\alpha |s|}$ for all $f \in C^\infty_0(\cO)$, we find $\| Kf \|_{L^2 \oplus L^2} \lesssim \|f\|_{C^m(\cO)}$, which is 
the continuity required from a distribution. 

Next, invariance of $E$ under translations of $t$ in 
the coordinates $(t,r_*,\ws)$ and the exponentially decreasing prefactor of order $O(e^{2\pi r_X \omega})$ 
in \eqref{eq:KX} for $\omega \to -\infty$ imply that 
the distribution $K$ has a holomorphic extension to the strip $\{ t + is \mid s \in (0, 2\pi r_\ev ) \}$ in the $t$-coordinate. 
Moreover it is the boundary value of this holomorphic extension, in the sense of distributions, as $s \to 0^+$, 
 for all test-functions from $C^\infty_0(\cO)$. By a very slight modification of the proof of thm.~3.1.14 of \cite{HoermanderI} we then get, for some $m$, 
 the inequality 
 \beq 
 \|K(t+is, r, \ws)\|_{L^2 \oplus L^2}
 \lesssim s^{-m} \quad \text{ for $(t,r,\ws) \in \cO$ and $s \in (0,\pi r_\ev)$}. 
 \eeq
 In view of thm.~8.4.8 of \cite{HoermanderI}, we then further get 
 \beq
 \WF(K) |_\cO \subset \{ (x, \xi) \in T^* \cO \setminus o \mid \langle \xi, \partial_t \rangle > 0\}. 
 \eeq
Combining \eqref{eq:US_K} and the rules for the wave front set under the composition of distributions, we finally get 
\beq
\WF'(G_\rU)|_{\cO \times \cO} \subset \{ (x_1, \xi_1, x_2, \xi_2) \in T^*(\cO \times \cO) \setminus o \mid \langle \xi_i, \partial_t \rangle > 0 \} . 
 \eeq
 Next, using the relationship between $t$ and $u,v$, it follows immediately from \eqref{eq:KX}, \eqref{eq:US_K} that the 
 2-point function $G_\rU$ is invariant under translation of $t$ in both arguments. In infinitesimal form, 
 $(\partial_{t_1}+\partial_{t_2})G_\rU=0$. 
From \eqref{eq:8_3_1}, it follows that in fact 
 \beq\label{eq:WF1}
\WF'(G_\rU)|_{\cO \times \cO} \subset \{ (x_1, \xi_1, x_2, \xi_2) \in T^*(\cO \times \cO) \setminus o \mid \langle \xi_1, \partial_t \rangle =
\langle \xi_2, \partial_t \rangle > 0 \} . 
 \eeq
 If we assume that $\cO$ is 
 the entire (open) region $\rI$, viewed as a globally hyperbolic spacetime in its own right with Cauchy-surface $\Sigma_\cO$, 
 then it follows from the propagation of singularities theorem that
 \beq\label{eq:WF2}
\WF'(G_\rU)|_{\cO \times \cO} \subset \{ \cB(x_1, \xi_1) \times \cB(x_2, \xi_2) \setminus o \mid x_i \in \Sigma_\cO, g(\xi_i, \xi_i) =0 \} . 
 \eeq
 
 We now distinguish the cases a) $\cB(x_1, \xi_1) = \cB(x_2, \xi_2)$ and b) $\cB(x_1, \xi_1) \neq \cB(x_2, \xi_2)$. In case a), we obtain, from \eqref{eq:WF1} and \eqref{eq:WF2}, that $(x_1, \xi_1, x_2, \xi_2) \not \in \WF'(G_\rU)$, unless $\xi_1 \sim \xi_2$ and $\xi_1 \triangleright 0$. In case b), we may, without loss of generality, assume that $x_1, x_2 \in \Sigma_\cO$ are spacelike separated. We may then choose real-valued 
 cutoff functions $\chi_1, \chi_2$ supported in a coordinate chart near $x_1, x_2$ such that 
 $\chi_i(x_i) \neq 0$, with $\supp(\chi_1)$ remaining spacelike to ${\rm \supp}(\chi_2)$.
 Since $G_\rU$ is a distribution of positive type, 
 it follows in view of the Cauchy-Schwarz inequality that 
 \beq\label{eq:WF3}
 |G_\rU(\chi_1 e_{-k_1}, \chi_2 e_{k_2} )| \le 
 |G_\rU(\chi_1 e_{-k_1}, \chi_1 e_{k_1})|^{\frac{1}{2}}
 |G_\rU(\chi_2 e_{-k_2}, \chi_2 e_{k_2})|^{\frac{1}{2}}, 
 \eeq
where $e_{k}(x) \defeq e^{- i k x}$. Now let 
 \beq
 \label{eq:Def_V_i_pm}
  \cV_i^\pm=\{ (x, \xi) \in T^* \cO \setminus o \mid x \in {\rm supp} \chi_i, g(\xi,\xi)=0, \langle \xi, \partial_t \rangle \gtrless 0\}.
 \eeq 
 It follows from  the wave front conditions \eqref{eq:WF1}, \eqref{eq:WF2} applied to the right side of \eqref{eq:WF3}
 that the left side of \eqref{eq:WF3} is decaying rapidly in $(k_1, k_2)$ (i.e.\ is bounded by 
 $\le C_N (1+|k_1|+|k_2|)^{-N}$ for any $N$) in any conic neighborhood not intersecting $\cV_1^+ \times \cV_2^+$. 
 However, since the arguments are spacelike, the commutator property of 2-point functions also gives
 \beq
  G_\rU(\chi_1 e_{-k_1}, \chi_2 e_{k_2}) = G_\rU(\chi_2 e_{k_2}, \chi_1 e_{-k_1}).
 \eeq
 Applying the 
 same reasoning to $G_\rU(\chi_2 e_{k_2}, \chi_1 e_{- k_1})$, it follows that the left side of \eqref{eq:WF3} is decaying as fast in $(k_1, k_2)$ in any conic neighborhood not intersecting $\cV_1^- \times \cV_2^-$. For $\supp \chi_{1/2}$ small enough, $\cV_1^+ \times \cV_2^+ \cap \cV_1^- \times \cV_2^- = \emptyset$, so that $(x_1, \xi_1, x_2, \xi_2) \not \in \WF'(G_\rU)$ in case b). Combining this information, we have
  \beq\label{eq:WF4}
  \begin{split}
\WF'(G_\rU)|_{\cO \times \cO} \subset & \{ (x_1, \xi_1, x_2, \xi_2) \in T^*(\cO \times \cO) \setminus o \mid \text{$x_1,x_2$ connected by null geodesic $\gamma$,} \\
&\quad \xi_1 = c_1g( \ . \ ,  \dot \gamma_1), \xi_2 = c_2 g( \ . \ , \dot \gamma_2),
\langle \xi_1, \partial_t \rangle = \langle \xi_2, \partial_t \rangle > 0 \} .   
\end{split}
 \eeq
 The set on the right hand side is easily seen to be equal to $\cC^+$, thus we learn $\WF'(G_\rU)|_{\cO \times \cO} \subset \cC^+$, and 
 we actually get equality by the commutator argument used at the end of the proof of thm.~\ref{eq:LightconeHadamard}.

 \medskip
\noindent
{\it 3) One bicharacteristic, say $\cB(x_1,\xi_1)$, will enter $\rI$, but $\cB(x_2,\xi_2)$ not:} We can argue as in subcase b) above, i.e.\ using that the right hand side of the inequality \eqref{eq:WF3} is decaying rapidly in $(k_1, k_2)$, due to 1), 2).

\medskip
\noindent
{\bf Comparison state:} 
For the sake of the discussion in the next section, it is also convenient to have another state that is stationary and 
that is manifestly Hadamard in a two-sided open neighborhood of the Cauchy horizon $\CH^R$, see fig.~\ref{fig:1}. This state is not of particular
physical interest and introduced purely in order to make our arguments with greater ease. In order to define it, we modify the metric function 
$f(r)$ beyond the Cauchy horizon for $r<r_\ca -\delta$ with a small $\delta > 0$ in such a way that it smoothly interpolates between $f(r)$ and the constant function $1$ in a neighborhood of $r = 0$. We call the new function $f_*(r)$. Evidently, 
it defines the RNdS metric in regions $\rI$, $\rII$, $\rIII$ and an open neighborhood of $\CH^R$, but replaces the singularity in region $\rIV$ with 
a smooth origin of polar coordinates $r=0$, see region $\rIV'$ in fig.~\ref{fig:5}.

\begin{figure}
\centering
\begin{tikzpicture}[scale=1]
 \draw (-2,2) -- (-4,4) node[midway, below, sloped]{$\EH^L$};
 \draw (-2,2) -- (0,4) node[midway, above, sloped]{$\EH^R$};
 \draw (0,4) -- (-2,6) node[midway, below, sloped]{$\CH^R$};
 \draw (-4,4) -- (-2,6) node[midway, above, sloped]{$\CH^L$};

 \draw[dashed] (0,4) -- (0,8);
 \draw (-2,6) -- (0,8) node[midway, above, sloped]{$\CH^+$};
 \draw (-2,4) node{${\rm II}$};
 \draw (-1,6) node{${\rIV'}$};

 \draw[fill=white] (-4,4) circle (1pt);
 \draw[fill=white] (0,8) circle (1pt);
 \draw[fill=black] (-2,6) circle (1pt);
 \draw[fill=black] (-2,2) circle (1pt);
 \draw[fill=white] (0,4) circle (1pt) node[above right]{$i^+$};
  
  \draw[dotted] (0,8) .. controls (-1.8,6) .. (0,4);
\end{tikzpicture}
\caption{Sketch of the spacetime in which the comparison state is Hadamard. The dashed line indicates the origin of coordinates. The region beyond the dotted line is region where $f(r)$ is modified.}
\label{fig:5}
\end{figure}
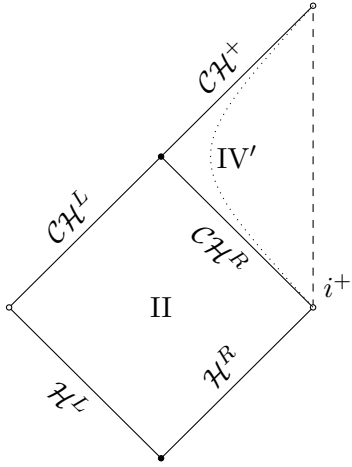

In this -- made-up unphysical -- spacetime, we now define a vacuum state called $\langle \, . \, \rangle_\rF$ for regions $\rII$, $\rIV'$ by the ``positive frequency modes ($k \ge 0$)'' $\psi_{k\ell m}^{\rF, \rout}$, which are solutions $(\square-\mu^2) \psi_{k\ell m}^{\rF,\rout}=0$ defined on the union of $\rII$, $\rIV'$ in terms of their final data
 on $\CH = \CH^+ \cup \CH^L$: 
\beq
\label{eq:Psi_F_4d}
\psi^{\rF, \rout}_{k \ell m} (x)  =
(2 \pi)^{-\frac{3}{2}} (2 k)^{-\frac{1}{2}} Y_{\ell m}(\theta, \phi) r_\co^{-1} e^{- i k V} 
\qquad \text{ on $\CH$} ,
\eeq
where $V=V_\ca$ is the Kruskal type (affine) coordinate adapted to this Cauchy horizon, 
see fig.~\ref{fig:5}. Again, it can be shown by exactly the same methods as in the lightcone case that these modes define 
via \eqref{eq:statedef} a Hadamard 2-point function $\langle \Phi(x_1) \Phi(x_2) \rangle_\rF$ in regions $\rII$, $\rIV'$, which we call the ``comparison state''.

\medskip
We have shown:
 
 \begin{theorem}
 \label{thm:hadaU}
 The Unruh state has a 2-point function of Hadamard form in the union of region $\rI$, $\rII$, $\rIII$.  The comparison state is Hadamard in the union of region $\rII$ and an open (two-sided) neighborhood of the Cauchy horizon $\CH^R$.
 \end{theorem}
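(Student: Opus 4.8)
The statement is essentially a summary of the analysis carried out in the two preceding subsections, so the plan is to assemble the pieces rather than to prove anything genuinely new, relying throughout on Theorem~\ref{thm:HV} and on Radzikowski's characterization of the Hadamard property by the wave-front condition $\WF'(G_\Psi)=\cC^+$.

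For the Unruh state the plan is to verify $\WF'(G_\rU)=\cC^+$ on $\sN\times\sN$, $\sN=\rI\cup\rII\cup\rIII$. First I would recall that the preceding discussion already establishes that $G_\rU$ is a well-defined distribution satisfying S1)--S3), so it suffices to prove the inclusion $\WF'(G_\rU)\subset\cC^+$; equality then follows from the commutator property exactly as at the end of the proof of Theorem~\ref{eq:LightconeHadamard} (using $\cC^+\cap\cC^-=\emptyset$). Next I would run the trichotomy over pairs of bicharacteristics $\cB(x_1,\xi_1)$, $\cB(x_2,\xi_2)$: (1) if neither enters region I, every backward bicharacteristic hits $\EH\cup\dSH$ and the kernel-composition argument of Theorem~\ref{eq:LightconeHadamard} applied to \eqref{eq:US} gives the bound directly; (2) if both enter region I the lightcone argument fails because of past-trapped null geodesics on the photon ring, so one uses $t$-invariance of $G_\rU$ --- the factor $(\omega e^{2\pi r_X\omega}/\sinh(2\pi r_X\omega))^{1/2}$ in \eqref{eq:KX}, together with the decay $|\partial_s^N F^X|\lesssim e^{-\alpha|s|}$ from Theorem~\ref{thm:HV}, lets $Kf$ extend holomorphically to the strip $0<\Im t<2\pi r_\ev$ with a polynomial bound, whence $\WF(K)\subset\{\langle\xi,\partial_t\rangle>0\}$, and then \eqref{eq:US_K}, propagation of singularities, $(\partial_{t_1}+\partial_{t_2})G_\rU=0$, and the positivity/Cauchy--Schwarz argument \eqref{eq:WF3} together rule out the off-diagonal directions, giving $\WF'(G_\rU)|_{\cO\times\cO}\subset\cC^+$; (3) in the mixed case the decay from (1)--(2) feeds the right-hand side of \eqref{eq:WF3} and again yields rapid decay. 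Patching over all of $\sN\times\sN$ gives the first claim.

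For the comparison state the plan is to repeat the lightcone proof of Theorem~\ref{eq:LightconeHadamard} verbatim, with the future lightcone replaced by the characteristic surface $\CH=\CH^+\cup\CH^L$ of the modified globally hyperbolic spacetime $(\rII\cup\rIV',g_*)$ and restriction operator $RF=F|_\CH$: the Schwartz kernel of $G_\rF$ is again the composition of the kernels of $E$, $A$, $R$, the three wave-front inputs (propagation of singularities for $\WF'(K_E)$, direct computation for $\WF'(K_A)$, the restriction theorem for $\WF'(K_R)$) are unchanged, H{\"o}rmander's composition theorem (Thm.~8.2.14 of \cite{HoermanderI}) yields $\WF'(G_\rF)\subset\cC^+$, the commutator argument upgrades this to equality, and hence $G_\rF$ is Hadamard in $\rII\cup\rIV'$. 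Finally, since $g_*=g$ in region II and on an open two-sided neighborhood of $\CH^R$, and the Hadamard condition is local (it only constrains the 2-point function on convex normal neighborhoods), $\langle\,\cdot\,\rangle_\rF$ is Hadamard there for the unmodified RNdS metric.

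The hard part is case (2) for the Unruh state: the photon sphere produces past-trapped null geodesics in region I, so there is no characteristic surface ``below'' region I to which the commutator identity \eqref{eq:com} can be pushed, and the naive lightcone composition argument does not close. One genuinely needs the stationarity-plus-holomorphicity input of \cite{DappiaggiMorettiPinamonti_Unruh} --- the exponential prefactor in \eqref{eq:KX}, the strip of holomorphy in $t$, and the resulting one-sided wave-front bound --- supplemented by the positive-type Cauchy--Schwarz trick to discard the ``wrong'' directions. Everything else (well-definedness, cases (1) and (3), and the entire comparison-state argument) is routine given Theorem~\ref{thm:HV} and the microlocal machinery already assembled.
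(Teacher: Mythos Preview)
Your proposal is correct and follows essentially the same route as the paper: the theorem is a summary of the analysis already carried out in Section~\ref{sec:UnruhFinal}, and you have accurately identified the trichotomy on bicharacteristics for the Unruh state (with the stationarity/holomorphic-strip/Cauchy--Schwarz argument in the trapped case~(2) being the nontrivial step), and the direct lightcone-style argument for the comparison state in the modified spacetime $\rII\cup\rIV'$. The paper's own ``proof'' is precisely the text preceding the theorem statement, culminating in ``We have shown:''.
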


\section{Behavior of the quantum stress tensor at the Cauchy horizon in 4 dimensions}
\label{sec:5}

In this section, we will prove our main result, \eqref{eq:MainResult}, concerning the behavior of the stress-energy tensor near the Cauchy horizon, $\CH^R$, of 4 dimensional RNdS spacetime for an arbitrary state $\Psi$ that is Hadamard in the union of regions I, II, and III. We will proceed by writing $\langle T_{\mu\nu} \rangle_\Psi$ as
 \beq
\label{eq:Tdecomp}
\langle T_{\mu\nu} \rangle_\Psi = 
(\underbrace{\langle T_{\mu\nu} \rangle_\Psi -\langle T_{\mu\nu} \rangle_\rU}_{= t_{\mu\nu} \text{ as in prop.~\ref{thm:hadadiff}} }) \quad
+ \quad 
(\underbrace{\langle T_{\mu\nu} \rangle_\rU -\langle T_{\mu\nu} \rangle_\rF}_{\text{calculated in sec.~\ref{sec:Numerics}}}) \quad
+ \underbrace{\langle T_{\mu\nu} \rangle_\rF}_{\text{smooth at } \CH^R} . 
\eeq
In subsection \ref{sec:Reduction}, we will show that the first term behaves like the stress-energy tensor of a classical field. Thus, it has behavior not more singular than \eqref{eq:ClassResult}. The last term is smooth by construction of the state $\langle \ . \ \rangle_\rF$. Most of our effort will be spent in calculating the middle term, which will be done via a mode expansion that will be carried out in subsections \ref{sec:modes1}-\ref{sec:Numerics}. 

In the 2 dimensional case, we found that the limiting behavior of $\langle T_{vv} \rangle_\rU - \langle T_{vv} \rangle_\rF$ on $\CH^R$ was given by \eqref{eq:ModeSum2dBoulware}. Since $\langle T_{\mu \nu} \rangle_\rF$ is smooth on $\CH^R$, this implies that 
$\langle T_{vv} \rangle_\rU \to (\kappa_\co^2 - \kappa_\ca^2)/48 \pi$ on $\CH^R$. Thus $\langle T_{vv} \rangle_\rU$ goes to a finite limit on $\CH^R$, which is nonvanishing unless $\kappa_\co = \kappa_\ca$. As discussed in sec.~\ref{sec:2d}, for $\kappa_\co \neq \kappa_\ca$, this implies that $\langle T_{V_\ca V_\ca} \rangle_\rU$ blows up on $\CH^R$ as $1/V^2_\ca$, a worse singularity than classical behavior. In 4 dimensions, the logical possibilities are that (i) $\langle T_{vv} \rangle_\rU$ does not approach a limit at $\CH^R$, (ii) $\langle T_{vv} \rangle_\rU$ approaches a nonzero limit at $\CH^R$, or (iii) $\langle T_{vv} \rangle_\rU \to 0$ on $\CH^R$. In case (ii), we again would have that $\langle T_{V_\ca V_\ca} \rangle_\rU$ blows up on $\CH^R$ as $1/V^2_\ca$. In case (i) the behavior of $\langle T_{V_\ca V_\ca} \rangle_\rU$ would be more singular than a $1/V^2_\ca$ blow up. Only in case (iii) would the quantum stress-energy possibly behave in a manner similar to the classical case. 

In subsection \ref{sec:modes1}, we will obtain a mode sum expression for $\langle T_{vv} \rangle_\rU - \langle T_{vv} \rangle_\rF$ in terms of scattering coefficients (see \eqref{eq:stint}). The scattering coefficients are calculated in subsection \ref{sec:modes2} for the conformally coupled case, $\mu^2 = \frac{1}{6} R = \frac{2}{3}\Lambda$. The mode sum is then evaluated numerically in subsection \ref{sec:Numerics}. We find that the mode sum converges to a nonzero value, i.e., case (ii) occurs and the behavior of the stress-energy tensor is given by \eqref{eq:MainResult} with $C \neq 0$. Although our evaluation of $C$ is done only in the conformally coupled case $\mu^2 = \frac{2}{3}\Lambda$ and at particular black hole parameters, we would expect $C$ to vary analytically with $\mu$ and the black hole parameters, in which case we would have $C \neq 0$ generically.

\subsection{Reduction to the Unruh state}
\label{sec:Reduction}

In this subsection, we show that $\langle T_{\mu\nu} \rangle_\Psi -\langle T_{\mu\nu} \rangle_\rU$ has the same behavior near $\CH^R$ as the stress-energy tensor of a classical field.
Let $\langle \ . \ \rangle_\Psi$ be an arbitrary state that is Hadamard near the Cauchy surface $\Sigma$ as drawn in fig.~\ref{fig:1}. By the propagation of 
singularities \cite{RadzikowskiVerch}, it remains Hadamard throughout regions $\rI$, $\rII$, $\rIII$, but it is not defined (at least not uniquely) in region IV, just as if we were to specify 
initial data for the classical wave equation on $\Sigma$. Since the state is Hadamard in $\rI$, $\rII$, $\rIII$, the expectation value $\langle T_{\mu\nu} \rangle_\Psi$ is finite and smooth in $\rI$, $\rII$, $\rIII$, but may diverge\footnote{The coordinate indices of course should refer to a coordinate system regular there such as $(U,V,\theta,\varphi)$, since we are not interested in 
artificial singularities created by a bad choice of coordinates.} as we move towards the Cauchy horizon, $\CH^R$. We would like to see how it diverges, if at all. The 
first step is to relate the behavior of $\langle T_{\mu\nu} \rangle_\Psi$ to behavior of the corresponding quantity $\langle T_{\mu\nu} \rangle_\rU$ for the Unruh state, 
which we have a better way of calculating.

Since we are interested in the local behavior, we pick a point  $p$ on the Cauchy horizon $\CH^R$ contained in a coordinate chart, $(x^\mu)=(V,y^i)$, where 
$V=V_\ca$ is the Kruskal-type coordinate and $y^i, i=1,2,3$ are local coordinates parameterizing  $\CH^R$. So, $V=0$ locally defines  $\CH^R$, and the intersection of this chart with 
region $\rII$ locally corresponds to $V<0$. Let $\cU$ be a small open neighborhood contained in this coordinate chart. Then the following proposition
basically follows from thm.~\ref{thm:HV}.

\begin{proposition}
\label{thm:hadadiff}
Let $\Psi, \Psi'$ be Hadamard states near the Cauchy surface $\Sigma$ as drawn in fig. \ref{fig:1}, and assume $\beta > 1/2$, see \eqref{eq:betadef}. Then for any $\beta'$ such that $1/2 < \beta' < \min(\beta,1)$,
\begin{itemize} 
\item
each component of
$
t_{\mu\nu}:=\langle T_{\mu\nu} \rangle_{\Psi} - \langle T_{\mu\nu} \rangle_{\Psi'},
$
is smooth in $y$, and viewed as a function of $V<0$, is locally in $L^p(\R_-)$, 
where $1/p = 2-2\beta'$ (thus $p>1$), with norm uniformly bounded in $y$ within $\cU$.
\item
viewed as a function of $V$, $t_{\mu\nu}$ has $s=\beta' - 1/2$ derivatives locally in $L^p(\R_-)$ for $1/p = 3/2 - \beta'$.
\end{itemize}
\end{proposition}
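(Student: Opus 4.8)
\emph{Step 1: pass to the difference kernel.} The plan is to first set
$w(x_1,x_2):=\langle\Phi(x_1)\Phi(x_2)\rangle_{\Psi}-\langle\Phi(x_1)\Phi(x_2)\rangle_{\Psi'}$. By H2) this is $C^\infty$ on $\sN\times\sN$, $\sN=\rI\cup\rII\cup\rIII$; since the antisymmetric (commutator) parts of the two $2$-point functions coincide and $\overline{G_\Psi(x_1,x_2)}=G_\Psi(x_2,x_1)$, the kernel $w$ is real and symmetric, and it solves $(\square-\mu^2)w=0$ in each argument. By the derivative version of H3), $t_{\mu\nu}(x)=(\cD_{\mu\nu}w)(x,x)$, where $\cD_{\mu\nu}(x_1,x_2)$ is the bidifferential operator obtained by polarizing the classical $T_{\mu\nu}$ and symmetrizing (the local-curvature renormalization ambiguities cancel in the difference). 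Hence $t_{\mu\nu}\in C^\infty(\sN)$ and the content of the proposition is its blow-up rate at $\CH^R$. After fixing $p\in\CH^R$ and the chart $(V,y)=(V_\ca,y^i)$ of the statement, so $\CH^R=\{V=0\}$ and $\rII$ is locally $\{V<0\}$, the task reduces to bounding, uniformly for $y$ in a compact set, the $L^p(\R_-)$ norm of $V\mapsto t_{\mu\nu}(V,y)$ with $1/p=2-2\beta'$, and its fractional Sobolev norm of order $\beta'-\tfrac12$ with $1/p=3/2-\beta'$.

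\emph{Step 2: transfer the Hintz--Vasy regularity to $w$.} Next I would show that near $\CH^R\times\CH^R$ the kernel $w$ inherits, in \emph{each} of its two arguments, exactly the regularity that thm.~\ref{thm:HV} assigns to a classical solution of $(\square-\mu^2)u=0$ with smooth data on $\Sigma$, with joint control in the two variables. For $y_1$ in a compact subset of $\sN$, the solution $w(y_1,\cdot)$ is smooth on all of $\sN$, in particular across the interior horizons $\EH^R$ and $\dSH^L$; feeding this into the Fredholm/propagation/radial-point estimates underlying thm.~\ref{thm:HV} (the estimates for $P=\square-\mu^2-iQ$ on the weighted b-Sobolev spaces $H^{s,\alpha}_{\rb,+}(\cR)$, applied now to $w(y_1,\cdot)$ rather than to $E^+f$, the ``incoming'' regularity being supplied by interior smoothness of $w$) yields $w(y_1,\cdot)|_{\cR}\in H^{s,\alpha}_{\rb,+}(\cR)$ for any forward order function $s$ of any weight $\alpha<\inf\{-\Im\omega_{n\ell m}\}$ — the full range being available since $\mu^2>0$ rules out real quasinormal frequencies, as in the proof of thm.~\ref{thm:HV} — with norm controlled by a $C^m$ norm of $w(y_1,\cdot)$ on a fixed compact interior region, hence uniformly in $y_1$ because $w\in C^\infty(\sN\times\sN)$. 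Iterating the same estimate in the second variable (equivalently, a standard argument for the completed tensor product of these spaces) gives $w\in H^{s,\alpha}_{\rb,+}(\cR)\otimes H^{s,\alpha}_{\rb,+}(\cR)$ near the corner. Since $s(r)$ is unconstrained from below in the corridor $r_\ev\le r\le r_\co$ while it may be taken arbitrarily close to $\tfrac12+\beta$ at $r=r_\ca$, and $\alpha/\kappa_\ca$ arbitrarily close to $\beta$, the (b-)Sobolev regularity and conormality of Hintz--Vasy at $\CH^R$ give, via Sobolev embedding, for any $\beta'$ as in the statement and $j,k\ge1$,
\beq
\bigl|\partial_{V_1}^{j}\partial_{V_2}^{k}\partial_{y_1}^{a}\partial_{y_2}^{b}\,w(V_1,y_1;V_2,y_2)\bigr|\;\lesssim\;|V_1|^{\beta'-j}\,|V_2|^{\beta'-k}
\eeq
uniformly on $\cU\times\cU$ (while $w$ and its tangential $\partial_y$-derivatives stay bounded).

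\emph{Step 3: restrict to the diagonal and read off the norms.} The microlocal half of the same Hintz--Vasy estimates shows that the b-wave front set of $w$ at $\CH^R\times\CH^R$ lies in the one-sided cone $\{\xi_{V_1}>0\}\times\{\xi_{V_2}>0\}$, which is disjoint from the conormal $\{(\eta,-\eta)\}$ of the diagonal; by H\"ormander's pull-back theorem the restriction $W(V,y):=w(V,y;V,y)$ is therefore a well-defined distribution inheriting the bounds above, the dominant contribution being the cross term $\partial_{V_1}\partial_{V_2}w|_{V_1=V_2=V}\sim|V|^{2\beta'-2}$ (terms with both $V$-derivatives on one factor are no worse). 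Passing to coordinates $(V,U,\Omega)$ regular at $p$ and applying $\cD_{\mu\nu}$ — which puts at most one $V$-derivative on each factor, and exactly one and nothing else for $\mu=\nu=V$ since $g_{VV}=0$ there — gives $|\partial_V^{k}\partial_y^{a}\,t_{\mu\nu}(V,y)|\lesssim|V|^{2\beta'-2-k}$ uniformly in $y$. Since $|V|^{2\beta''-2}\in L^p_{\mathrm{loc}}(\R_-)$ whenever $1/p>2-2\beta''$, choosing $\beta''\in(\beta',\min(\beta,1))$ gives $t_{\mu\nu}(\cdot,y)\in L^p_{\mathrm{loc}}(\R_-)$ for $1/p=2-2\beta'$, and likewise $t_{\mu\nu}(\cdot,y)$ has $\beta'-\tfrac12$ derivatives in $L^p_{\mathrm{loc}}(\R_-)$ for $1/p=3/2-\beta'$, with norms uniform in $y$ — the two assertions, the slack $1/2<\beta'<\min(\beta,1)$ in the statement absorbing the $\epsilon$'s and the borderline exponents.

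\emph{Main obstacle.} The crux is the transfer in Step 2: thm.~\ref{thm:HV} is stated for the retarded field $E^+f$ with $f\in C^\infty_0(\sN)$, whereas $w(y_1,\cdot)$ has smooth but non-compactly-supported Cauchy data on $\Sigma$, and $\Sigma$ itself has ends at infinity. One must argue that $w(y_1,\cdot)$ genuinely lands in the Hintz--Vasy b-Sobolev space near $\CH^R$ — in effect, that the only channels through which insufficient decay along the photon sphere near $i^+$, or insufficient regularity, could propagate up to $\CH^R$ are ones across which $w$ is already smooth, so that the a priori estimates close and bootstrap — and then to make the tensor-product and uniform-in-$y_1$ bookkeeping precise. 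The remaining points, namely the legitimacy of the diagonal pull-back (handled by the one-sidedness of the wave front set) and the borderline Sobolev-product estimates near $\beta'=1$, are the reason the statement is phrased with room to spare.
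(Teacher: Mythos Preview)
Your proposal correctly identifies the crux of the argument and even names it as the ``Main obstacle'': thm.~\ref{thm:HV} is stated for $E^+f$ with $f\in C^\infty_0(\sN)$, whereas you want to apply it to $w(y_1,\cdot)$, a homogeneous solution with smooth but non-compactly-supported Cauchy data on $\Sigma$. You then say ``one must argue that $w(y_1,\cdot)$ genuinely lands in the Hintz--Vasy b-Sobolev space'' without supplying that argument. This is precisely the gap, and it is not a technicality one can wave through: the Fredholm estimates of \cite{HintzVasy} need the solution (or its source) to lie in the weighted b-Sobolev spaces from the outset, and interior smoothness alone does not place $w(y_1,\cdot)$ there near $i^+$ or along the ends of $\Sigma$.

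The paper closes this gap with a different and cleaner device: Verch's lemma~3.7 \cite{Verch1994}. Given a causal normal neighborhood $\tilde\sN$ of $\Sigma$ and a relatively compact $\cO\subset\tilde\sN$ with $\cU\subset D(\cO)$, there is a smooth kernel $B\in C^\infty_0(\cO\times\cO)$ such that $W(f_1,f_2)=\int (Ef_1)(Ef_2)\,B$ for all $f_1,f_2$ supported in $D(\cO)$. One then expands $B(x_1,x_2)=\sum_j c_j\,\overline{b_j(x_1)}\,b_j(x_2)$ with $b_j\in C^m_0(\cO)$ and $\sum_j\|b_j\|_{C^m}^2<\infty$ (a nuclear decomposition), and sets $\psi_j:=E^+b_j$. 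This gives
\[
W(x_1,x_2)=\sum_j c_j\,\overline{\psi_j(x_1)}\,\psi_j(x_2)\quad\text{on }\cU\times\cU,
\qquad
t_{\mu\nu}(x)=\sum_j c_j\bigl(\partial_\mu\overline{\psi_j}\,\partial_\nu\psi_j-\tfrac12 g_{\mu\nu}(\ldots)\bigr)(x).
\]
Now each $\psi_j$ is \emph{literally} of the form $E^+b_j$ with $b_j$ compactly supported, so thm.~\ref{thm:HV} applies to it verbatim, yielding $\|\partial_y^N\psi_j\|_{H^{1/2+\beta'}(\cU)}\lesssim\|b_j\|_{C^m(\cO)}$. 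The $L^p$ bounds on $t_{\mu\nu}$ then follow from Sobolev embedding, H\"older's inequality for the product, and summability of $\sum_j\|b_j\|_{C^m}^2$; the fractional-derivative statement uses Kato--Ponce type commutator estimates. This decomposition into \emph{one-variable} solutions with compact sources completely sidesteps your tensor-product bookkeeping and your unresolved obstacle.

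A minor further issue: your claim that the b-wave front set of $w$ at $\CH^R\times\CH^R$ is one-sided, $\{\xi_{V_1}>0\}\times\{\xi_{V_2}>0\}$, is not correct for a real symmetric bisolution (the conormal singularity is two-sided), and in any case the pull-back argument is superfluous since $w\in C^\infty(\sN\times\sN)$ already makes the diagonal restriction smooth in the interior; the only content is the blow-up rate at the boundary, for which the paper works directly in Sobolev norms rather than via pointwise bounds.
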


\medskip
\noindent
{\bf Remarks:} 1) For a classical field with smooth initial data on $\Sigma$, the stress tensor is a function known to have the same regularity as $t_{\mu\nu}$, so 
the result is that the difference between the quantum stress tensors in two states that are initially Hadamard behaves basically like the classical 
stress tensor near the Cauchy horizon.
2) The restriction to $\beta>1/2$ is of technical nature. For $\beta \le 1/2$, already a classical solution $\phi$ would generically 
fail to be of Sobolev class $H^1$ near $\CH^R$, thus making it impossible to extend the solution through $\CH^R$ already at the classical level, 
see \cite{DafermosLukI} for a discussion of such matters. Thus, $\beta > 1/2$ is the interesting case for the discussion of sCC.
3) Numerically, it is found that $\beta < 1$ \cite{CardosoEtAl18}, so that the restriction on the range of $\beta'$ can actually be written as $1/2 < \beta' < \beta$.

\begin{proof}
Let $\sN = \rI \cup \rII \cup \rIII$. Since $\sN$ is globally hyperbolic and since
both states are Hadamard near $\Sigma$, it follows by propagation of singularities that 
the difference between the corresponding two point functions $W(x_1,x_2) = \langle \Phi(x_1) \Phi(x_2) \rangle_\Psi -  \langle \Phi(x_1) \Phi(x_2) \rangle_{\Psi'}$
is in $C^\infty(\sN \times \sN)$.  It has been shown in \cite{Verch1994}, lemma 3.7 that given a causal normal neighborhood $\tilde \cN$ of $\Sigma$ (see \cite{KayWald91} for the definition), and any open subset $\cO$ with compact closure contained in $\tilde \sN$, there exists a smooth function 
$B$ with support in $\cO \times \cO$ such that 
\beq
W(f_1, f_2) = \int_{\cO \times \cO} (Ef_1)(x_1) (Ef_2)(x_2) B(x_1,x_2) \ud \eta(x_1) \ud \eta(x_2) , 
\eeq
for any testfunctions $f_1, f_2$ with support in the domain of dependence $D(\cO)$. By making $\cO$ sufficiently large, we may assume that 
the neighborhood $\cU$ of interest near the Cauchy horizon is contained in $D(\cO)$. Furthermore, by the algebraic relations A3,A4) and the positivity property S3) of states, 
we have $B(x_1, x_2) = \overline{B(x_1,x_2)} = B(x_2,x_1)$.

Let $m$ be a fixed natural number. Using the result \cite{Verch1994}, Appendix~B, and the symmetry properties of $B$, one can see 
that there exist $C^m_0$ functions $\{b_j\}$ on $\cO$ 
such that 
\beq
\label{Bnuclear}
B(x_1,x_2) = \sum_j c_j \overline{b_j(x_1)} b_j(x_2), \quad \sum_j \| b_j \|^2_{C^m(\cO)} < \infty, \quad c_j \in \{\pm 1 \}. 
\eeq
Now let $\psi_j = E^+ b_j$, with $E^+$ the retarded fundamental solution. Then $(\square - \mu^2) \psi_j = b_j$, and furthermore, 
\beq
\label{eq:Wnuclear}
W(x_1, x_2) = \sum_j c_j \overline{\psi_j(x_1)} \psi_j(x_2) \quad \text{for $x_1, x_2 \in \cU$.} 
\eeq
To estimate $\psi_j$, we can apply the results of \cite{HintzVasy} already mentioned in thm.~\ref{thm:HV}. 
First, we define new local coordinates $(z,y^i)$ in the following way. First we locally parameterize the orbit $S^2_p$ of $p \in \CH^R$
by angles $\Omega$. Then we transport these coordinates away from $\CH^R$ along $\partial/\partial V$ and let $z$ be the parameter
along this curve. Finally, we transport the coordinates $(z,\Omega)$ along the Killing vector field $\partial/\partial t$, and then $(y^i)=(t,\Omega)$. 

If we apply an arbitrary product $\prod_{j=1}^N X^{(a_j)}$ of the vector fields $X^{(a)}, a=1,2,3,4$ generating the 
symmetry group $\R \times SO(3)$ of RNdS to the defining equation $(\square - \mu^2) \psi_j = b_j$ (with retarded initial conditions) and 
commute these through $\square - \mu^2$, the function $\prod_{a=1}^N X^{(a)} \psi_j$ is seen to satisfy an equation of the same form and initial conditions 
with a new source. By thm.~\ref{thm:HV} and the construction of $(z,y^i)$, this gives
\beq
\label{eq:SobolevCM}
\|\partial_{y}^N \psi_j\|^2_{H^{1/2+\beta'}(\cU)} \lesssim   \| b_j \|^2_{C^m(\cO)}
\eeq 
for an implicit constant independent of $j$ and a sufficiently large $m$ depending on $N$. 
We can also use our expression for the difference of the expected stress tensor in two Hadamard states, see \eqref{eq:ptsplit}.
This shows in combination with \eqref{eq:Wnuclear} that 
\beq
t_{\mu\nu}(x) = \sum_j c_j \{ 
\partial_\mu \overline{\psi_j(x)} \partial_\nu \psi_j(x)
-
\tfrac{1}{2} g_{\mu\nu} (\partial_\gamma \overline{\psi_j(x)} \partial^\gamma \psi_j(x) + \mu^2 \overline{\psi_j(x)}  \psi_j(x))
\}.
\eeq
A standard argument based on the Fourier transform in $y$, Cauchy's inequality and Parseval's theorem 
shows that for $\varphi$ supported in the coordinate patch covered by $(z,y^i)$,
\beq
\label{eq:fourier}
\begin{split}
|\varphi(z, y)| =& \int \ud^3 \xi \tilde \varphi(z, \xi) e^{iy \xi}  \\
\le& \bigg( \int \ud^3 \xi (1+\xi^2)^{-N/2} \bigg)^{1/2} \bigg( \int \ud^3 \xi (1+\xi^2)^{N/2} |\tilde \varphi(z, \xi)|^2 \bigg)^{1/2} \\
\lesssim & \bigg\| (-\partial_y^2+1)^{N/2} \varphi(z, -) \bigg\|_{L^2(\R_y^3)},
\end{split}
\eeq
taking $N>3$. In combination with the Sobolev embedding theorem, $p$ as in the statement of the proposition,
\beq
\|  \varphi(-, y) \|_{L^{2p}(\R_-)} 
\lesssim
\| (-\partial_z^2+1)^{(\beta'-1/2)/2} \varphi(-, y) \|_{L^{2}(\R_-)} 
\lesssim  
\| (-\partial_y^2+1)^{N/2} \varphi \|_{H^{\beta'-1/2}(\cU)} ,
\eeq
uniformly in $y$ within $\cU$. We now take $\varphi=\partial_\mu \psi_j$ or $\varphi=\psi_j$ and combine this with \eqref{eq:SobolevCM},
thereby obtaining 
\beq
\| \partial_\mu \psi_j(-, y) \|_{L^{2p}(\R_-)} \lesssim  \| b_j \|_{C^m(\cO)}.
\eeq
Therefore, using the Cauchy-Schwarz inequality,
\beq
\begin{split}
&\| t_{\mu\nu} (-, y) \|_{L^{p}(\R_-)}\\
\lesssim & \sum_j \bigg( \sum_{\gamma=1}^4 \| \partial_\gamma \psi_j (-, y) \|_{L^{2p}(\R_-)} + \|  \psi_j (-, y) \|_{L^{2p}(\R_-)} \bigg)^2\\
 \lesssim & \sum_j  \| b_j \|^2_{C^m(\cO)} < \infty.
 \end{split}
\eeq
By considering $\varphi=\partial_\mu \partial_y^M \psi_j$, we can similarly obtain a corresponding bound for $\| \partial_y^M t_{\mu\nu} (-, y) \|_{L^{p}(\R_-)}$.
Transforming from the coordinates $(z,y^i)$ to $(V,y^i)$ gives the first statement.
 
To prove the second result, we use that \eqref{eq:fourier} also implies 
\beq
\| D^{s/2}_z \varphi(-, y) \|_{L^{2 p}(\R_-)} \lesssim  \| (-\partial_y^2+1)^{N/2} \varphi\|_{H^{s}(\cU)}
\eeq
for $(-\partial_z^2+1)^{s/2} = D^s_z$, $s$ and $p$ as indicated in the second statement, and $\varphi$ of compact support in $\cU$. With $\varphi=\partial_\mu \psi_j$, 
we get
\beq
\| D^{s/2}_z \partial_\mu \psi_j(-, y) \|_{L^{2 p}(\R_-)} \lesssim  \|\partial_{y}^N \psi_j\|^2_{H^{s+1}(\cU)}, 
\eeq
for $s = \beta'-1/2$. 
We have classical commutator estimates of the type \cite{KenigPonceVega}, Thm.~A.8,
\beq
\| D^s (f_1 f_2) - D^s f_1 f_2 - f_1 D^s f_2 \|_{L^p(\R^n)} \lesssim \| D^{s'} f_1 \|_{L^{p'}(\R^n)} \| D^{s''} f_1 \|_{L^{p''}(\R^n)}
\eeq
for $0<s=s'+s'', s', s''>0$, $0<1/p=1/p'+1/p''<1$. These basically allow us to distribute the fractional derivatives $D^s_z$ on a product $\partial_\mu \bar \psi_j \partial_\nu \psi_j$
such as in $t_{\mu\nu}$ up to an error term which is controlled in a lower Sobolev norm using $\|\partial_{y}^N \psi_j\|^2_{H^{s+1}(\cU)} \lesssim \| b_j \|_{C^m(\cO)}$. The second statement then follows
taking $s'=s''=s/2$, $p'=p''=2p$. This concludes the proof of the proposition.
\end{proof}

The proposition says that for the purposes of calculating $\langle T_{\mu\nu} \rangle_\Psi$ near the Cauchy horizon in a state $\Psi$ which 
starts out as a Hadamard state near the initial time surface $\Sigma$, we may work with the Unruh state $\langle T_{\mu\nu} \rangle_\rU$
because, as we have shown in thm.~\ref{thm:hadaU}, it is also a Hadamard state on the initial time surface $\Sigma$. In doing this, we must  
accept an error $t_{\mu\nu}$ which has the regularity described in prop.~\ref{thm:hadadiff} at the Cauchy horizon, and which, 
as we have hinted, is the behavior of the classical stress tensor. This is acceptable since, as we will show, the 
behavior of $\langle T_{\mu\nu} \rangle_\rU$ is more singular than this at the Cauchy horizon $\CH^R$. 

We turn, now to the calculation of the middle term in \eqref{eq:Tdecomp}, namely $\langle T_{\mu\nu} \rangle_\rU - \langle T_{\mu\nu} \rangle_\rF$. By stationarity of the Unruh and the comparison state, 
it suffices to perform the computation in a neighborhood of $\CH^L$. The calculation is outlined in subsection \ref{sec:modes1}, with the details of the mode calculations 
relegated to subsection \ref{sec:modes2}.

\subsection{Outline of mode calculation for Unruh state}
\label{sec:modes1}

The mode calculation is a straightforward effective 1-dimensional scattering construction. It proceeds in two steps:
\begin{itemize}
\item We solve a scattering problem in region I, with asymptotic conditions determined by our choice of initial state 
on the surfaces $\EH^-, \dSH^-$. 
\item We subsequently solve a scattering problem in region II, with asymptotic conditions determined by our choice of initial state 
on $\EH^L$ and by the data from the previous step on $\EH^R$. 
\end{itemize}
Because similar constructions have been described in the literature previously for the case of RN \cite{Sela18}, we will be brief and focus on the 
novel aspects.  Actually, the calculation is not more difficult for the operators $(\partial_V^N \Phi)^2$, which are equal to $T_{VV}$ for $N=1$, using the 
obvious generalization of \eqref{eq:ptsplit} to the operators $(\partial_V^N \Phi)^2$. 

Since $\partial v/\partial V = - (\kappa_\ca V)^{-1}$ with $V=V_\ca$ the outgoing 
Kruskal-type coordinate regular in a neighborhood of $\CH^R$, we can work instead with the variable $v$, so we are interested in 
$\langle (\partial_v^N \Phi)^2 \rangle_\rU-\langle (\partial_v^N \Phi)^2 \rangle_\rF$. 
As in \cite{Sela18}, for computational purposes only (not changing the state!), 
it is useful to consider the ``Boulware'' modes\footnote{These are analogous to \eqref{eq:LeftMover_2d} in 
2 dimensions.}, defined by the wave equation and asymptotic data\footnote{This has to be understood in the sense of wave packets, \cf \eqref{eq:f_omega_ell_limit} for how to translate this into asymptotic conditions for a scattering problem.}
\begin{subequations}
\label{eq:Psi_Boulware_4d}
\begin{align}
\label{eq:Psi_in_I_4d}
 \psi^{\rin, \rI}_{\omega \ell m} & = \begin{cases} (2 \pi)^{-\frac{3}{2}} (2 |\omega|)^{-\frac{1}{2}} Y_{\ell m}(\theta, \phi) r_\co^{-1} \Theta(-V) e^{- i \omega v} & \text{ on } \dSH, \\ 0 & \text{ on } \EH, \end{cases} \\
\label{eq:Psi_up_I_4d}
 \psi^{\rup, \rI}_{\omega \ell m} & = \begin{cases} 0 & \text{ on } \dSH \\ (2 \pi)^{-\frac{3}{2}} (2 |\omega|)^{-\frac{1}{2}} Y_{\ell m}(\theta, \phi) r_\ev^{-1} \Theta(-U) e^{- i \omega u} & \text{ on } \EH, \end{cases} \\
\label{eq:Psi_up_III_4d}
 \psi^{\rup, \rII}_{\omega \ell m} & = \begin{cases}  0 & \text{ on } \dSH \\ (2 \pi)^{-\frac{3}{2}} (2 |\omega|)^{-\frac{1}{2}} Y_{\ell m}(\theta, \phi) r_\ev^{-1} \Theta(U) e^{- i \omega u} & \text{ on } \EH, \end{cases} \\
\label{eq:Psi_out_III_4d}
 \psi^{\rout, \rII}_{\omega \ell m} & = \begin{cases} (2 \pi)^{-\frac{3}{2}} (2 |\omega|)^{-\frac{1}{2}} Y_{\ell m}(\theta, \phi) r_\ca^{-1} \Theta(-V) e^{- i \omega v} & \text{ on } \CH^L \\ 0 & \text{ on } \CH^+, \end{cases}
\end{align}
\end{subequations}
with $\Theta$ the step function, and where we used the notation \eqref{eq:def:H_c_H}. Furthermore, $V=V_{\co/\ca}$ in the first/last equation. For later convenience, we defined these modes also for $\omega < 0$. We omitted the modes $\psi^{\rin, \rIII}$ and $\psi^{\rout, \rIV}$, which are not relevant for the calculations that we are going to perform. In the next section~\ref{sec:modes2}, we shall find the exact solutions to the above characteristic initial value problems in mode form, i.e.\ in the general form 
\beq
\label{eq:modeansatz}
\psi_{\omega \ell m}(t,r,\theta,\phi) = (2 \pi)^{-\frac{3}{2}} (2 | \omega |)^{-\frac{1}{2}} Y_{\ell m}(\theta, \phi) r^{-1} R_{\omega \ell}(r) e^{- i \omega t}, 
\eeq
where the radial function $R$ has to solve
\beq
\label{eq:psi_omega_ell_wave_eqn}
 - \del_{r_*} \del_{r_*} R_{\omega \ell} + (V_\ell - \omega^2) R_{\omega \ell} = 0,
\eeq
with a smooth effective potential
\beq
 V_\ell = f \left( \frac{\ell (\ell + 1)}{r^2} + \frac{f'}{r} + \mu^2 \right).
\eeq
The particular solutions corresponding to various asymptotic conditions are denoted by superscripts, such as in 
$\psi^{\rin, \rI}_{\omega \ell m}$, with corresponding radial function $ R^{\rin, \rI}_{\omega \ell}$, etc. For example, 
$ R^{\rin, \rI}_{\omega \ell}$ should satisfy the asymptotic condition
\beq
\label{eq:f_omega_ell_limit}
 R^{\rin, \rI}_{\omega \ell}(r) = \begin{cases} e^{- i \omega r_*} + \sR^{\rin,\rI}_{\omega \ell} e^{i \omega r_*} & r_* \to \infty, \\ \sT^{\rin,\rI}_{\omega \ell} e^{- i \omega r_*} & r_* \to - \infty. \end{cases}
\eeq
Since the effective potential is smooth and, when expressed in terms of $r_*$, decays, with all its derivatives, faster than any power, we must have \cite{YafaevII}, Prop.~5.2.9,
\begin{align}
 \sR^{\rin,\rI}_{\omega \ell} & = \begin{cases} \order(\omega^{-\infty}) & \omega \to \infty, \\ 1 + \order(\ell^{-\infty}) & \ell \to \infty, \end{cases} &
 | \sT^{\rin,\rI}_{\omega \ell} |^2 & = \begin{cases} 1 + \order(\omega^{-\infty}) & \omega \to \infty, \\ \order(\ell^{-\infty}) & \ell \to \infty. \end{cases},
\end{align}
where $\order(\omega^{-\infty})$ indicates a decay faster than any power. 
This provides asymptotic data on $\EH^R$ and $\EH^L$ (the latter vanish), 
which can be transported to $\CH^L$ by again solving \eqref{eq:psi_omega_ell_wave_eqn}, now with the asymptotic condition
\beq
\label{eq:f_omega_ell_limit_2}
 R^{\rin,\rII}_{\omega \ell}(r) = \sT^{\rin,\rI}_{\omega \ell} \begin{cases}  e^{- i \omega r_*} & r_* \to - \infty, \\ \sT^{\rin,\rII}_{\omega \ell} e^{- i \omega r_*} + \sR^{\rin,\rII}_{\omega \ell} e^{i \omega r_*} & r_* \to \infty. \end{cases}
\eeq
Note that the new coefficients fulfill the relation
\beq
\label{eq:tau_rho_relation}
 | \sT^{\rin,\rII}_{\omega \ell} |^2 - | \sR^{\rin,\rII}_{\omega \ell} |^2 = 1,
\eeq
which in particular means that they need not be bounded. Nevertheless, we again have
\begin{align}
 \sR^{\rin,\rII}_{\omega \ell} & = \order(\omega^{-\infty}), &
 | \sT^{\rin,\rII}_{\omega \ell} |^2 & = 1 + \order(\omega^{-\infty}).
\end{align}
We thus obtain, for the restriction of the ``Boulware'' in-mode to $\CH^L$,
\beq
 \label{eq:Psi_in_I_4d_at_-}
 \psi^{\rin, \rI}_{\omega \ell m} |_{\CH^L} = \sT^{\rin,\rII}_{\omega \ell} \sT^{\rin,\rI}_{\omega \ell} \psi^{\rout, \rII}_{\omega \ell m}.
\eeq 
For the up-modes $\psi^{\rup, \rI}_{\omega \ell m}$, we proceed similarly, i.e., we first scatter them to $\EH^R$ by solving again \eqref{eq:psi_omega_ell_wave_eqn}, now with the asymptotic condition
\beq
\label{eq:AsymptoticCondition_I_up}
 R_{\omega \ell}^{\rup,\rI}(r) = \begin{cases} e^{i \omega r_*} + \sR^{\rup,\rI}_{\omega \ell} e^{-i \omega r_*} & r_* \to - \infty, \\ \sT^{\rup,\rI}_{\omega \ell} e^{i \omega r_*} & r_* \to \infty. \end{cases}
\eeq
One then scatters to $\CH^L$ as for the in-modes and obtains
\beq
\label{eq:Psi_up_I_4d_at_-}
 \psi^{\rup, \rI}_{\omega \ell m} |_{\CH^L} = \sT^{\rin,\rII}_{\omega \ell} \sR^{\rup,\rI}_{\omega \ell} \psi^{\rout, \rII}_{\omega \ell m},
\eeq
For the up-modes $\psi^{\rup, \rII}_{\omega \ell m}$, only a single scattering is necessary. The radial function 
$ R^{\rup,\rII}_{\omega \ell}$ is defined by the asymptotic condition
\beq
\label{eq:f_omega_ell_limit_up}
 R^{\rup,\rII}_{\omega \ell}(r) = \begin{cases} e^{i \omega r_*} & r_* \to - \infty, \\ \sT^{\rup,\rII}_{\omega \ell} e^{i \omega r_*} + \sR^{\rup,\rII}_{\omega \ell} e^{- i \omega r_*} & r_* \to \infty. \end{cases}
\eeq
This gives
\beq
\label{eq:Psi_up_III_4d_at_-}
 \psi^{\rup, \rII}_{\omega \ell m} |_{\CH^L} = \sR^{\rup,\rII}_{\omega \ell} \psi^{\rout, \rII}_{\omega \ell m}.
\eeq
Comparing \eqref{eq:f_omega_ell_limit_up} with \eqref{eq:f_omega_ell_limit_2} yields
\begin{align}
\label{eq:up_in_relation}
 \sR^{\rup,\rII}_{\omega \ell} & = \overline{\sR^{\rin,\rII}_{\omega \ell}}, &
 \sT^{\rup,\rII}_{\omega \ell} & = \overline{\sT^{\rin,\rII}_{\omega \ell}}.
\end{align}

The symmetrized two-point function, i.e., the Hadamard function, of the Unruh state defined by the modes \eqref{eq:Psi_MU_4d} can be expressed in terms of the Boulware modes \eqref{eq:Psi_Boulware_4d} as, \cf \cite{Lanir:2017oia} for a similar calculation (but note the change of sign of $u$ in the interior region \wrt that reference),
\begin{align}
\label{eq:G_MU}
 & \langle \{ \Phi(x_1), \Phi(x_2) \} \rangle_\rU \\
& = \sum_{\ell m} \int_0^\infty \ud \omega  \left[ \coth \frac{\pi \omega}{\kappa_\co} \left\{ \psi^{\rin, \rI}_{\omega \ell m}(x_1), \overline{\psi^{\rin, \rI}_{\omega \ell m}(x_2)} \right\} + 2 \csch \frac{\pi \omega}{\kappa_\ev} \Re \left\{ \psi^{\rup, \rI}_{\omega \ell m}(x_1), \overline{\psi^{\rup, \rII}_{\omega \ell m}(x_2)} \right\} \right. \nn \\ 
 & \qquad \left. + \coth \frac{\pi \omega}{\kappa_\ev} \left( \left\{ \psi^{\rup, \rI}_{\omega \ell m}(x_1), \overline{\psi^{\rup, \rI}_{\omega \ell m}(x_2)} \right\} + \left\{ \psi^{\rup, \rII}_{-\omega \ell m}(x_1), \overline{\psi^{\rup, \rII}_{-\omega \ell m}(x_2)} \right\} \right) \right], \nn
\end{align}
where we omitted the contribution from $\psi^{\rin, \rIII}_{\omega \ell m}$, which is absent in $\rI \cup \rII$, and used the notation
\beq
 \{ A(x_1), B(x_2) \} = \frac{1}{2} \left( A(x_1) B(x_2) + A(x_2) B(x_1) \right).
\eeq

We now turn to the comparison state. 
The specification of this state in terms of modes on a null hypersurface has the property that, after selecting an angular momentum component,
\beq
 \Phi_{\ell m}(U, V) = r \int \overline{Y_{\ell m}(\ws)} \Phi(U,V,\ws) \ud^2 \Omega,
\eeq
and taking at least one derivative \wrt $V=V_\ca$, one can evaluate the two-point function on the null hypersurface, i.e.,
\beq
  \lim_{U \to \infty} \langle \del^n_V \Phi_{\ell_1 m_1}^*(U, V_1) \del^n_{V_2} \Phi_{\ell_2 m_2}(U, V_2) \rangle_\rF = \delta_{\ell_1 \ell_2} \delta_{m_1 m_2} 
  \frac{1}{16 \pi^3} \int_0^\infty \ud k \ k^{2n -1} e^{- i k (V_1 - V_2 - i 0)}.
\eeq
When both points are on $\CH^L$, one may reexpress this as, \cf \cite{Lanir:2017oia}  for example,
\beq
 \lim_{U \to \infty} \langle \del^n_{v} \Phi_{\ell_1 m_1}^*(U, v_1) \del^n_{v} \Phi_{\ell_2 m_2}(U, v_2) \rangle_\rF = \delta_{\ell_1 \ell_2} \delta_{m_1 m_2} \frac{1}{16 \pi^3} \int_0^\infty \ud \omega \ \omega^{2n -1} \coth \frac{\pi \omega}{\kappa_\ca} e^{- i \omega (v_1 - v_2 - i 0)}.
\eeq

With \eqref{eq:Psi_in_I_4d_at_-}, \eqref{eq:Psi_up_I_4d_at_-}, \eqref{eq:Psi_up_III_4d_at_-} and \eqref{eq:G_MU}, we thus obtain for the difference of the expectation value in the Unruh and the comparison state, in a coinciding point limit,
\begin{multline}
\label{eq:del_V_Phi_ell_m_del_V_Phi_ell_m}
\lim_{U \to \infty, v_1 \to v_2} \left[ \langle \del_v^n \Phi_{\ell_1 m_1}^*(U, v_1) \del_v^{n} \Phi_{\ell_2 m_2}(U, v_2) \rangle_\rU - \langle \del_v^n \Phi_{\ell_1 m_1}^*(U, v_1) \del_v^{n} \Phi_{\ell_2 m_2}(U, v_2) \rangle_\rF \right] \\
 = \delta_{\ell_1 \ell_2} \delta_{m_1 m_2} \frac{1}{16 \pi^3}  \int_0^\infty \ud \omega \ \omega^{2n-1}   n_{\ell_1}(\omega) , 
\end{multline}
where
 \beq
 \begin{split}
\label{eq:del_V_Phi_ell_m_del_V_Phi_ell_m1}
 n_{\ell}(\omega)
 =& \ | \sT^{\rin,\rI}_{\omega \ell} |^2 | \sT^{\rin,\rII}_{\omega \ell} |^2 \coth \frac{\pi \omega}{\kappa_\co} + \left( | \sR^{\rup,\rI}_{\omega \ell} |^2 | \sT^{\rin,\rII}_{\omega \ell}|^2 + | \sR^{\rup,\rII}_{\omega \ell} |^2 \right) \coth \frac{\pi \omega}{\kappa_\ev} \\
& + 2 \csch \frac{\pi \omega}{\kappa_\ev} \Re (\sR^{\rup,\rI}_{\omega \ell} \sT^{\rin,\rII}_{\omega \ell} \overline{\sR^{\rup,\rII}_{\omega \ell}}) - \coth \frac{\pi \omega}{\kappa_\ca} .
 \end{split}
\eeq
From the asymptotic behaviour of the transmission and reflection coefficients, it follows that 
\eqref{eq:del_V_Phi_ell_m_del_V_Phi_ell_m1} falls off faster than any power in $\omega$ for $|\omega| \to \infty$, so that the integral 
\eqref{eq:del_V_Phi_ell_m_del_V_Phi_ell_m} is UV finite for all $n \geq 1$. However, one may worry that there is an IR divergence for $n = 1$, due to the fact that generically $\sT^{\rin,\rII}_{\omega \ell}$ and hence, by \eqref{eq:tau_rho_relation}, also $\sT^{\rin,\rII}_{\omega \ell}$ have a simple pole at $\omega = 0$, \cite{KehleShlapentokhRothman}, Prop.~6.2, so that the second and the third term on the \rhs are individually IR divergent for $n=1$ (in the first term, $| \sT^{\rin,\rI}_{\omega \ell} |^2$ vanishes rapidly as $\omega \to 0$).\footnote{The nature of this pole can also be seen explicitly from the expressions given for the scattering coefficients in sec. \ref{sec:modes2}.} 
However, the potential IR divergences due to these terms cancel: We first notice that $\sR^{\rup,\rI}_{\omega \ell} = - 1 + i C_\ell \omega + \order(\omega^2)$ as $\omega \to 0$, where $C_\ell \in \R$, \cf \cite{Sela18}.  The statement then follows from \eqref{eq:up_in_relation} and the fact, derivable from inspection of the proof of \cite{KehleShlapentokhRothman}, Prop.~6.2, that, in case of a divergent $\sT^{\rin,\rII}_{\omega \ell}$,
\begin{align}
\label{eq:tau_rho_omega_to_0}
 \sT^{\rin,\rII}_{\omega \ell} & = \frac{i C'_\ell}{\omega} + \order(\omega^0), &
 \sR^{\rin,\rII}_{\omega \ell} & = - \frac{i C'_\ell}{\omega} + \order(\omega^0),
\end{align}
where $C'_\ell \in \R$.\footnote{This follows from the fact that the coefficients $A$, $B$ in (6.5) of \cite{KehleShlapentokhRothman} are real.} It is then straightforward to see that the potential IR divergences in \eqref{eq:del_V_Phi_ell_m_del_V_Phi_ell_m} cancel. 

We also note that due to $| \sT^{\rin, \rI}_{\omega \ell} |^2 = | \sT^{\rup, \rI}_{\omega \ell} |^2$, it suffices to determine $\sR^{\rup,\rI}_{\omega \ell}$, $\sT^{\rup,\rII}_{\omega \ell}$, and $\sR^{\rup,\rII}_{\omega \ell}$ in order to evaluate \eqref{eq:del_V_Phi_ell_m_del_V_Phi_ell_m1}.

If the integral on the \rhs of \eqref{eq:del_V_Phi_ell_m_del_V_Phi_ell_m} gives a non-zero value for any $n$, then the Unruh state is singular at $\CH^R$: By stationarity and continuity, the values on $\CH^R$ and $\CH^L$ coincide. Converting the resulting tensor to adapted coordinates $V=V_\ca$ instead of $v$, one obtains 
\beq
\langle \del_V^n \Phi_{\ell m}^* \del_V^n \Phi_{\ell m} \rangle_\rU - \langle \del_V^n \Phi_{\ell m}^* \del_V^n \Phi_{\ell m} \rangle_\rF \sim c_{n\ell} |V|^{-2n},
\eeq
where $c_{n\ell}$ is $\kappa^{2n}_\ca$ times the value of 
the integral on the right hand side of \eqref{eq:del_V_Phi_ell_m_del_V_Phi_ell_m}. 
That the integral vanishes for all $\ell$ and all $n \geq 1$ seems quite improbable, and this expectation is confirmed in sec.~\ref{sec:Numerics}. Because the comparison state 
$\rF$ is Hadamard in an open neighborhood on $\CH^R$ by thm.~\ref{thm:hadaU}, any singular behavior must be due to the Unruh state $\rU$ and not the comparison state $\rF$. Therefore, in view of property H1) of subsection \ref{sec:4.1}, we conclude:

\medskip
\noindent
{\bf Conclusion 1:}
Unless the quantity on the right side of \eqref{eq:del_V_Phi_ell_m_del_V_Phi_ell_m} vanishes for all $\ell$ and all $n \geq 1$, the Unruh state
cannot be a Hadamard state in a two sided open neighborhood of $\CH^R$ (of course it is Hadamard away from $\CH^R$ by thm. \ref{thm:hadaU}).

\medskip 

While for the angular momentum components $\Phi_{\ell m}$ the limits $U \to \infty$ and $v_1 \to v_2$ on the left hand side of \eqref{eq:del_V_Phi_ell_m_del_V_Phi_ell_m} can be controlled and shown to converge to the right hand side, this is not straightforward for $\Phi$, required for the evaluation of the difference of expectation values of local Wick powers such as the stress tensor $T_{vv}$ on $\CH^L$. Noting that we can write
\beq
 \del_v \Phi(x_1) \del_v \Phi(x_2) = \frac{1}{r_1 r_2} \sum_{\ell_1, \ell_2} \sum_{m_1, m_2} Y_{\ell_1 m_1}^*(\Omega_1) Y_{\ell_2 m_2}(\Omega_2) \del_v \Phi_{\ell_1 m_1}^*(U_1, V_1) \del_v \Phi_{\ell_2 m_2}(U_2, V_2),
\eeq
and assuming that we may interchange the limits $U \to \infty$ and $v_1 \to v_2$ with the summations, we obtain
\beq
\label{eq:stint}
 \langle T_{vv} \rangle_\rU - \langle T_{vv} \rangle_\rF = \frac{1}{16 \pi^3 r_\ca^2} \sum_\ell \frac{2 \ell + 1}{4 \pi} \int_0^\infty \ud \omega \ \omega n_{\ell}(\omega) ,
\eeq
with $n_{\ell}(\omega)$ as in \eqref{eq:del_V_Phi_ell_m_del_V_Phi_ell_m1}. As discussed above, the integral on the right hand side converges for all $\ell$. However, the convergence of the sum over $\ell$ is not obvious. Numerical evidence, see below, indicates that it does converge, which we take as an indication that the above interchange of limits is justified. In view of the regularity of $\rF$ on $\CH^R$, the relationship between $v$ and $V_\ca$, and prop.~\ref{thm:hadadiff}, we then also conclude:

\medskip
\noindent
{\bf Conclusion 2:} If the sum on right side of \eqref{eq:stint} converges to a nonzero value, then for any initially a Hadamard state $\Psi$ in a neighborhood of the Cauchy surface $\Sigma$ as in fig. \ref{fig:1} and for any RNdS spacetime with $\beta>1/2$ (see \eqref{eq:betadef}), then the expected stress tensor behaves in region $\rII$ near 
$\CH^R$ as 
\beq
\label{eq:MainResult1}
\langle T_{VV} \rangle_\Psi \sim C |V|^{-2} + t_{VV} \, .
\eeq
Here $C$ is given by $\kappa_\ca^2$ times the right side of \eqref{eq:stint} and only depends on the parameters of the black hole but not on $\Psi$.  
$t_{VV}$, which depends on $\Psi$, has the regularity prop. \ref{thm:hadadiff}, i.e. $t_{VV} \in L^p$ as a function of $V$ locally near $\CH^R$ with 
$1/p = 2-2\beta+2\epsilon$ and with $\beta-1/2-\epsilon$ fractional derivatives w.r.t. $V$ in $L^{1/(3/2 - \beta +\epsilon)}$, where $\epsilon>0$ is arbitrarily small.

\medskip

Based on the results of the next section \ref{sec:modes2}, we shall numerically
evaluate the right sides of \eqref{eq:del_V_Phi_ell_m_del_V_Phi_ell_m1} and \eqref{eq:stint} in the case of the conformally coupled scalar field and find them to be nonvanishing. In particular, this provides evidence that \eqref{eq:MainResult1} holds with $C \neq 0$ for generic values of the black hole parameters.

\medskip
\noindent
{\bf Remark:}
These results should be compared with the expression \eqref{eq:ModeSum2dBoulware} in the 2 dimensional case. In particular, in \eqref{eq:del_V_Phi_ell_m_del_V_Phi_ell_m1} one explicitly sees the effect of backscattering in 
4 dimensions, as the up modes also contribute. Analogous expressions for the evaluation of $T_{vv}$ in the Unruh state on RN (without the subtraction of the contribution of the comparison state) can be found in \cite{Zilberman:2019buh}.

\subsection{Computation of the scattering coefficients}

\label{sec:modes2}

We next describe how to determine the scattering coefficients $\sR^{\rup,\rI}$, $\sT^{\rin,\rII}$, $\sR^{\rin,\rII}$ needed for the evaluation of \eqref{eq:del_V_Phi_ell_m_del_V_Phi_ell_m1}. When expressed in terms of $r$, the radial equation \eqref{eq:psi_omega_ell_wave_eqn} has regular singular points at $r = r_\ca, r_\ev, r_\co, r_o, \infty$ (recall \eqref{eq:def_r_o} for the definition of $r_o$). The transformation 
\beq
 x = \frac{(r_\ca - r_o) (r - r_\ev)}{(r_\ca - r_\ev) (r - r_o)},
\eeq
maps $r_\ca$, $r_\ev$, $r_\co$, $r_o$, $\infty$ to
\begin{align}
 x_\ca & = 1, & 
 x_\ev & = 0, & 
 x_\co & = \frac{(r_\ca - r_o) (r_\co - r_\ev)}{(r_\ca - r_\ev) (r_\co - r_o)}, & 
 x_o & = \infty, &
 x_\infty & = \frac{r_\ca - r_o}{r_\ca - r_\ev}
\end{align}
which are the singular points of the radial equation written in terms of $x$. 
In the conformally coupled case, $\mu^2 = \frac{1}{6} R = \frac{2}{3}\Lambda$, it is possible to factor out the singularity at $x_\infty$ by a suitable transformation \cite{SuzukiEtAl1999}
of the radial equation\footnote{One can in principle also treat the case of different values of $\mu$ by a variant of the method below using results by \cite{Schmidt1979}.}. On account of this simplification, we restrict consideration to the conformally coupled case from this point on. 

We define
\begin{align}
 b_\ev & = + i \frac{\omega}{2 \kappa_\ev}, &
 b_\ca & = - i \frac{\omega}{2 \kappa_\ca}, &
 b_\co & = - i \frac{\omega}{2 \kappa_\co}, &
 b_o & = + i \frac{\omega}{2 \kappa_o},
\end{align}
and make the ansatz
\beq
 (r^{-1} R)(r(x)) = (\pm x)^{\sigma_\ev b_\ev} (1-x)^{\sigma_\ca b_\ca} \left( \frac{x - x_\co}{1 - x_\co} \right)^{\sigma_\co b_\co} \frac{x - x_\infty}{1 - x_\infty} H(x),
\eeq
where $\sigma_X = \pm 1$ and the sign in the first factor is chosen below depending on whether one considers solutions in $\rI$ or $\rII$.
It is then found \cite{SuzukiEtAl1999} that $H(x)$ satisfies Heun's equation \cite{Ronveaux} with the singular points $x_\ca, x_\ev, x_\co, \infty$. The parameters of this 
equation can be written in terms of these and $x_\infty$. 

It is known that solutions to  Heun's equation which are analytic in a domain of the complex plane
containing two of the four singular points can be constructed as series of various special functions, see 
\cite{Schmidt1979} and \cite{Leaver1986JMP,Leaver1986} for the origins of this method. For us, it will be useful to follow a variant of this method where the special functions 
are hypergeometric functions, \cf \cite{AbramowitzStegun}, Chapter~15. Of particular interest for us are the domains containing $\{x_\ev, x_\ca\}$ respectively $\{x_\co, x_\infty \}$. In the first case one can define solutions fulfilling ``in'' or ``up'' boundary conditions at $r_\ev$ and compute the scattering coefficients $\sT^{\rup,\rII}$, $\sR^{\rup,\rII}$. In the second case, one can define solutions fulfilling specified boundary conditions at $r_\co$. As the domains overlap, one can compute the scattering coefficients $\sR^{\rup,\rI}$.

The solution to the radial equation \eqref{eq:psi_omega_ell_wave_eqn} with ``up'' boundary condition at $r_\ev$ in region $\rII$ can be written, up to normalization, as
\begin{multline}
\label{eq:R_up}
 r^{-1} R^{\rup, \rII}_{\omega \ell,\nu}(r) = x^{b_\ev} (1-x)^{b_\ca} \left( \frac{x - x_\co}{1 - x_\co} \right)^{b_\co} \frac{x - x_\infty}{1 - x_\infty} \\
 \times \sum_{n = - \infty}^\infty a_n^\nu F(-n-\nu+b_\ev+b_\ca, n+\nu+b_\ev+b_\ca+1; 1 + 2 b_\ev; x),
\end{multline}
with $F={}_2 F_1$ the Gauss hypergeometric function. The so-called characteristic exponent $\nu$ is, a priori, an undetermined parameter unrelated to the parameters appearing in the radial-- resp. Heun equation. 
The ansatz can be shown  \cite{SuzukiEtAl1999} to yield a formal solution for {\em any} $\nu$, provided the coefficients $a^\nu_n$ solve the three term recursion relation
\beq
\label{eq:3_term_recursion}
 \alpha^\nu_n a^\nu_{n+1} + \beta^\nu_n a^\nu_n + \gamma^\nu_n a^\nu_{n-1} = 0
\eeq
with initial condition
\beq
 a_0^\nu = 1
\eeq
and coefficients
\begin{subequations}
\begin{align}
 \alpha_n^\nu & = - \frac{(n + \nu - b_\ev - b_\ca + 1) (n + \nu - b_\co + b_o + 1) (n + \nu - b_\co - b_o + 1) (n + \nu - b_\ev + b_\ca + 1)}{2 (n + \nu + 1)(2n + 2\nu + 3)}, \\
 \beta_n^\nu & = \frac{(b_\ev + b_\ca) (b_\ev - b_\ca) (b_\co - b_o) (b_\co + b_o)}{2 (n+\nu)(n+\nu+1)} + \left( \frac{1}{2} - x_\co \right) (n+\nu) (n+\nu+1) \nn \\
 & \qquad + \frac{1}{2} \left( (2 b_\co + 1) (b_\ev - b_\ca) - (2 b_\ca + 1) (b_\ev + b_\ca) + (b_\ev + b_\ca + b_\co + 1)^2 - b_o^2 \right) \\
 & \qquad + x_\co \left( (b_\ev + b_\ca)^2 + b_\ev + b_\ca \right) + v, \nn  \\
 \gamma_n^\nu & = - \frac{(n + \nu + b_\ev + b_\ca) (n + \nu + b_\co - b_o) (n + \nu + b_\co + b_o) (n + \nu + b_\ev - b_\ca)}{2 (n + \nu)(2n + 2\nu - 1)},
\end{align}
\end{subequations}
where
\begin{align}
 v & = 2 \omega^2 \frac{r_\ev^3 (r_\ev r_\ca - 2 r_\ca r_\co + r_\ev r_\co) (r_o^2 - r_\ev r_\ca - r_\ev r_\co - r_\ca r_\co)^2}{(r_\ev-r_\ca)^3 (r_\ev-r_\co)^2 (r_\ev-r_o) (r_\co-r_o)} - 2 \frac{r_o^2}{(r_\ev - r_\ca)(r_\co - r_o)} \nn \\
 & \quad - x_\infty - ( 1 + x_\co ) b_\ev - x_\co b_\ca - b_\co - 2 b_\ev (x_\co b_\ca + b_\co) - \ell (\ell+1) \frac{r_o^2 - r_\ev r_\ca - r_\ev r_\co - r_\ca r_\co}{(r_\ev - r_\ca)(r_\co - r_o)}.
\end{align}
However, for generic values of the characteristic exponent $\nu$, the solution is only formal in the sense that the series in \eqref{eq:R_up} fails to converge. 
To get a convergent (bilateral) series, $\nu$ has to be determined such that the recursion relation \eqref{eq:3_term_recursion} has a solution which is ``minimal'' in the 
sense of \cite{Gautschi}, both for the ascending $n \to + \infty$ and for the descending $n \to - \infty$ series. By the general theory of three-term relations, a solution which is 
minimal in both directions can be determined as follows. Using one of the algorithms for finding the minimal solution to a three-term recursion relation described in \cite{Gautschi}, one determines
\begin{align}
 \rho_n^\nu & = \frac{a_n^\nu}{a_{n-1}^\nu}, & \lambda_n^\nu & = \frac{a_n^\nu}{a_{n+1}^\nu},
\end{align}
the former by recursion starting at $n = + \infty$, the latter by recursion starting at $n = - \infty$ (for numerical investigations, at some integer of large modulus). The critical exponent $\nu$ is then determined by solving the transcendental equation
\beq
\label{eq:Transcendental}
 \rho_n^\nu \lambda_{n-1}^\nu = 1.
\eeq
Choosing this minimal solution, it follows by thm.~2.2 of \cite{Gautschi} and the well-known asymptotics of 
$F={}_2 F_1$  that the series for \eqref{eq:R_up} converges in some ellipse in the complex plane with focal points $\{x_\ev, x_\ca\}$ \cite{SuzukiEtAl1999}.

It is obvious from the series \eqref{eq:R_up} that the critical exponent is only determined up to an integer. Furthermore, one can show \cite{SuzukiEtAl1999} that if $\nu$ is a solution to \eqref{eq:Transcendental}, then so is $- \nu - 1$, with the corresponding coefficients related by
\beq
\label{eq:a_reciprocity}
 a^{-\nu-1}_{-n} = a^\nu_n.
\eeq

Although the solution of the transcendental equation \eqref{eq:Transcendental} does not seem possible in closed form, one can in practice get solutions as power series expansions in suitably small 
parameters. For example, one can obtain the following (not necessarily convergent) asymptotic expansion for large $r_\co$ (near RN) and small $\omega$,
\beq
\label{eq:nuAsymptotically}
\nu \sim \ell \left[
1 + \sum_{m,n \ge 0}' \nu_{n,m} \left( \frac{\omega}{\ell} \right)^n \left( \frac{1}{r_\co} \right)^m \right], 
\eeq
where the prime indicates that $(m,n)=(0,0)$ is excluded. The coefficients depend on $\ell$ and $r_\ca, r_\ev$, remain bounded for $\ell \to \infty$,  and can be found 
by direct substitution into the defining equation for $\nu$. The lowest order in $n+m$ coefficients turn out to be
\begin{subequations}
\begin{align}
 \nu_{2,0} & = \frac{ \ell(11 r_\ca^2 + 14 r_\ca r_\ev + 11 r_\ev^2) - \ell^2 (\ell+1) (15 r_\ca^2 + 18 r_\ca r_\ev + 15 r_\ev^2) }{2 (2 \ell + 1) ( 4 \ell (\ell+1) - 3)}, \\
 \nu_{0,2} & = \frac{(15 \ell^4 + 30 \ell^3 + 9 \ell^2 - 6 \ell - 4) (r_\ca^2 + r_\ev^2) + (18 \ell^4 + 36 \ell^3 + 10 \ell^2 - 8 \ell - 4) r_\ca r_\ev}{2\ell  (2 \ell + 1) ( 4 \ell (\ell+1) - 3)}, \\
 \nu_{0,3} & = \frac{-(15 \ell^4 + 30 \ell^3 + 9 \ell^2 - 6 \ell - 4) (r_\ca^2 + r_\ev^2)^2  - (33 \ell^4 + 66 \ell^3 + 19 \ell^2 - 14 \ell - 8) r_\ca r_\ev (r_\ca + r_\ev)}{2 \ell (8 \ell^3 + 12 \ell^2 - 2 \ell -3)}.
\end{align}
\end{subequations}
To determine the normalization of the solution \eqref{eq:R_up}, we note that, by $F(a,b;c;0) = 1$, it behaves as
\beq
 r^{-1} R^{\rup, \rII}_{\omega \ell, \nu}(x) \sim e^{i \omega (r_* + D')} \left(  \frac{-x_\co}{1-x_\co} \right)^{b_\co} \frac{-x_\infty}{1-x_\infty} \sum_{n=-\infty}^\infty a_n^\nu
\eeq
near $r_\ev$, with a constant $D'$, related to the integration constant $D$ in \eqref{eq:r*IntegrationConstant}.
To determine the asymptotic behavior at the Cauchy horizon, we use \cite{AbramowitzStegun}, 15.3.6,
\begin{multline}
\label{eq:F_identity}
 F(a, b; c; x) = \frac{\Gamma(c) \Gamma(c-a-b)}{\Gamma(c-a) \Gamma(c-b)} F(a, b; a+b-c+1; 1-x) \\ + (1-x)^{c-a-b} \frac{\Gamma(c) \Gamma(a+b-c)}{\Gamma(a) \Gamma(b)} F(c-a, c-b; c-a-b+1; 1-x).
\end{multline}
It follows that near $r_\ca$, the solution \eqref{eq:R_up} behaves as
\begin{multline}
 r^{-1} R^{\rup, \rII}_{\omega \ell, \nu} \sim e^{i \omega (r_* + D'')} \sum_{n=-\infty}^\infty a^\nu_n \frac{\Gamma(1 + 2 b_\ev) \Gamma(- 2 b_\ca)}{\Gamma(b_\ev - b_\ca +n+\nu+1) \Gamma(b_\ev - b_\ca-n-\nu)} \\
 + e^{-i \omega (r_* + D'')} \sum_{n=-\infty}^\infty a^\nu_n \frac{\Gamma(1 + 2 b_\ev) \Gamma(2 b_\ca)}{\Gamma(b_\ev + b_\ca-n-\nu) \Gamma(b_\ev + b_\ca+n+\nu+1)} ,
\end{multline}
with another integration constant $D''$. From this, one can read off
\begin{subequations}
\begin{align}
 \sT^{\rup,\rII}_{\omega \ell} & = e^{-i \omega (D'-D'')} \frac{r_\ca}{r_\ev} \left( \frac{1-x_\co}{-x_\co} \right)^{b_\co} \frac{1-x_\infty}{-x_\infty} \Gamma(1+2 b_\ev) \Gamma(-2 b_\ca) \\
 & \qquad \qquad \times \frac{\sum_{n=-\infty}^\infty a^\nu_n [\Gamma(b_\ev - b_\ca+n+\nu+1) \Gamma(b_\ev - b_\ca-n-\nu)]^{-1}}{\sum_{n=-\infty}^\infty a_n^\nu}, \nn \\
 \sR^{\rup,\rII}_{\omega \ell} & = e^{-i \omega (D'+D'')} \frac{r_\ca}{r_\ev} \left( \frac{1-x_\co}{-x_\co} \right)^{b_\co} \frac{1-x_\infty}{-x_\infty} \Gamma(1 + 2 b_\ev) \Gamma(2 b_\ca) \\ 
 & \qquad \qquad \times \frac{\sum_{n=-\infty}^\infty a^\nu_n [\Gamma(b_\ev + b_\ca-n-\nu) \Gamma(b_\ev + b_\ca+n+\nu+1)]^{-1}}{\sum_{n=-\infty}^\infty a_n^\nu}. \nn
\end{align}
\end{subequations}
From the factors $\Gamma(\pm 2 b_\ca)$, one easily verifies the asymptotic behavior \eqref{eq:tau_rho_omega_to_0}. 
Note that in both expressions the fraction of the sums converges to 
\beq
- \frac{1}{\pi} \sin \pi \nu \frac{\sum_{n=-\infty}^\infty (-1)^n a_n^\nu}{\sum_{n=-\infty}^\infty a_n^\nu}
\eeq
in the limit as $\omega \to 0$, so that the divergence at $\omega \to 0$ cannot be enhanced but possibly cancelled if $\nu \to \Z$ for $\omega \to 0$, which, as can be seen from \eqref{eq:nuAsymptotically}, happens in the RN limit, consistent with results by \cite{KehleShlapentokhRothman}.

In order to determine the coefficient $\sR^{\rup,\rI}$, we first define the up mode $R^{\rup, \rI}_{\omega \ell, \nu}$ in region $\rI$ by replacing in \eqref{eq:R_up} the first factor by $(-x)^{b_\ev}$. Then, by \cite{AbramowitzStegun}, 15.3.8, and \eqref{eq:a_reciprocity},
one can write
\beq
 R^{\rup, \rI}_{\omega \ell, \nu}(x) = R^{\ev, \rI}_{\omega \ell, \nu}(z) + R^{\ev, \rI}_{\omega \ell, -\nu-1}(z),
\eeq
where
\begin{multline}
 r^{-1} R^{\ev, \rI}_{\omega \ell, \nu}(z) = (z-1)^{b_\ev} z^{b_\ca} \left( 1 - \frac{z}{z_\co} \right)^{b_\co} \left( 1 - \frac{z}{z_\infty} \right) \Gamma(1+2 b_\ev) \\
 \times \sum_{n=-\infty}^{\infty} a^\nu_n \frac{\Gamma(2n+2\nu+1)}{\Gamma(n+\nu+b_\ev+b_\ca+1) \Gamma(n+\nu+b_\ev-b_\ca+1)} z^{n+\nu-b_\ev-b_\ca} \\
 \times F(b_\ev+b_\ca-n-\nu,b_\ev-b_\ca-n-\nu;-2n-2\nu;1/z).
\end{multline}
Here $z = 1 - x$, and analogously for $z_\co$, $z_\infty$.
By \eqref{eq:F_identity}, the behavior near the event horizon is given by 
\begin{align}
 r^{-1} R^{\ev, \rI}_{\omega \ell, \nu} & \sim e^{i \omega (r_* + D''')} \left( \frac{-x_\co}{1-x_\co} \right)^{b_\co} \frac{-x_\infty}{1-x_\infty} \frac{\sin [\pi (\nu + b_\ev + b_\ca)] \sin [\pi (\nu + b_\ev - b_\ca)] }{\sin 2 \pi b_\ev \sin 2 \pi \nu}
 \sum_{n=-\infty}^\infty a_n^\nu \nn \\
 & \qquad + e^{-i \omega (r_* + D''')} \left( \frac{-x_\co}{1-x_\co} \right)^{b_\co} \frac{-x_\infty}{1-x_\infty} \Gamma(1+2 b_\ev) \Gamma(2 b_\ev) \Gamma(1+2\nu) \Gamma(-2\nu) \nn \\
 & \qquad \quad \times \sum_{n=-\infty}^\infty a_n^\nu \left[ \Gamma(1+n+\nu+b_\ev+b_\ca) \Gamma(1+n+\nu+b_\ev-b_\ca) \right. \nn \\
\label{eq:R_ev_near_r_ev}
 & \qquad \qquad \quad \times \left. \Gamma(-n-\nu+b_\ev+b_\ca) \Gamma(-n-\nu+b_\ev-b_\ca) \right]^{-1},
\end{align} 
with some integration constant $D'''$, where we also used
\beq
\label{eq:GammaReflection}
 \Gamma(z) \Gamma(1-z) = \pi \csc \pi z.
\eeq
An analogous solution in region $\rI$ which is regular at the cosmological horizon $r_\co$ is given by
\begin{multline}
 r^{-1} R^{\co, \rI}_{\omega \ell, \nu}(z) = (z-1)^{b_\ev} z^{b_\ca} \left( 1 - \frac{z}{z_\co} \right)^{b_\co} \left( 1 - \frac{z}{z_\infty} \right) \frac{1}{\Gamma(b_\ev+b_\ca-b_\co-b_o+1)} \\
 \times \sum_{n=-\infty}^{\infty} a^\nu_n \frac{\Gamma(n+\nu-b_\co+b_o+1) \Gamma(n+\nu-b_\co-b_o+1)}{\Gamma(2n+2\nu+2)} \left( \frac{z}{z_\co} \right)^{n+\nu-b_\ev-b_\ca} \\
 \times F(n+\nu+b_\ev-b_\ca+1,n+\nu+b_\ev+b_\ca+1;2n+2\nu+2;z/z_\co).
\end{multline}
It follows from the coefficients in our three-term relation, by thm.~2.2 of \cite{Gautschi} and the well-known asymptotics of 
$F={}_2 F_1$  that the series for \eqref{eq:R_up} converges in some ellipse in the complex $z_\co / z$ plane with focal points $\{0, 1\}$, containing in particular $\{ x_\co, x_\infty \}$ in the complex $x$ plane.
By comparison of coefficients one then finds that \cite{SuzukiEtAl1999}
\beq
\label{eq:K_relation}
 R^{\ev, \rI}_{\omega \ell, \nu} = K^\nu_{\omega \ell} R^{\co, \rI}_{\omega \ell, \nu}
\eeq
with
\beq
\begin{split}
 & K^\nu_{\omega \ell} = z_\co^{p+\nu-b_\ev-b_\ca} \times \\
& \frac{\Gamma(1+2 b_\ev) \Gamma(b_\ev+b_\ca-b_\co-b_o+1)}{\Gamma(p+\nu-b_\ev-b_\ca+1) \Gamma(p+\nu-b_\ev+b_\ca+1) \Gamma(p+\nu+b_\co-b_o+1) \Gamma(p+\nu+b_\co+b_o+1)} \times \\
& \sum_{n=p}^\infty (-1)^{n-p} a_n^\nu \frac{\Gamma(n+\nu-b_\ev-b_\ca+1) \Gamma(n+\nu-b_\ev+b_\ca+1) \Gamma(p+n+2\nu+1)}{\Gamma(n+\nu+b_\ev+b_\ca+1) \Gamma(n+\nu+b_\ev-b_\ca+1) (n-p)!} \times \\
& \bigg( \sum_{n=-\infty}^p a_n^\nu \frac{\Gamma(n+\nu-b_\co+b_o+1) \Gamma(n+\nu-b_\co-b_o+1)}{\Gamma(n+\nu+b_\co-b_o+1) \Gamma(n+\nu+b_\co+b_o+1) \Gamma(p+n+2\nu+2) (p-n)!}\bigg)^{-1}.
\end{split}
\eeq
Here $p$ is an arbitrary integer, on which the result does not depend (keeping this parameter provides a check for the subsequent formulas). The solutions $R^{\co, \rI}_{\omega \ell, \nu}$ and $R^{\co, \rI}_{\omega \ell, - \nu - 1}$ are linearly independent. Using \eqref{eq:a_reciprocity}, \eqref{eq:F_identity}, and the identity \eqref{eq:GammaReflection}, one can show that the linear combination
\begin{multline}
 R^{\rout, \rI}_{\omega \ell, \nu} = \sin [\pi (b_\co + b_o - \nu)] \sin[\pi (b_\co - b_o - \nu)] R^{\co, \mathrm{I}}_{\omega \ell, \nu} \\
 -  \sin [\pi (b_\co + b_o + \nu)] \sin[\pi (b_\co - b_o + \nu)] R^{\co, \mathrm{I}}_{\omega \ell, - \nu - 1}
\end{multline}
is given by
\begin{multline}
 r^{-1} R^{\rout, \rI}_{\omega \ell, \nu}(z) = (z-1)^{b_\ev} z^{b_\ca} \left( 1 - \frac{z}{z_\co} \right)^{b_\co} \left( 1 - \frac{z}{z_\infty} \right) \frac{1}{\Gamma(b_\ev+b_\ca-b_\co-b_o+1)}  \frac{\pi \sin 2 \pi \nu}{\Gamma(1+2 b_\co)} \\
 \times \sum_{n=-\infty}^{\infty} a^\nu_n \left( \frac{z}{z_\co} \right)^{n+\nu-b_\ev-b_\ca}  F(n+\nu+b_\co-b_o+1,n+\nu+b_\co+b_o+1;1+2 b_\co;1-z/z_\co).
\end{multline}
Its asymptotic form near $r = r_\co$ is given by $R^{\rout, \rI}_{\omega \ell, \nu} \sim D_\omega e^{i \omega r_*}$ with some irrelevant factor $D_\omega$. Thus, it represents an outgoing solution near the cosmological horizon. Using \eqref{eq:K_relation}, we may reexpress $R^{\rout, \rI}_{\omega \ell, \nu}$ as
\beq
\label{eq:R_out_R_ev}
 R^{\rout, \rI}_{\omega \ell, \nu} = \sin [\pi (b_\co + b_o - \nu)] \sin[\pi (b_\co - b_o - \nu)] \left( K^\nu_{\omega \ell} \right)^{-1} R^{\ev, \rI}_{\omega \ell, \nu} - \{ \nu \leftrightarrow -\nu -1 \}.
\eeq
From \eqref{eq:R_ev_near_r_ev}, we can thus read off
\begin{align}
 \sR^{\rup,\rI}_{\omega \ell} & = - e^{- 2 i \omega D'} \pi \sin (2 \pi b_\ev) \Gamma(1 + 2 b_\ev) \Gamma(2 b_\ca) \\
 & \times \bigg[  \sin [\pi (b_\co + b_o - \nu)] \sin[\pi (b_\co - b_o - \nu)] \left( K^\nu_{\omega \ell} \right)^{-1} + \{ \nu \leftrightarrow -\nu-1 \} \bigg] \nn \\   
 & \times \sum_{n=-\infty}^\infty a_n^\nu \left[ \Gamma(1+n+\nu+b_\ev+b_\ca) \Gamma(1+n+\nu+b_\ev-b_\ca) \right. \nn \\
 & \qquad \qquad \qquad \times \left. \Gamma(b_\ev+b_\ca-n-\nu) \Gamma(b_\ev-b_\ca-n-\nu) \right]^{-1} \nn \\
 & \times \bigg[ \sin [\pi (b_\co + b_o - \nu)] \sin[\pi (b_\co - b_o - \nu)] \sin [\pi (b_\ev+b_\ca+\nu)] \sin[\pi(b_\ev-b_\ca+\nu)] \left( K^\nu_{\omega \ell} \right)^{-1} \nn \\
 & \qquad + \{ \nu \leftrightarrow \nu-1 \} \bigg]^{-1} \bigg[ \sum_{n=-\infty}^\infty a_n^\nu \bigg]^{-1} . \nn
\end{align}
Regarding the integration constants $D'$, $D''$, one easily checks that these cancel in the next-to-last term in \eqref{eq:del_V_Phi_ell_m_del_V_Phi_ell_m1}, which is the only term sensitive to phases.

\subsection{Numerical results}
\label{sec:Numerics}

From the discussion in the previous sections, two open questions remain: 1) Is the ``density of states'' $n_{\ell}(\omega)$ as given in \eqref{eq:del_V_Phi_ell_m_del_V_Phi_ell_m1} non-vanishing,\footnote{As discussed in the next section, the analog of $n_{\ell}(\omega)$ in the case of the BTZ black hole vanishes identically.} with generically non-vanishing integrals of the form \eqref{eq:del_V_Phi_ell_m_del_V_Phi_ell_m}? 2) Does the sum \eqref{eq:stint} over angular momenta converge and give a generically non-vanishing result, corresponding to a nonzero $C$ in \eqref{eq:MainResult1}, resp.\ \eqref{eq:MainResult}? Without performing an exhaustive parameter scan, we provide numerical evidence that the answer to both questions is affirmative for the case of a conformally coupled scalar field.

As the near extremal case seems to be the most interesting one, due to large violations of sCC in the classical case, we pick an example from that regime. Figure~\ref{fig:L0} shows $\omega n_{0}(\omega)$ for the parameters $r_\co = 100$, $r_\ev = 2$, $r_\ca = 1.95$, corresponding to $\kappa_\co = 0.0094$, $\kappa_\ev = 0.0062$, $\kappa_\ca = 0.0066$. Shown in the figure are results obtained with the method based on Heun functions described in sect.~\ref{sec:modes2}, combined with results obtained by direct numerical integration of the radial equation, which is the method employed in \cite{Zilberman:2019buh} for the RN case. The consideration of the results obtained by direct integration of the radial equation not only provides a consistency check for the results obtained with the Heun function method, but is at present also necessary as the search for a solution of \eqref{eq:Transcendental}, i.e.\ for the critical exponent $\nu$, becomes difficult for large frequencies $\omega$ for which the asymptotic expansion \eqref{eq:nuAsymptotically} does not provide a useful estimate for the starting point of a numerical search algorithm.\footnote{However, we do not doubt that with more effort the range of applicability of the Heun function method can be extended to greater frequencies.}
On the other hand, the Heun method seems to have advantages for $\omega \to 0$ where some scattering coefficients diverge.

\begin{figure}
\centering
\includegraphics[scale=0.8]{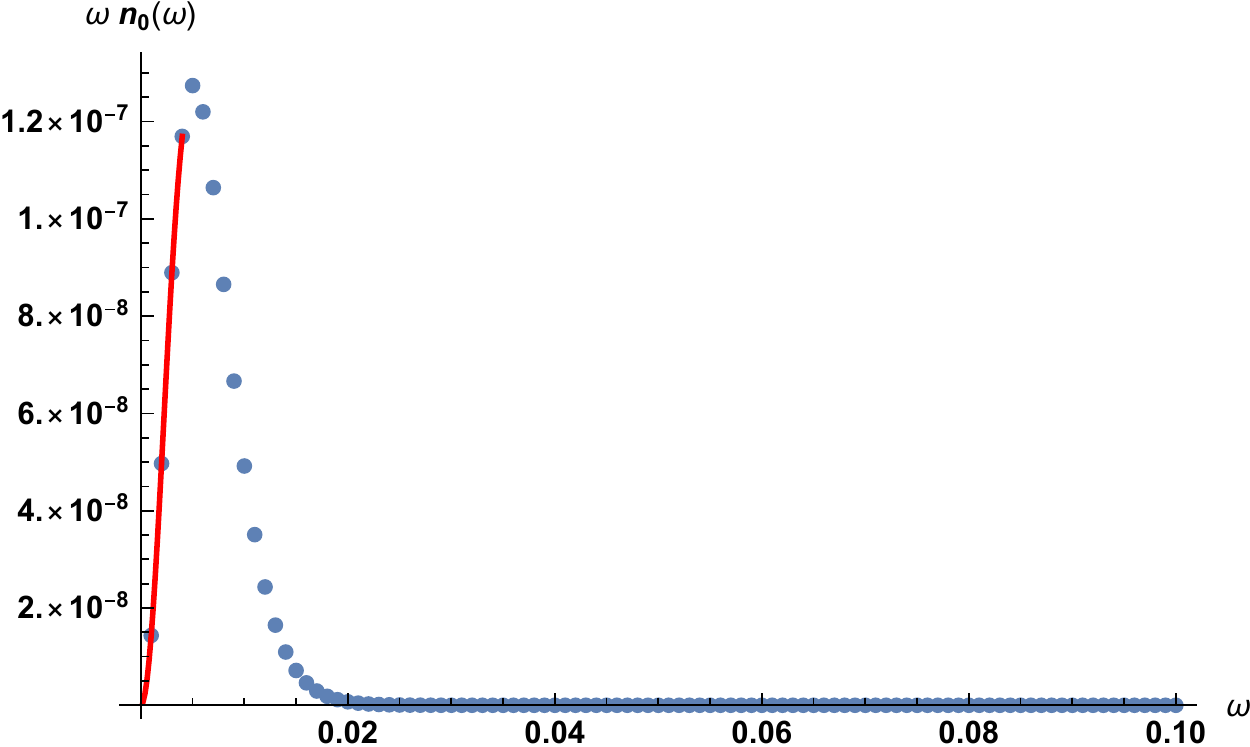}
\caption{Numerical results for $\omega n_0(\omega)$ for the parameters $r_\co = 100$, $r_\ev = 2$, $r_\ca = 1.95$. In red the results from the Heun function method, in blue results obtained by direct numerical integration of the radial equation.}
\label{fig:L0}
\end{figure}

At any rate, clearly $\omega n_{0}(\omega)$ is not only non-vanishing, but also has a non-vanishing integral, answering the first of the above questions. As for the second question, one finds that $\omega n_{1}(\omega)$ is already of order $10^{-15}$, i.e.\ suppressed by eight orders of magnitude. The contributions of higher angular momenta are already drowned in numerical noise. The same behavior, i.e.\ non-vanishing integrands $\omega n_\ell(\omega)$ with non-vanishing integrals and a rapid decrease in angular momentum were also found for the other parameters $r_X$ at which we numerically evaluated $\omega n_\ell (\omega)$ (both in the near-extremal and the ``normal'' regime $\kappa_\ca > \kappa_\co$). We thus conclude that the coefficient $C$ in \eqref{eq:MainResult} is indeed generically non-vanishing.

\section{Comparison with the BTZ black hole}
\label{sec:BTZ}

In \cite{Dias:2019ery} Dias et al. have studied the stress tensor of a scalar field obeying $(\square - \mu^2)\Phi=0$ in the 3-dimensional BTZ black hole spacetime \cite{BTZ92, BTZ93}. In local coordinates it has the metric 
\beq
  g = - f \ud t^2 + f^{-1} \ud r^2 + r^2 ( \ud \phi - \Omega \ud t )^2,
\eeq
with
\begin{align}
 f & = \frac{(r^2 - r_+^2)(r^2 - r_-^2)}{r^2}, &
 \Omega & = \frac{r_+ r_-}{r^2},
\end{align}
and Penrose diagram as in Figure~\ref{fig:BTZ}. (For simplicity, we have set the AdS radius $L = 1$; we refer to \cite{Dias:2019ery} for details on how to construct the diagram.) Even though the BTZ black hole is a solution with a negative cosmological constant, these authors showed that the regularity of the classical stress tensor $T_{\mu\nu}$ is governed by a parameter $\beta$ similar to 
\eqref{eq:betadef}. However, in contrast with RNdS, they showed that $\alpha$ is not the spectral gap \eqref{eq:aldef} of all quasi-normal modes in region I, but only of 
the ``counter-rotating'' ones. As a consequence, it can be seen \cite{Dias:2019ery} that $\beta \to \infty$ in the BTZ case as the spacetime approaches extremality $r_- \to r_+$. 
Hence, classically, the stress tensor becomes as regular as we wish across the Cauchy horizon, implying an even stronger violation of sCC than in the case of RNdS at the classical level. 
The authors \cite{Dias:2019ery} have also considered the expected quantum stress tensor in the Hartle-Hawking state and have presented arguments that it similarly becomes arbitrarily regular at the Cauchy horizon  when we approach extremality, in marked contrast to our results in RNdS. 

\begin{figure}
\centering
\begin{tikzpicture}[scale=1.0]
 \draw (0,0) -- (-2,2) node[midway, below, sloped]{$\EH^-$};
 \draw (-2,2) -- (-4,4) node[midway, below, sloped]{$\EH^L$};
 \draw[double] (0,0) -- (0,4) node[midway, right]{$I_R$};
 \draw (-2,2) -- (0,4) node[midway, above, sloped]{$\EH^R$};
 \draw (0,4) -- (-2,6) node[midway, above, sloped]{$\CH^R$};
 \draw (-4,4) -- (-2,6) node[midway, above, sloped]{$\CH^L$};

 \draw[snake it] (0,4) -- (0,8);
 \draw (-2,6) -- (0,8) node[midway, above, sloped]{$\CH^+$};
 \draw (-1,2) node{${\rm I}$};
 \draw (-2,4) node{${\rm II}$};
 \draw (-1,6) node{${\rm IV}$};

 \draw[fill=white] (-4,4) circle (2pt);
 \draw[fill=white] (0,8) circle (2pt);
 \draw[fill=black] (-2,6) circle (2pt);
 \draw[fill=white] (0,0) circle (2pt);
 \draw[fill=black] (-2,2) circle (2pt);
 \draw[fill=white] (0,4) circle (2pt);
  
\end{tikzpicture}
\caption{Sketch of the relevant part of the BTZ spacetime. Here the double line indicates the asymptotic timelike boundary.}
\label{fig:BTZ}
\end{figure}
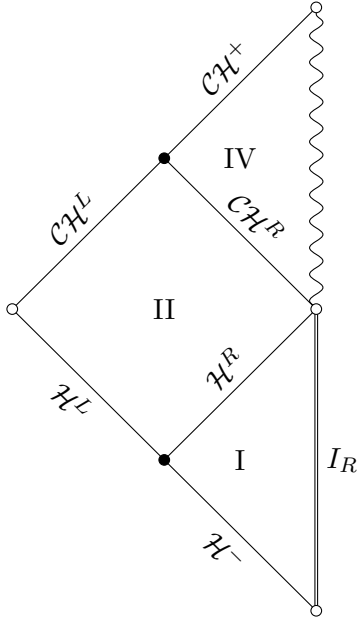

As \cite{Dias:2019ery} have pointed out, these results are in effect due to a cancellation that appears to be special to the BTZ case. Nevertheless, since the behavior of the quantum stress-energy tensor near $CH^R$ for the BTZ black hole found by \cite{Dias:2019ery} is very different from the behavior we have found in the RNdS case, it is instructive to understand how this difference arises from the perspective of the type of mode calculation given in the previous section. Therefore, we will now analyze the BTZ case by the methods of the previous section. The upshot of the analysis that we will now give is that the constant $C$ appearing in \eqref{eq:MainResult} is zero in the BTZ case, consistent with the findings of \cite{Dias:2019ery}.

Following \cite{Dias:2019ery}, we introduce the variable 
\beq
 z = \frac{r^2 - r_-^2}{r_+^2 - r_-^2},
\eeq
and write generic mode solutions as
\beq
 \psi_{\omega m}(t,z,\phi) = e^{- i \omega t} e^{i m \phi} R_{\omega m}(z).
\eeq
Solutions are given in terms of hypergeometric functions involving the parameters
\begin{subequations}
\begin{align}
 a & = \frac{1}{2} \left( \Delta - i \frac{\omega - m \Omega_-}{\kappa_-} - i \frac{\omega - m \Omega_+}{\kappa_+} \right), \\
 b & = \frac{1}{2} \left( 2 - \Delta - i \frac{\omega - m \Omega_-}{\kappa_-} - i \frac{\omega - m \Omega_+}{\kappa_+} \right), \\
 c & = 1 - i \frac{\omega - m \Omega_-}{\kappa_-} .
\end{align}
\end{subequations}
Here $\Delta = 1 + \sqrt{1 + \mu^2}$, and the surface gravities $\kappa_\pm$ and the angular velocities $\Omega_\pm$ of event and Cauchy horizon, which can be expressed in terms of the horizon radii $r_\pm$ as
\begin{align}
 \kappa_\pm & = \frac{r_+^2 - r_-^2}{r_\pm}, & \Omega_\pm & = \frac{r_\mp}{r_\pm} = \frac{\kappa_\pm}{\kappa_\mp}.
\end{align}
The solutions with special behavior near the Cauchy horizon $r_-$ are given by
\begin{subequations}
\begin{align}
 R^{\rout, -}_{\omega m} (z)& = z^{- \frac{1}{2} (1-c)} (1-z)^{\frac{1}{2} (a+b-c)} F(a, b; c; z), \\
 R^{\rin, -}_{\omega m} (z)& = z^{+ \frac{1}{2} (1-c)} (1-z)^{\frac{1}{2} (a+b-c)} F(a-c+1, b-c+1; 2- c; z).
\end{align}
\end{subequations}
The solutions with special behavior near the event horizon $r_+$ are given by
\begin{subequations}
\begin{align}
 R^{\rout, +}_{\omega m} (z)& = z^{- \frac{1}{2} (1-c)} |1-z|^{-\frac{1}{2} (a+b-c)} F(c-b, c-a; -a-b+c+1; 1-z), \\
 R^{\rin, +}_{\omega m} (z)& = z^{- \frac{1}{2} (1-c)} |1-z|^{+\frac{1}{2} (a+b-c)} F(a, b; a+b-c+1; 1-z).
\end{align}
\end{subequations}
Finally, the solution with the fast decay at infinity is given by
\beq
 R^{0, \infty}_{\omega m} (z)= z^{- \frac{1}{2} (2a-c+1)} (z-1)^{\frac{1}{2} (a+b-c)} F(a, a-c+1; a-b+1; 1/z).
\eeq
From the linear transformation formulas for hypergeometric functions, it follows that
\begin{subequations}
\begin{align}
 R^{\rout, +}_{\omega m} & = \mathscr{A}_{\omega m} R^{\rout, -}_{\omega m} + \mathscr{B}_{\omega m} R^{\rin, -}_{\omega m}, \\
 R^{\rin, +}_{\omega m} & = \tilde{\mathscr{A}}_{\omega m} R^{\rin, -}_{\omega m} + \tilde{\mathscr{B}}_{\omega m} R^{\rout, -}_{\omega m}, \\
 \tilde{\mathscr{T}}_{\omega m} R^{0, \infty}_{\omega m} & = R^{\rout, +}_{\omega m} + \tilde{\mathscr{R}}_{\omega m} R^{\rin, +}_{\omega m},
\end{align}
\end{subequations}
where
\begin{subequations}
\begin{align}
 \mathscr A_{\omega m} & = \frac{\Gamma(1-c) \Gamma(1-a-b+c)}{\Gamma(1-a) \Gamma(1-b)}, &
 \mathscr B_{\omega m} & = \frac{\Gamma(c-1) \Gamma(1-a-b+c)}{\Gamma(c-a) \Gamma(c-b)}, \\
 \tilde{\mathscr A}_{\omega m} & = \frac{\Gamma(c-1) \Gamma(a+b-c+1)}{\Gamma(a) \Gamma(b)}, &
 \tilde{\mathscr B}_{\omega m} & = \frac{\Gamma(1-c) \Gamma(a+b-c+1)}{\Gamma(a-c+1) \Gamma(b-c+1)}, \\
 \tilde{\mathscr T}_{\omega m} & = \frac{\Gamma(a) \Gamma(a-c+1)}{\Gamma(a-b+1) \Gamma(a+b-c)}, &
 \tilde{\mathscr R}_{\omega m} & = \frac{\Gamma(a) \Gamma(a-c+1) \Gamma(c-a-b)}{\Gamma(1-b) \Gamma(c-b) \Gamma(a+b-c)}.
\end{align}
\end{subequations}
We also recall the definition 
\beq
 r_* = \frac{1}{2 \kappa_+} \log \left| \frac{r-r_+}{r+r_+} \left( \frac{r+r_-}{r-r_-} \right)^{\Omega_+} \right|
\eeq
of the tortoise coordinate and introduce the corotating angles
\beq
 \phi_\pm = \phi - \Omega_\pm t.
\eeq

As before, we use boundary data to define the relevant states. To define the comparison state we use the modes with boundary data
\beq
 \psi^{\rout}_{\omega m} = \begin{cases} (2 \pi)^{-1} (2 | \omega |)^{-\frac{1}{2}} r_-^{-\frac{1}{2}} e^{i m \phi_-} e^{- i \omega v} & \text{ on } \CH^L, \\ 0 & \text{ on } \CH^R.\end{cases} 
\eeq
One easily checks that
\beq
 \psi^{\rin, -}_{\omega m} = (2 \pi) (2 | \omega - m \Omega_- |)^{\frac{1}{2}} r_-^{\frac{1}{2}} \left( \frac{4 r_-}{\kappa_-} \right)^{i \frac{\omega - m \Omega_-}{2 \kappa_-}} \left( \frac{r_+ - r_-}{r_+ + r_-} \right)^{i \frac{\omega - m \Omega_-}{2 \kappa_+}} \psi^\rout_{\omega-m \Omega_-, m}.
\eeq
To define the Hartle-Hawking (HH) state state (an analog of our Unruh state), we use the modes with boundary data
\begin{subequations}
\begin{align}
 \psi^{\rup, \rI}_{\omega m} & = \begin{cases} (2 \pi)^{-1} (2 | \omega |)^{-\frac{1}{2}} r_+^{-\frac{1}{2}} e^{i m \phi_+} e^{- i \omega u} & \text{ on } \EH^-, \\ \order(z^{-\Delta/2}) & \text{ at } \infty, \end{cases} \\
 \psi^{\rup, \rII}_{\omega m} & = \begin{cases} (2 \pi)^{-1} (2 | \omega |)^{-\frac{1}{2}} r_+^{-\frac{1}{2}} e^{i m \phi_+} e^{- i \omega u} & \text{ on } \EH^L, \\ 0 & \text{ on } \EH^R. \end{cases}
\end{align}
\end{subequations}
In $\rI$ and $\rII$, respectively, these are related to the mode solutions introduced above by
\begin{subequations}
\begin{align}
 \psi^{\rout, +}_{\omega m} + \tilde{\mathscr{R}}_{\omega m} \psi^{\rin, +}_{\omega m} & = (2 \pi) (2 | \omega - m \Omega_+ |)^{\frac{1}{2}} r_+^{\frac{1}{2}} \left( \frac{4 r_+}{\kappa_+} \right)^{+ i \frac{\omega - m \Omega_+}{2 \kappa_+}} \left( \frac{r_+ - r_-}{r_+ + r_-} \right)^{+ i \frac{\omega - m \Omega_+}{2 \kappa_-}} \psi^{\rup, \rI}_{\omega - m \Omega_+, m}, \\
 \psi^{\rout, +}_{\omega m} & = (2 \pi) (2 | \omega - m \Omega_+ |)^{\frac{1}{2}} r_+^{\frac{1}{2}} \left( \frac{4 r_+}{\kappa_+} \right)^{+ i \frac{\omega - m \Omega_+}{2 \kappa_+}} \left( \frac{r_+ - r_-}{r_+ + r_-} \right)^{+ i \frac{\omega - m \Omega_+}{2 \kappa_-}} \psi^{\rup, \rII}_{\omega - m \Omega_+, m}.
\end{align}
\end{subequations}
Similar to \cite{Lanir:2017oia}, and analogously to \eqref{eq:G_MU}, the symmetrized two-point function in the Hartle-Hawking state can be expressed as
\begin{multline}
 \langle \{ \Phi(x_1), \Phi(x_2) \} \rangle_{\rHH} = \sum_{m=-\infty}^\infty \int_{-\infty}^\infty \ud \omega \left[  \frac{\sign \omega}{\sinh \frac{\pi \omega}{\kappa_+}} \Re \left\{ \psi^{\rup, \rI}_{\omega, m}(x_1), \overline{\psi^{\rup, \rII}_{\omega, m}}(x_2) \right\}   \right. \\ \left. + \frac{\sign \omega}{1- e^{- \frac{2\pi \omega}{\kappa_+}}} \left( \left\{ \psi^{\rup, 
 \rI}_{\omega, m}(x_1), \overline{\psi^{\rup, \rI}_{\omega, m}}(x_2) \right\} + \left\{ \psi^{\rup, \rII}_{-\omega, m}(x_1), \overline{\psi^{\rup, \rII}_{-\omega, m}}(x_2) \right\} \right)\right]
\end{multline}
Restricting to region $\rII$, using the above relations and neglecting the $\psi^{\rout,-}$ modes, which are irrelevant for the restriction to $\CH^L$ that we want to perform in the end, one obtains
\begin{align}
 \langle \{ \Phi(x_1), \Phi(x_2) \} \rangle_{\rHH} & = \frac{r_-}{r_+} \sum_{m=-\infty}^\infty \int_{-\infty}^\infty \ud \omega \frac{| \omega |}{| \omega - m \Omega_+ + m \Omega_- |} \\
 & \times \Bigg[ \frac{\sign (\omega - m \Omega_+ + m \Omega_-) }{1- e^{- \frac{2\pi (\omega - m \Omega_+ + m \Omega_-)}{\kappa_+}}} | \tilde{\mathscr{A}}_{\omega + m \Omega_-, m} |^2 \left\{ \psi^{\rout}_{\omega, m}(x_1), \overline{\psi^{\rout}_{\omega, m}}(x_2) \right\} \nn \\
 & \quad + \frac{\sign (\omega - m \Omega_+ + m \Omega_-) }{1- e^{- \frac{2\pi (\omega - m \Omega_+ + m \Omega_-)}{\kappa_+}}} | \mathscr{B}_{\omega + m \Omega_-, m} |^2 \left\{ \psi^{\rout}_{-\omega, m}(x_1), \overline{\psi^{\rout}_{-\omega, m}}(x_2) \right\} \nn \\
 & \quad + \frac{\sign (\omega - m \Omega_+ + m \Omega_-)}{\sinh \frac{\pi (\omega - m \Omega_+ + m \Omega_-)}{\kappa_+}} \nn \\
 & \quad \qquad \times \Re \left\{ \tilde{\mathscr{R}}_{\omega+m \Omega_-, m} \tilde{\mathscr{A}}_{\omega + m \Omega_-, m} \psi^{\rout}_{\omega, m}(x_1), \overline{\mathscr{B}_{\omega+m \Omega_-, m}} \overline{\psi^{\rout}_{\omega, m}}(x_2) \right\} \Bigg]. \nn
\end{align}
For the difference of the Hartle-Hawking and the comparison state, differentiated $n$ times \wrt $v$ and restricted to $\CH^L$, one thus obtains formally, after symmetrization under the simultaneous transformation $\omega \to - \omega$, $m \to - m$,
\beq
\bigg[ \langle \del_v^n \Phi(x_1) \del_v\Phi(x_2) \rangle_{\rHH} - \langle \del_v^n \Phi(x_1) \del_v^{n} \Phi(x_2) \rangle_\rF \bigg]_{
x_1 \to x_2 \to \CH^L}
= \frac{1}{8 \pi^2 r_-^2} \sum_{m = - \infty}^\infty \int_0^\infty \ud \omega \ \omega^{2n-1} n_{m}(\omega),
\eeq
where 
\begin{align}
\label{eq:nbtz}
 n_{m}(\omega) & = \frac{r_-}{r_+} \frac{\omega}{\omega - m \Omega_+ + m \Omega_-} \left[ \coth \frac{\pi (\omega - m \Omega_+ + m \Omega_-)}{\kappa_+} \left( | \tilde{\mathscr{A}}_{\omega + m \Omega_-, m} |^2 + | \mathscr{B}_{\omega + m \Omega_-, m} |^2 \right) \right. \\
 & \left. \quad + 2 \csch \frac{\pi (\omega - m \Omega_+ + m \Omega_-)}{\kappa_+} \Re \left( \tilde{\mathscr{R}}_{\omega+m \Omega_-, m} \tilde{\mathscr{A}}_{\omega + m \Omega_-, m} \overline{\mathscr{B}_{\omega+m \Omega_-, m}} \right) \right] - \coth \frac{\pi \omega}{\kappa_-}. \nn
\end{align}

Equation \eqref{eq:nbtz}  is the analog of \eqref{eq:del_V_Phi_ell_m_del_V_Phi_ell_m}. In the present case, however, one can use standard identities for products of $\Gamma$ functions and trigonometric identities to show that $n_{m}(\omega)$ identically vanishes! This means in particular that $\langle T_{vv} \rangle_{\rHH}-\langle T_{vv} \rangle_{\rF}
\propto  \sum_{m} \int \ud \omega \ \omega n_{m}(\omega)$ vanishes 
at $\CH^L$, and since both states are stationary, i.e. invariant under the flow of $\partial_t$, the same must be true at $\CH^R$. Hence, the constant $C$ appearing in \eqref{eq:MainResult} is zero in the BTZ case, whereas we have presented evidence that $C \neq 0$ generically in the case of RNdS.
Thus, our findings are mathematically consistent with those of \cite{Dias:2019ery}. We believe that the vanishing of $n_{m}(\omega)$ is a very special property of the BTZ black hole spacetime.

\section{Conclusions}
\label{sec:Conclusion}

In this work, we have analyzed the expected stress tensor of a real, linear Klein-Gordon quantum field near the Cauchy horizon inside the RNdS black hole. 
We have presented arguments that, due entirely to quantum effects, the $VV$-component of the stress tensor diverges as $V^{-2}$ near the Cauchy horizon (defined by $V=0$)
with a coefficient that does not depend on the initial state, provided only that the state is regular (Hadamard) near an initial Cauchy surface extending beyond the cosmological horizon. Numerical 
evidence indicates that this coefficient is not zero at least for the black hole parameter values which we have considered. Since the coefficient would at any rate be expected to be basically an analytic function of the black hole parameters, it should be zero only on a set of measure zero of finely tuned parameters.   

Although we have studied in this paper the case of RNdS spacetime, our methods should apply with relatively minor modifications to the the Kerr-Newman-dS black holes, and we strongly expect the same conclusions. More generally, we expect the behavior \eqref{eq:MainResult} to hold generically in eternal black hole spacetime whose Cauchy horizon is a bifurcate Killing horizon.

Our analysis does not preclude the existence of spacetimes for which $C = 0$ and for which the behavior of the quantum stress-energy is less singular than \eqref{eq:MainResult}. Examples of such spacetimes are the 2 dimensional RNdS spacetime with $\kappa_\co = \kappa_\ca$ and the 3 dimensional BTZ black hole spacetime analyzed by \cite{Dias:2019ery}. 
Nevertheless, we shall argue elsewhere that even for such spacetimes in which the quantum stress-energy is not (badly) singular as one approaches the Cauchy horizon, a Hadamard state can not be extended beyond the Cauchy horizon, which typically manifests itself in the divergence of some local Wick power at the horizon. 
Thus, it appears that quantum field effects enforce strong cosmic censorship---or, at least, move it to a regime where full quantum gravity will be needed to determine to what extent it holds.

\medskip
{\bf Acknowledgements:} S.H.\ is grateful to the Max-Planck Society for supporting the collaboration between MPI-MiS and Leipzig U., grant Proj.~Bez.\ M.FE.A.MATN0003. The research of R.M.W. was supported by NSF grant PHY18-04216 to the University of Chicago. S.H.\ thanks A. Ishibashi and G.~t'Hooft for discussions concerning black hole interiors and P.~Zimmerman for discussions concerning Heun's equation and F.~Otto. J.Z.\ thanks J.~Derezinski for discussions on scattering problems. We thank Tom Enders (MPI-MiS) for assistance with figures.


\end{document}